\journalname{International Journal of Computer Vision~~~}
\newcommand{\cP}{\mathcal P}
\newcommand{\cB}{\mathcal B}
\newcommand{\cH}{\mathcal H}
\newcommand{\cW}{\mathcal W}
\newcommand{\cC}{\mathcal C}
\newcommand{\cU}{\mathcal U}
\newcommand{\cA}{\mathcal A}
\newcommand{\cM}{\mathcal M}
\newcommand{\cF}{\mathcal F}
\newcommand{\cL}{\mathcal L}
\newcommand{\cK}{\mathcal K}
\newcommand{\cT}{\mathcal T}
\newcommand{\cO}{\mathcal O}
\newcommand{\cD}{\mathcal D}
\newcommand{\cE}{\mathcal E}
\newcommand{\bR}{\mathbb R}
\newcommand{\bS}{\mathbb S}
\newcommand{\0}{ \mathbf 0 }
\newcommand{\Id}{\mathbf I_d}
\newcommand{\vt}{\mathbf{v}_\theta}
\newcommand{\x}{\mathbf x}
\newcommand{\y}{\mathbf y}
\newcommand{\p}{\mathbf p}
\newcommand{\q}{\mathbf q}
\newcommand{\s}{\mathbf s}
\newcommand{\g}{\mathbf g}
\newcommand{\fv}{\mathbf v}
\newcommand{\fu}{\mathbf u}
\newcommand{\fw}{\mathbf w}
\newcommand{\bx}{\bar{\mathbf x}}
\newcommand{\bz}{\bar{\mathbf z}}
\newcommand{\bp}{\bar{\mathbf p}}
\newcommand{\bq}{\bar{\mathbf q}}
\newcommand{\bs}{\bar{\mathbf s}}
\newcommand{\by}{\bar{\mathbf y}}
\newcommand{\ba}{\bar{\mathbf a}}
\newcommand{\bc}{\bar{\mathbf c}}
\newcommand{\bu}{\bar{\mathbf u}}
\newcommand{\bv}{\bar{\mathbf v}}
\newcommand{\bw}{\bar{\mathbf w}}
\DeclareMathOperator\interp{I}
\newtheorem{defi}{Definition}[section]
\newtheorem{coro}{Corollary}[section]
\newtheorem{lema}{Lemma}[section]
\begin{document}

\title{Global Minimum for a Finsler Elastica Minimal Path Approach
}


\author{Da Chen \and Jean-Marie Mirebeau \and Laurent D. Cohen
}


\institute{Da~Chen\and Laurent~D.~Cohen \at
              University Paris Dauphine, PSL Research University\\
              CNRS, UMR 7534, CEREMADE, 75016 Paris, France\\
               \email{chenda@ceremade.dauphine.fr}\\
              \email{cohen@ceremade.dauphine.fr}\\
           \and
           Jean-Marie Mirebeau\\
           Laboratoire de math\'ematiques d'Orsay, Universit\'e Paris-Sud, CNRS,\\
           Universit\'e Paris-Saclay, 91405 Orsay, France\\
           \email{jean-marie.mirebeau@math.u-psud.fr}\\
}

\date{Received: 19-December-2015/ Accepted: 17-November-2016}

\maketitle

\begin{abstract}
In this paper, we propose a novel curvature penalized minimal path model via an orientation-lifted Finsler metric and the Euler elastica curve. The original minimal path model computes the globally minimal geodesic by solving an Eikonal partial differential equation (PDE). Essentially, this first-order model is unable to penalize curvature which is related to the path rigidity property in the classical active contour models. 
To solve this problem, we present an Eikonal PDE-based Finsler elastica minimal path approach to address the curvature-penalized geodesic energy minimization problem.  We were successful at adding the curvature penalization to the classical geodesic energy \citep{caselles1997geodesic,cohen1997global}. The basic idea of this work is to interpret the  Euler elastica bending energy via a novel  Finsler elastica metric  that embeds a curvature penalty.  This metric is non-Riemannian, anisotropic and asymmetric,  and is defined over an orientation-lifted space by adding to the image domain the orientation as an extra space dimension. Based on this orientation lifting,  the proposed minimal path model can benefit from both the curvature and orientation of  the paths. Thanks to the fast marching method, the global minimum of the curvature-penalized geodesic energy can be computed efficiently.

We introduce two anisotropic image data-driven speed functions  that are computed by steerable filters.  Based on these orientation-dependent speed functions, we can apply the proposed Finsler elastica minimal path model to the applications of  closed contour detection, perceptual grouping  and tubular structure extraction. Numerical experiments on both synthetic and real images  show that these applications of the proposed  model indeed obtain promising results.

\keywords{Minimal Path  \and Eikonal Equation \and Curvature Penalty \and Anisotropic Fast Marching Method \and Image Segmentation  \and Tubular Structure Extraction}
\end{abstract}

\section{Introduction}
\label{sec:intro}
Snakes or active contour models have been studied considerably, and used for object segmentation and feature extraction for almost  three decades, since the pioneering work of the snakes model proposed by \citet{kass1988snakes}. A snake is a parametrized curve $\Gamma$ (locally) that minimizes the energy:
\begin{equation*}
E(\Gamma)=\int_0^1 \left(w_1\,\|\Gamma^\prime(t)\|^2 +w_2\,\|\Gamma''(t)\|^2+ P\big(\Gamma(t)\big)\right)dt
\end{equation*}
with appropriate boundary conditions at the endpoints $t = 0$ and $t=1$. $\Gamma^\prime$ and $\Gamma''$ are the first- and second-order derivatives of the curve $\Gamma$ respectively. The positive constants  $w_1$ and $w_2$ relate to the elasticity  and rigidity of the curve $\Gamma$ and, hence, weight its internal force. This approach models contours as  curves $\Gamma$ locally minimizing an objective energy functional $E$ that consists of an internal and an external force. The internal force terms depend on the first- and second-order derivatives of the curves (snakes), and, respectively, account for a prior of small length and of low curvature of the contours. The external force is derived from a potential function $P$,  which depends on image features such as the  gradient magnitude, and it is designed to attract the active contours or snakes to the image features of interest such as object boundaries.

The drawbacks of the classical active contours or snakes model~\citep{kass1988snakes} are its sensitivity to initialization, the difficulty of handling topological changes, and the difficulty of minimizing the strongly non-convex path energy as discussed by~\citet{cohen1997global}.  Regarding initialization, the  active contours model requires an initial guess that is close to the desired image features, and preferably enclosing them because energy minimization tends to shorten the snakes. The introduction of an expanding balloon force allows the model to be less demanding on the initial curve given inside the objective region \citep{cohen1991active}.  The issue of topology changes leads, on the other hand, to the development of active contour methods, which represent object boundaries as the zero level set of the solution to a PDE \citep{osher1988fronts, caselles1993geometric, malladi1994evolutionary,caselles1997geodesic,yezzi1997geometric}. 

The difficulty of minimizing the non-convex snakes energy \citep{kass1988snakes} leads to important practical problems because the curve optimization procedure often becomes stuck at a local minimum of the energy functional, which makes the results sensitive to curve initialization and image noise. This limitation is still the case for the level set approach on  geodesic active contours~\citep{malladi1995shape,caselles1997geodesic}. To address this local minimum sensitivity issue,~\citet{cohen1997global} proposed an Eikonal PDE-based minimal path model to find the global minimum of the geodesic active contours energy~\citep{caselles1997geodesic},  in which the penalty associated to the second-order derivative of the curve was removed from the snakes energy.  Thanks to this simplification, a fast, reliable and globally optimal numerical method allows to find the energy minimizing curve with prescribed endpoints, namely, the fast marching method \citep{sethian1999fast}, which is based on the formalism of viscosity solutions to Eikonal PDE.  These mathematical and algorithmic guarantees of Cohen and Kimmel's minimal path model~\citep{cohen1997global} have important practical consequences, leading to various approaches for image analysis and medical imaging~\citep{peyre2010geodesic,cohen2001multiple}.

In the basic formulations of the minimal paths-based interactive image segmentation models \citep{appleton2005globally,appia2011active,mille2014combination}, the common proposal is that the object boundaries can be delineated  by a set of minimal paths constrained by  user-provided points. 
In~\citep{li2007vessels,benmansour2011tubular}, vessels  were extracted under the form of minimal paths over the radius-lifted space. Therefore, each extracted minimal path consists of both the centerline positions and the corresponding thickness values of a vessel.

To reduce the user intervention, \citet{benmansour2009fast} proposed a  growing minimal path model for object segmentation with unknown endpoints. This model can recursively detect keypoints, each of which can be used as a new source point for the fast marching front propagation. Thus this model requires only one user-provided point to start the keypoints detection procedure. \citet{kaul2012detecting} applied the growing minimal path model to detect complicated curves with arbitrary topologies and  developed  criteria to stop the growth of the minimal paths.  \citet{rouchdy2013geodesic} proposed a geodesic voting model for  vessel tree extraction by a voting score map that is constructed from a set of geodesics with  a common source point. 

Recently, improvements of the minimal path model have been devoted to extend the isotropic Riemannian metrics  to the more general anisotropic Riemannian metrics by taking into account the orientation of the curves~\citep{bougleux2008anisotropic,jbabdi2008accurate,benmansour2011tubular}. Such orientation enhancement can solve some shortcuts problems suffered by the isotropic minimal path models~\citep{cohen1997global,li2007vessels}. \citet{kimmel2001optimal} designed an orientation-lifted Riemannian metric for the application of path planning,  providing an alternative way to take advantage of the orientation information. This  isotropic orientation-lifted Riemannian metric~\citep{kimmel2001optimal} was built over a higher dimensional domain by adding  an extra  orientation space to the image domain.

\citet{bekkers2015pde} considered a data-driven extension of the  sub-Riemannian metric on SE(2), which shows that the sub-Riemannian structure outperforms the isotropic Riemannian structures on SE(2) in  retinal vessel tracking.  The numerical solver used by~\citet{bekkers2015pde} is based on a PDE approach with an upwind discretization and iterative evolution scheme, which requires expensive computation time.
To solve this problem, \citet{sanguinetti2015sub} used the  fast marching method~\citep{mirebeau2014anisotropic}  as the Eikonal solver  to track the sub-Riemannian geodesics.
The sub-Riemannian geodesic model~\citep{Petitot:2003im} reintroduced curvature penalization to the geodesic energy, similar to the Euler elastica bending energy~\citep{nitzberg19902} considered in this paper, yet it differs in two ways: firstly, the Euler elastica bending energy involves the squared curvature, a stronger penalization than Petitot's sub-Riemannian geodesic energy which is roughly linear in the curvature. Secondly, minimal geodesics for Petitot's sub-Riemannian model occasionally feature \emph{cusps}, which sometimes may not be desirable for the applications of interest. In contrast, Euler elastica curves can  avoid such cusps.

\begin{figure*}[!t]
\centering
\includegraphics[width=17cm]{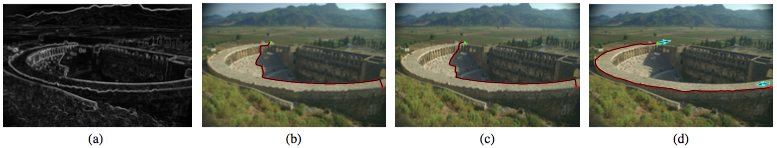}\,
\caption{(\textbf{a}) Edge saliency map. (\textbf{b}) Minimal path with isotropic Riemannian metric. (\textbf{c}) Minimal path with anisotropic Riemannian metric. (\textbf{d}) Minimal path with the proposed Finsler elastica metric. The red curves are the extracted minimal paths with the initial source positions and end positions indicated by the red  and green dots respectively. The arrows in (\textbf{d}) indicate the tangents.} 
\label{fig:Example}
\end{figure*}

\citet{schoenemann2012linear} proposed a model to address  the problems of using curvature regularization for region-based image segmentation by a graph-based combinational optimization method. This curvature regularization model can find  a solution that corresponds to the  globally optimal segmentation which is initialization-independent, and has proven to  obtain promising segmentation and inpainting results  especially for objects with long and thin structures.
\citet{ulen2015shortest} proposed a curvature and torsion regularized shortest path model for tubular structure segmentation, where the  curvature and the torsion were approximately computed by B-Splines. The solution of their energy functional including  curvature and torsion penalization terms can be efficiently obtained by using line graphs and Dijkstra's algorithm~\citep{dijkstra1959note}.

\citet{tai2011fast} presented an efficient method to solve  the minimization problem of the Euler elastica  energy, and demonstrated that this fast method can be applied  to  image denosing, inpainting, and zooming. Other approaches of interest using the curvature penalization include the  image segmentation models such as~\citep{elzehiry2010fast,schoenemann2011elasticratio,zhu2013image} and the image  inpainting model introduced by~\citet{shen2003euler}.

\subsection{Motivation}
In contrast with the classical snakes energy~\citep{kass1988snakes}, Eikonal PDE-based minimal path methods are first-order models, which do not penalize the second-order derivative of a curve, i.e., the curvature, and therefore do not enforce the smoothness  of the geodesic, leading sometimes to undesired result as shown in Fig.~\ref{fig:Example}, in which we would like to extract a boundary that is as smooth as possible between the two given points indicated by red and green dots.  In Fig.~\ref{fig:Example}a we show the edge saliency map. Figs.~\ref{fig:Example}b and \ref{fig:Example}c are the minimal paths obtained by using the isotropic Riemannian metric \citep{cohen1997global} and the anisotropic Riemannian metric \citep{bougleux2008anisotropic}, respectively,   both of which are unable to  find expected smooth boundaries and suffer from the shortcut problem due to the lack of curvature penalization in these metrics. In contrast,  the minimal path model presented in this paper reintroduces the curvature, in the form of weighted Euler elastica curves as studied in \citep{nitzberg19902, mumford1994elastica}. Therefore, the geodesics extracted by the proposed metric can catch the smooth object boundary, as shown in Fig.~\ref{fig:Example}d,  with arrows indicating the corresponding tangents at the given positions.

\subsection{Contributions}
\label{subsec:Contribution}
The contribution of this paper is three fold:
\begin{enumerate}
	\item Firstly, we propose a novel globally minimized minimal path model, namely, the Finsler elastica minimal path model,  with curvature penalization and  Finsler metric. We establish the connection between the Euler elastica curves and the minimal paths with respect to a Finsler elastica metric.
With an adequate numerical implementation, leveraging orientation lifting, asymmetric Finsler metrics and anisotropic fast marching method, the proposed model still allows to find the globally minimizing curves with prescribed points and tangents.  

    \item As a second contribution,  we present the mathematical convergence analysis of the proposed Finsler elastica metrics, and of the associated Finsler elastica minimal paths.
Furthermore, we discuss numerical options for geodesic distance and minimal paths computation, and settle for an adaptation of the fast marching method proposed by \citet{mirebeau2014efficient}.
     
	\item Finally, we  provide two types of  image data-driven speed functions that are computed  by steerable filters. These speed functions  are therefore orientation dependent, by which we apply the proposed Finsler elastica minimal path model to interactive closed contour detection, perceptual grouping and tubular structure extraction.

\end{enumerate}
More contributions have been added regarding the original conference paper presented in \citep{chen2015global},  such as the applications of interactive closed contour detection for image segmentation and perceptual grouping, respectively.

\subsection{Paper Outline}
In Section~\ref{sec:background} we briefly introduce the existing  minimal path models, the concept of Finsler metric, and algorithms for distance computation and path extraction. The relationship between the Euler elastica bending energy and the Finsler elasica metric is analyzed in Section~\ref{sec:FinslerMinimaPathsModel}. In Section \ref{sec:Velocity} we introduce two data-driven speed functions which are dependent of orientations. The  applications  of the Finsler elastica minimal path model are presented in Section \ref{sec:ContourDetection}.  Experiments and  Conclusion are presented in Sections \ref{sec:Experiments} and \ref{sec:Conclusion}, respectively.

\section{Background on the Minimal Path Models}
\label{sec:background}
\subsection{Cohen-Kimmel Minimal Path Model}
\label{subsec:TypicalMetrics}
Let $\Omega\subset\bR^n$ ($n=2, 3$) denote the image domain.
The classical Cohen-Kimmel minimal path model~\citep{cohen1997global} was designed to  find the global minimum  of  the geodesic energy which measures the  length of Lipschitz  paths $\gamma:[0,1]\to\Omega$ as follows
\begin{equation}
\label{eq:geodesicCohen}
\cL^{{\rm I}}(\gamma)=\int_0^1\Big(w+P\big(\gamma(t)\big)\Big)\|\gamma^\prime(t)\|\,dt,
\end{equation}
where $\|\cdot\|$ denotes the canonical Euclidean norm. $w$ is a positive constant and $P:\Omega\to\bR^+$ is a potential function which  usually depends on the image gradient magnitudes  or intensities as suggested by~\citet{cohen1997global}.

A geodesic $\cC_{\s,\x}$, joining the source point $\s$ to a point $\x\in\Omega$,  is a path that globally minimizes the length~\eqref{eq:geodesicCohen}: 
 \begin{equation*}
  \cC_{\s,\x}=\arg\min_{\gamma\in\cA_{\s,\x}}\{\cL^{\rm I}(\gamma)\},	
 \end{equation*}
where $\cA_{\s,\x}$ is the collection of all Lipschitz  paths $\gamma$ with  $\gamma(0)=\s$ and $\gamma(1)=\x$.

The minimal action map $\cU:\Omega\to\bR^+$ associated to a source point $\s$, is the global minimum  of the length~\eqref{eq:geodesicCohen} among all paths involved in the collection $\cA_{\s,\x}$:
\begin{equation}
\label{eq:geodesicDistance_IR}
\cU(\x):=\inf\{\cL^{\rm I}(\gamma);\,\gamma\in\cA_{\s,\x}\},\quad \forall\x\in\Omega,
\end{equation}

The Cohen-Kimmel minimal path model considers  an isotropic Riemannian metric,  which is defined by
\begin{equation}
\label{eq:CKMetric}
\cF^{\rm I}(\x,\fu):=\sqrt{\langle\fu,\cM_{\rm I}(\x)\,\fu\rangle},\quad \forall \x\in\Omega,\,\forall \fu\in\bR^n, 
\end{equation}
where $\langle\cdot,\cdot\rangle$ denotes the scalar product over $\bR^n$ and  $\cM_{\rm I}$ is a  symmetric positive definite tensor field which is proportional to the identity matrix $\Id$:
\begin{equation}
\label{eq:CKTensor}
\cM_{\rm I}(\x)=\big(w+P(\x)\big)^2\Id,	\quad \forall \x \in\Omega.
\end{equation}
The minimal action map $\cU$ defined in~\eqref{eq:geodesicDistance_IR} satisfies the following Eikonal PDE~\citep{cohen1997global}:
\begin{equation}
\label{eq:EikonalPDE_IR}
\begin{cases}
\|\nabla \cU(\x)\|=w+P(\x),\quad \forall \x\in\Omega\backslash\{\s\},\\
\cU(\s)=0,	
\end{cases}	
\end{equation}
where $\nabla \cU$ is the Euclidean gradient of $\cU$ with respect to the position in the domain. 

A geodesic $\hat\cC_{\x,\s}$ parameterized by its arc-length can be obtained through the gradient descent ordinary differential equation (ODE) till reaching the point $\s$:
\begin{equation}
\label{eq:GeodesicODE_IR}
\begin{cases}
\hat\cC_{\x,\s}^\prime(t)\propto-\nabla\cU(\hat\cC_{\x,\s}(t)),\\	
\hat\cC_{\x,\s}(0)=\x,
\end{cases}
\end{equation}
where $\propto$ denotes positive collinearity. Then the geodesic $\cC_{\s,\x}$ is obtained by reversing the path $\hat\cC_{\x,\s}$ with $\cC_{\s,\x}(0)=\s$ and $\cC_{\s,\x}(1)=\x$.

\subsection{Anisotropic Riemannian Metric Extension}
\label{subsec:AR}
\citet{bougleux2008anisotropic} and~\citet{jbabdi2008accurate}   extended  the  isotropic Riemannian metric~\eqref{eq:CKMetric} to the anisotropic Riemannian case  invoking a symmetric positive definite tensor field
\begin{equation}
\label{eq:ARTensor}
\cM_{\rm A}(\x)=\sum_{i=1}^n P_i(\x)\,\fv_i(\x)\,\fv^{\rm T}_i(\x),\quad \forall\,\x\in\Omega,	
\end{equation} 
where $n$ is the dimension of the domain $\Omega$ and the vector fields $\fv_i$ are related to the image data. $P_i:\Omega\to\bR^+$  are the image data-driven  functions associated to $\fv_i$.

Based on the tensor field $\cM_{\rm A}$~\eqref{eq:ARTensor}, the anisotropic Riemannian metric $\cF^{\rm A}$ can be expressed as:
\begin{equation}
\label{eq:ARMetric}
\cF^{\rm A}(\x,\fu):=\sqrt{\langle\fu,\cM_{\rm A}(\x)\,\fu\rangle},\quad \forall \x\in\Omega,\,\forall \fu\in\bR^n.
\end{equation} 
The minimal action map $\cU$ associated to the anisotropic Riemannian metric $\cF^{\rm A} $ satisfies the Eikonal PDE:
\begin{equation}
\label{eq:EikonalPDE_AR}
\begin{cases}
\|\nabla \cU(\x)\|_{\cM_{\rm A}^{-1}(\x)}=1,\quad \forall\,\x\in\Omega\backslash\{\s\},\\
\cU(\s)=0,	
\end{cases}	
\end{equation}
where  we denote that $\|\fu\|_{\cM}=\sqrt{\langle\fu,\cM\,\fu\rangle}$. 

The geodesic $\cC_{\s,\x}$ is obtained by reversing the geodesic $\hat\cC_{\x,\s}$ which is  the solution  to the following  ODE
\begin{equation}
\label{eq:GeodesicOED_AR}
\begin{cases}
\hat\cC^\prime_{\x,\s}(t)\propto -\cM_{\rm A}^{-1}(\hat\cC_{\x,\s}(t))\nabla\cU(\hat\cC_{\x,\s}(t)),\\
\hat\cC_{\x,\s}(0)=\x.	
\end{cases}
\end{equation}

\citet{benmansour2011tubular} proposed an anisotropic radius-lifted Riemannian metic,    as a special case of the metric $\cF^{\rm A}$, to address the problem of extracting both the centerline positions and the thickness values of  a tubular structure simultaneously. 

\subsection{Isotropic orientation-lifted Riemannian Metric Extension}
\label{subsec:IOL}

In order to take into account the local orientation in the image, it is possible to include orientation information in the energy minimization. For this purpose, the image domain $\Omega\subset\bR^2$  can be extended to an orientation-lifted space $\bar\Omega$ by product with an abstract orientation space $\bS^1$~\citep{kimmel2001optimal}, i.e., $\bar\Omega=\Omega\times\bS^1\subset \bR^3$ and the problem is to find a geodesic in the new lifted space $\bar\Omega$. Each point $\bx$  in the orientation-lifted path is thus a pair composed of a point $\x$ in the image domain $\Omega$ and an orientation $\theta$, i.e., $\bx=(\x,\theta)$.

For any point $\bx$ and any vector $\bu=(\fu,\nu)\in\bR^2\times\bR$, where $\nu$ denotes the orientation variation, the isotropic orientation-lifted Riemannian metric $\cF^{\rm IL}$ is expressed by
\begin{equation}
\label{eq:IsotropicLiftedMetric}
\cF^{\rm IL}(\bx,\bu):=\frac{1}{\Phi_{\rm IL}(\bx)}\sqrt{\|\fu\|^2+\rho|\nu|^2},
\end{equation}
where $\Phi_{\rm IL}:\bar\Omega\to\bR^+$ is an orientation-dependent speed function relying on the image data and $\rho>0$ is a constant.

Following  the form of \eqref{eq:CKMetric}, the metric $\cF^{\rm IL}$ defined in~\eqref{eq:IsotropicLiftedMetric} can be written by using a tensor field $\cM_{\rm IL}$
\begin{equation}
\label{eq:Tensor_IOL}
\cM_{\rm IL}(\bx)=\frac{1}{\Phi_{\rm IL}^{2}(\bx)}\,\mathbf D_\rho,
\end{equation}
where $\mathbf D_\rho=\text{diag}(1,1,\rho)$ is a diagonal matrix. Note that $\rho$ can be extended to a positive scalar function that is dependent of the image data. For simplicity, we set $\rho$ to be a positive  constant in this paper, as suggested in \citep{kimmel2001optimal,pechaud2009extraction}.

The curve length of  an orientation-lifted curve $\gamma:=(\Gamma,\theta):[0,1]\to\bar\Omega$ with respect to  the isotropic orientation-lifted Riemannian metric $\cF^{\rm IL}$  can be expressed as
\begin{align}
\label{eq:geodesicIL}
\cL^{\rm IL}(\gamma)&=\int_0^1\cF^{\rm IL}(\gamma(t),\,\gamma^\prime(t))\,dt\nonumber\\
&=\int_0^1\frac{1}{\Phi_{\rm IL}(\gamma(t))}\sqrt{\langle \gamma^\prime(t),\,\cM_{\rm IL}(\gamma(t))\,\gamma^\prime(t)\rangle}\,dt\nonumber\\
&=\int_0^1\frac{1}{\Phi_{\rm IL}(\gamma(t))}\sqrt{\|\Gamma^\prime(t)\|^2+\rho|\theta^\prime(t)|^2}\,dt,
\end{align}
where $\gamma^\prime(t)=(\Gamma^\prime(t),\theta^\prime(t))\in\bR^3$, for all $t\in[0,1]$.

The  idea of orientation lifting was applied  to  tubular structure extraction by~\citet{pechaud2009extraction} with additional radius lifting, typically accounting for the radius  of a  tubular structure in the processed image.  Orientation lifting often improves the results from~\citep{cohen1997global,li2007vessels}, but suffers from the fact that for the curve length of an orientation-lifted curve $\gamma(t)=(\Gamma(t),\theta(t))$, nothing in~\eqref{eq:geodesicIL} constrains the path direction $\Gamma^\prime(t)$ to align with the orientation vector associated to  $\theta(t)$, where $t\in[0,1]$. In other words, this isotropic orientation-lifted Riemannian metric $\cF^{\rm IL}$ cannot penalize the curvature of the geodesic, a point which is addressed in this paper.

\subsection{General  Minimal Path Model and Finsler Metric} 
\label{subsec:GeneralMinimalPath}
The general minimal path problem is posed on a bounded  domain $\Omega\subset\bR^n$ equipped with a  metric $\cF(\x,\fu)$ depending on  positions $\x\in\Omega$ and orientations $\fu\in\bR^n$. This metric $\cF$ defines at each point $\x$ an asymmetric norm
\begin{equation}
\cF_\x(\fu):=\cF(\x,\fu).
\end{equation}
These norms must be positive $\cF_\x(\fu) > 0$ whenever $\fu\neq \0$, $1$-homogeneous, and obey the triangular inequality. However,  in general we allow them to be asymmetric: $\exists\fu\in\bR^n$ such that 
\begin{equation}
\cF_\x(\fu)\neq  \cF_\x(-\fu).
\end{equation}
One can  measure the curve length of a regular curve $\gamma$ with respect to  the metric $\cF$:
\begin{equation}
\label{eq:pathLength}
\cL_\cF(\gamma)=\int_0^1\cF\big(\gamma(t),\,\gamma^\prime(t)\big)	\, dt,
\end{equation}
The minimal action map $\cU(\x)$ is defined by:
\begin{equation}
\label{eq:geodesicDistance}
\cU(\x):=\inf\{\cL_\cF(\gamma);\,\gamma\in\cA_{\s,\x}\},
\end{equation} 
which is the unique viscosity solution to an Eikonal PDE~\citep{lions1982generalized,sethian2003ordered}:
\begin{equation}
\label{eq:EikonalPDE}
\begin{cases}
\cF^{*}_\x\big(\nabla \cU(\x)\big)=1, \forall\x\in\Omega\backslash\{\s\},\\
\cU(\s)=0,
\end{cases}	
\end{equation}
where $\cF_\x^*$ is  the dual norm of $\cF_\x$  defined for all $\fu\in\bR^n$ by
\begin{equation}
\label{eq:dualNorm}
\cF^{*}_\x(\fu):=\sup_{\fv\neq\0}\frac{\langle\fu, \fv\rangle}{\cF_\x(\fv)}.
\end{equation}
Based on the definition of the dual norm in \eqref{eq:dualNorm}, the corresponding optimal direction map $\Psi$ is then obtained by
\begin{equation}
\label{eq:OptimalDirection}
\Psi(\x,\fu):=\arg\max_{\fv\neq \0}\frac{\langle\fu,\fv\rangle}{\cF_{\x}(\fv)},\quad \forall\,\x\in\Omega,\,\forall \fu\in\bR^n.
\end{equation}
Again, the geodesic $\cC_{\s,\x}$ is obtained by reversing the geodesic $\hat\cC_{\x,\s}$ with $\cC_{\s,\x}(0)=\s$ and $\cC_{\s,\x}(1)=\x$, where $\hat\cC_{\x,\s}$ is tracked through the following ODE involving the minimal action map $\cU$ and the optimal direction map $\Psi$
\begin{equation}
\label{eq:geodesicODE}
\begin{cases}
\hat\cC_{\x,\s}^\prime(t)\propto -\Psi\Big(\hat\cC_{\x,\s}(t),\nabla\cU\big(\hat\cC_{\x,\s}(t)\big)\Big),\\
\hat\cC_{\x,\s}(0)=\x.
\end{cases}
\end{equation}
Numerically, the ODE expressed in \eqref{eq:geodesicODE} is solved by using  Heun’s or Runge-Kutta's methods, or more robustly using the numerical method proposed by~\citet{mirebeau2014anisotropic}. 

The metric $\cF$ considered in this paper combines a symmetric part, defined in terms of a symmetric  positive definite tensor field $\cM$, and an asymmetric part involving a vector field $\vec\omega$:
\begin{equation}
\label{eq:generalFinslerMetric}
\cF(\x, \fu):=\sqrt{\langle\fu,\cM(\x)\,\fu\rangle}-\langle\vec\omega(\x),\,\fu\rangle,
\end{equation}
for all $\x\in\Omega$ and any vector $\fu\in\bR^n$.
The fields $\cM$ and $\vec\omega$ should obey the following \emph{smallness} condition to ensure that the Finsler metric $\cF$ is positive: 
\begin{equation}
\label{eq:constraintFinsler}
 \langle\vec\omega(\x), \cM^{-1}(\x)\,\vec\omega(\x)\rangle < 1,\quad\forall\x\in\Omega.
\end{equation}
Equation~\eqref{eq:generalFinslerMetric} defines an anisotropic Finsler metric in general. This is an anisotropic Riemannian metric if the vector field $\vec\omega$ is identically zero, and an isotropic metric if in addition the tensor field $\cM$ is proportional to a diagonal matrix like $\mathbf D_\rho$ in~\eqref{eq:Tensor_IOL}.
Therefore, the general Eikonal PDE~\eqref{eq:EikonalPDE} and the geodesic back tracking ODE~\eqref{eq:geodesicODE} reduce respectively to equations~\eqref{eq:EikonalPDE_IR} and~\eqref{eq:GeodesicODE_IR} for the isotropic case, and respectively to  equations~\eqref{eq:EikonalPDE_AR} and~\eqref{eq:GeodesicOED_AR} for the anisotropic case.

\subsection{Computation of the Minimal Action Map} 
\label{subsec:fastmarching}

In order to estimate the minimal action map $\cU$, presented in~\eqref{eq:geodesicDistance} and \eqref{eq:EikonalPDE},  a discretization grid $Z$ of the image domain $\Omega$ is introduced - or of the extended domain $\bar\Omega$ in the case of an orientation-lifted metric. For each point $\x\in Z$, a small mesh $S(\x)$ of a neighbourhood of $\x$ with vertices in $Z$ is constructed. 
For example,  $S(\x)$ can be the square formed by the four neighbours of  $\x$ in the classical fast marching method~\citep{sethian1999fast} on a regular orthogonal grid used in~\citep{cohen1997global}. In contrast with  Sethian's classical fast marching method which solves the discrete approximation of the Eikonal PDE itself, an approximation of the action map $\cU$, with  the initial source point $\s$, is calculated by solving the  fixed point system~\citep{tsitsiklis1995efficient}:
\begin{equation}
\label{eq:FixedPoint}
	\begin{cases}
		\cU(\x) = \Lambda \cU(\x), \quad \text{for all } \x \in Z\backslash\{\s\},\\
		\cU(\s) = 0, 
	\end{cases}
\end{equation}
where the  involved Hopf-Lax update operator is defined by:
\begin{equation}
\label{eq:Hopf-Lax}
\Lambda \cU(\x):=\min_{\y\in\partial S(\x)}\Big\{\cF(\x,\x-\y)+\interp_{S(\x)}\cU(\y)\Big\},
\end{equation}
where $\interp_{S(\x)}$ denotes the piecewise linear interpolation operator on the mesh $S(\x)$, and $\y$ lies on the boundary of $S(\x)$. 
The equality $\cU(\x) = \Lambda \cU(\x)$, replacing in~\eqref{eq:FixedPoint} the Eikonal PDE: $\cF_\x(\nabla \cU(\x))=1$ of~\eqref{eq:EikonalPDE}, is a discretization of Bellman's optimality principle, which is similar in spirit to the Tsitsiklis approach~\citep{tsitsiklis1995efficient}. 
It reflects the fact that the minimal geodesic $\cC_{\s,\x}$, from $\s$ to $\x$, has to cross the mesh boundary $\partial S(\x)$ at least once at some point $\y$; therefore it is the concatenation of a geodesic $\cC_{\s,\y}$ from $\s$ to $\y$, which length is approximated by piecewise linear interpolation, and a very short geodesic $\cC_{\y,\x}$ from $\y$ to $\x$, approximated by a segment of geodesic curve length $\cF(\x,\x-\y)$. 
The $N$-dimensional fixed point system \eqref{eq:FixedPoint}, with $N=\#Z$ the number of grid points, can be solved by a single pass fast marching method as described in Algorithm~\ref{alg:FM} in Appendix.

The classical fast marching methods~\citep{sethian1999fast,tsitsiklis1995efficient} using the square formed neighbourhood $S$ have difficulty to  deal with the computation of geodesic distance maps with respect to anisotropic metrics, especially when the anisotropy gets large.  An adaptive construction method of such stencils $S$ was introduced in~\citep{mirebeau2014anisotropic} for  anisotropic 3D Riemannian metric, and in \citep{mirebeau2014efficient} for anisotropic 2D Finsler metric, providing that the stencils or mesh $S(\x)$ at each point $\x\in\Omega$ or $\bar\Omega$ satisfies some geometric acuteness property depending on the local metric $\cF(\x,\cdot)$. Such adaptive stencils based fast marching methods lead to breakthrough improvements in terms of computation time and accuracy for strongly anisotropic geodesic metrics.  When the above mentioned geometric properties do not hold, the fast marching method is in principle not applicable, and slower multiple pass methods must be used instead, such as the Adaptive Gauss Siedel Iteration (AGSI) of \citet{Bornemann:2006hc}. The present paper involves a 3D Finsler metric \eqref{eq:metricFinsler_A}, for which we constructed stencils by adapting the 2D Finsler metric construction method proposed by \citet{mirebeau2014efficient}. Although these stencils lack the geometric acuteness condition, we found that the fast marching method still provided good approximations of the paths, while vastly improving computation performance. In Section~\ref{subsec:implementations}, the complexity will be discussed more. 

Note that whenever we mention fast marching method in the next sections, we mean the fast marching method with adaptive stencils proposed by \citet{mirebeau2014efficient}.

\section{Finsler Elastica Minimal Path Model}
\label{sec:FinslerMinimaPathsModel}
In this section, we present the core contribution of this paper: the orientation-lifted Finsler metric embedding curvature penalty term, defined  over the orientation-lifted domain $\bar\Omega=\Omega\times\bS^1\subset\bR^3$, where $\bS^1=[0,2\pi$) denotes the angle space with periodic boundary conditions and $\Omega\subset\bR^2$ denotes the image domain.

\begin{figure*}[t]
\centering
\subfigure[]{\includegraphics[width=8cm]{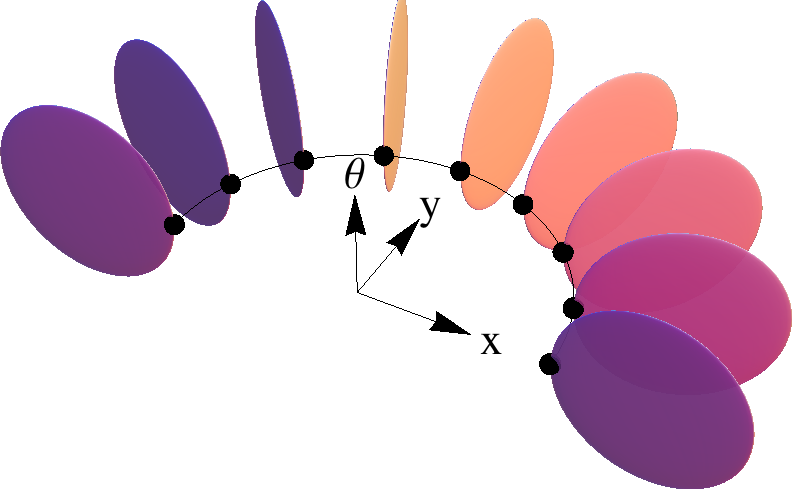}}
\subfigure[]{\includegraphics[width=8cm]{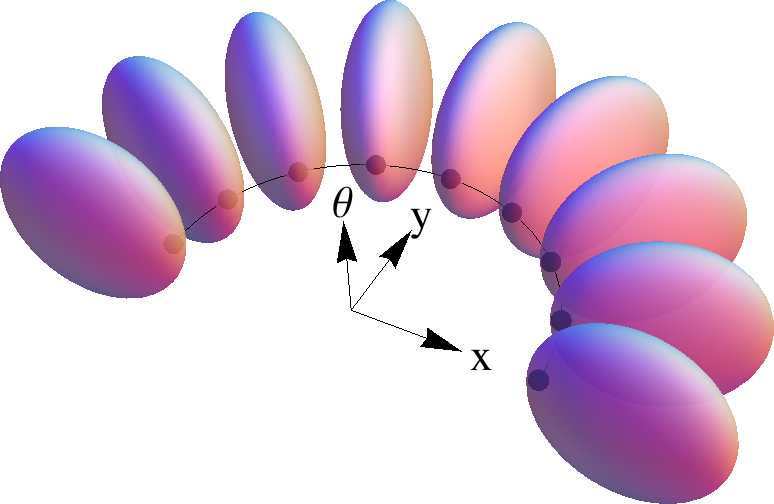}}
\caption{Visualization for the metrics 
$\cF^\infty$ and $\cF^\lambda$ with $\alpha=1$ by Tissot's indicatrix. (\textbf{a}) Tissot's indicatrix for the metric $\cF^\infty$~\eqref{eq:metricFinsler_A} with $\alpha=1$ are flat 2D disks embedded in 3D space, aligned with the direction $\vt$ (several directions $\theta$ are shown). (\textbf{b}) Tissot's indicatrix for the Finsler elastica metrics $\cF^\lambda$ are ellipsoids, which flatten and approximate a limit disk as $\lambda \to \infty$.}
\label{fig:Tissot}
\end{figure*}

\subsection{Geodesic Energy Interpretation of the Euler Elastica Bending Energy  via a Finsler Metric}
\label{sec:GeodesicInterpretation}

The Euler elastica curves were  introduced to  the field of computer vision by~\citet{nitzberg19902} and~\citet{mumford1994elastica}. They minimized the following bending energy:
\begin{equation}
\label{eq:DataBendEnergy}
\cL(\hat\Gamma):=\int_0^L\frac{1}{\Phi_0\big(\hat\Gamma(s)\big)}\big(1+\alpha\kappa^2(s)\big)\,ds,
\end{equation}
where $\hat\Gamma:[0,L]\to\Omega$ is a regular curve with non-vanishing velocity, $\kappa$ is the curvature of curve $\hat\Gamma$,  $L$ is the  Euclidean curve length of $\hat\Gamma$,  and  $s$ is the arc-length.   Parameter $\alpha>0$ is a  constant.  $\Phi_0$ is an image data-driven speed function.

For the sake of simplicity,  we set  $\Phi_0\equiv 1$, yielding the  simplified Euler elastica bending energy
\begin{equation}
\label{eq:EulerElastica}
\ell(\hat\Gamma)=\int_0^L\big(1+\alpha\kappa^2(s)\big)\,ds,
\end{equation}  
where the general case will be studied in Section~\ref{subsec:weightedFinsler}.

The goal of this section is to cast the Euler  elastica bending energy $\ell$~\eqref{eq:EulerElastica} in the form of curve length with respect to a relevant asymmetric Finsler metric.  
We firstly transform the elastica problem  into finding a geodesic in an orientation-lifted space. For this purpose,  we denote by 
\begin{equation}
\label{eq:orienVector}
\vt=(\cos\theta,\sin\theta)	
\end{equation}
the  unit vector corresponding to $\theta \in \bS^1$. 

Let $\Gamma:[0,1]\to\Omega$ be a regular curve with non-vanishing velocity and  $\gamma=(\Gamma,\zeta):[0,1]\to\bar\Omega$ be its canonical orientation lifting. For any $t\in[0,1]$, $\zeta(t)$ is defined as being such that:
\begin{equation}
\label{eq:Assumption}
\fv_{\zeta(t)}: = \frac{\Gamma^\prime(t) }{ \| \Gamma^\prime(t)\|}.
\end{equation} 
According to the definition of $\fv_\zeta$ in \eqref{eq:orienVector}, one has
\begin{align}
\label{eq:OrtogonalVector}
\left(\frac{\Gamma^\prime(t)}{\|\Gamma^\prime(t)\|}\right)^\perp&= (\fv_{\zeta(t)})^\perp\nonumber\\
&=(-\sin(\zeta(t)),\cos(\zeta(t))),	
\end{align}
where $\fu^\perp$ denotes  the  vector that is perpendicular to a vector $\fu$. It is known that
\begin{equation}
\label{eq:CurvaEquation}
\frac{d}{dt}\left(\frac{\Gamma^\prime(t)}{\|\Gamma^\prime(t)\|}\right) = \kappa(t) \|\Gamma^\prime(t)\|\left(\frac{\Gamma^\prime(t)}{\|\Gamma'(t)\|}\right)^\perp,
\end{equation}
where $\kappa$ is the curvature of path $\Gamma$. Then we have the following equations:
\begin{align*}
\frac{d}{dt}\fv_{\zeta(t)}&=\frac{d}{dt}(\cos(\zeta(t)),\sin(\zeta(t)))\\
&=\zeta^\prime(t)\big(-\sin(\zeta(t)),\cos(\zeta(t))\big)\\
&=\zeta^\prime(t)\left(\frac{\Gamma^\prime(t)}{\|\Gamma^\prime(t)\|}\right)^\perp.
\end{align*}
Thus, using \eqref{eq:Assumption} and \eqref{eq:CurvaEquation}, we have
\begin{equation}
\label{eq:KeyStepCurva}
\zeta^\prime(t)\left(\frac{\Gamma^\prime(t)}{\|\Gamma'(t)\|}\right)^\perp=\kappa(t) \|\Gamma^\prime(t)\|\left(\frac{\Gamma^\prime(t)}{\|\Gamma'(t)\|}\right)^\perp,
\end{equation}
which yields to 
\begin{equation}
\label{eq:Curveture}
\zeta^\prime(t) = \kappa(t) \|\Gamma^\prime(t)\|,\quad \forall t\in[0,1].
\end{equation}
Using equations \eqref{eq:EulerElastica}, \eqref{eq:Assumption} and \eqref{eq:Curveture},  one has
\begin{align}
\label{eq:InduceRelation}
\ell(\Gamma)&=\int_0^L\big(1+\alpha\kappa^2(s)\big)\,ds\nonumber\\
&=\int_0^1\left(1+\alpha\,\frac{|\zeta^\prime(t)|^2}{\|\Gamma^\prime(t)\|^2}\right)\|\Gamma^\prime(t)\|\,dt\nonumber\\
&=\int_0^1\left(\|\Gamma^\prime(t)\|+\alpha\,\frac{|\zeta^\prime(t)|^2}{\|\Gamma^\prime(t)\|} \right)\,dt,
\end{align}
where the  Euclidean arc-length is defined as 
\begin{equation*}
ds = \|\Gamma'(t)\| dt.
\end{equation*}
By the definition of $\gamma$, for any $t \in [0,1]$ we have $\gamma^\prime(t)=(\Gamma^\prime(t),\zeta^\prime(t))$ and 
\begin{equation}
\label{eq:InterpBendEnergy}
\ell(\Gamma)=\int_0^1 \cF^\infty\big(\gamma(t),\gamma^\prime(t)\big)\,dt,
\end{equation}
where we define the Finsler metric $\cF^\infty$ on the orientation-lifted domain $\bar \Omega$ by 
\begin{equation}
\label{eq:inftyFinsler}
\cF^\infty(\bx, \bu):=
\begin{cases}
\|\fu\|+\alpha\frac{|\nu|^2}{\|\fu\|},&\text{ if } \fu \propto \vt,\\
+\infty,&\text{otherwise}.
\end{cases}
\end{equation}
for any orientation-lifted point $\bx=(\x, \theta)\in\bar\Omega$,  any vector  $\bu=(\fu,\nu)\in \bR^2\times\bR$ in the tangent space, and where $\propto$ denotes positive collinearity. 
Note that any other lifting $\tilde\gamma(t) = (\Gamma(t), \tilde \zeta(t))$ of $\Gamma(t)$ would by construction of~\eqref{eq:inftyFinsler} have infinite energy, i.e., $\ell(\Gamma)=\infty$.

\subsection{$\lambda$ Penalized Asymmetric Finsler Elastica Metric $\cF^\lambda$}
\label{subsec:uniformFinsler}

The Finsler metric $\cF^\infty$~\eqref{eq:inftyFinsler} is too singular to compute the global minimum of $\ell$~\eqref{eq:EulerElastica} by directly applying the numerical algorithm such as the fast marching method~\citep{mirebeau2014efficient}.  Hence we introduce a family of orientation-lifted Finsler elastica metrics over the orientation-lifted domain $\bar\Omega$,  depending on a penalization parameter $\lambda\gg1$ as follows:
\begin{equation}
\label{eq:metricFinsler_A}
\cF^\lambda(\bx,\bu):=\sqrt{\lambda^2\|\fu\|^2+2\alpha\lambda|\nu|^2}-(\lambda-1)\langle\vt,\fu\rangle,	
\end{equation}
for any orientation-lifted point $\bx=(\x,\theta)\in\bar\Omega$ and any vector $\bu=(\fu,\nu)\in \bR^2\times\bR$, and where $\vt=(\cos\theta,\sin\theta)$ is the unit vector associated to $\theta$  which denotes the position of  $\bx$ in the orientation space $\bS^1$ . 

As $\lambda\to\infty$, the $\lambda$ penalized Finsler elastica metric $\cF^\lambda$ can be expressed as:
\begin{align}
\cF^\lambda(\bx,\bu)=&\sqrt{\lambda^2\|\fu\|^2+2\alpha\lambda|\nu|^2}-(\lambda-1)\langle\vt,\fu\rangle\nonumber\\
=& \lambda \|\fu\| \sqrt{1+ \alpha \frac {2|\nu|^2}{\lambda \|\fu\|^2} } - (\lambda-1) \langle\fv_\theta,\fu\rangle\nonumber\\
=& \lambda \|\fu\| (1+ \frac {\alpha|\nu|^2}{\lambda \|\fu\|^2} + \cO(\lambda^{-2}))-(\lambda-1) \langle\fv_\theta, \fu\rangle\nonumber\\
=&\|\fu\|+\frac{\alpha|\nu|^2}{\|\fu\|}+(\lambda-1)(\|\fu\|-\langle\fv_{\theta},\fu\rangle)\nonumber\\
&+\cO(\lambda^{-1})
\end{align}
The term $\|\fu\|-\langle\fv_{\theta},\fu\rangle$ will vanish if vector $\fu$ is positively proportional to $\fv_\theta$. Therefore, one has for any $\bx$ and any $\bu$
\begin{equation*}
\cF^\lambda(\bx,\bu)\to\cF^\infty(\bx,\bu),\quad  \text{as } \lambda\to\infty.
\end{equation*}
The  metric $\cF^\lambda$ \eqref{eq:metricFinsler_A} has precisely the required form formulated in \eqref{eq:generalFinslerMetric},  with tensor field $\cM:=\cM_{\rm F}$ as:
\begin{equation}
\label{eq:Finsler_M}
\cM_{\rm F}(\bx)={\rm diag}(\lambda^2,\,\lambda^2,\,2\alpha\lambda),
\end{equation}
and vector field $\vec\omega:=\vec\omega_{\rm F}$
\begin{equation}
\label{eq:Finsler_w}
\vec\omega_{\rm F}(\bx) = (\lambda-1) (\vt,0),
\end{equation}
for any $\bx=(\x,\theta)\in\bar\Omega$. Note that the definiteness constraint \eqref{eq:constraintFinsler} is satisfied: 
\begin{equation*}
\langle\vec\omega_{\rm F}(\bx), \cM_{\rm F}^{-1}(\bx)\, \vec \omega_{\rm F}(\bx)\rangle = (1-\lambda^{-1})^2 < 1,\quad  \forall\,\bx\in\bar\Omega.
\end{equation*}

The anisotropy ratio characterizes the distortion between different orientations induced by a metric. Letting  $\bx=(\x,\theta)$, $\bw=(\fw,\nu_{\fw})\in\bR^2\times\bR$ and $\bv=(\fv,\nu_{\fv})\in\bR^2\times\bR$, the anisotropy  ratio  $\mu(\cF^\lambda)$ of the Finsler elastica metric $\cF^\lambda$ \eqref{eq:metricFinsler_A} can be defined by:
\begin{equation}
\label{eq:anisotropicRatio}
\mu(\cF^\lambda):=\sup_{\bx\in\bar\Omega}\left\{\max_{\|\bw\|=\|\bv\|=1 }\left\{ \frac{\cF^\lambda_{\bx}(\bw)}{\cF^\lambda_{\bx}(\bv)}\right\}\right\},
\end{equation} 
where the norm $\cF^\lambda_{\bx}(\cdot)=\cF^\lambda(\bx,\cdot)$. As an example, for the  Finsler elastica metric $\cF^\lambda$~\eqref{eq:metricFinsler_A} with $\lambda\geq 2$ and $\alpha=1$, we can show that  $\mu(\cF^\lambda)$ in \eqref{eq:anisotropicRatio} is obtained by choosing $\bw=(-\vt,0)$ and $\bv=(\vt,0)$, so that $\mu(\cF^\lambda)= 2\lambda-1$.

Moreover, one can  define the \emph{physical} anisotropy ratio of the Finsler elastica metric $\cF^\lambda$ by replacing  by $\bw_s=(\fw,0)$ and $\bv_s=(\fv,0)$ the variables  $\bw$ and $\bv$ in \eqref{eq:anisotropicRatio}, respectively. In this case, for any $\alpha$, the physical anisotropy ratio is equal to $2\lambda-1$ and only depends on $\lambda$.

A crucial object for  studying and visualizing  the geometry distortion induced by a metric is Tissot's indicatrix defined as  the collection of unit balls in the tangent space. For point $\bx = (\x, \theta) \in \bar\Omega$ and $\lambda \in [1, \infty)$, we define the unit balls for metrics $\cF^\infty$  and $\cF^\lambda$ respectively by
\begin{equation}
\label{eq:Ball_infty}
B_{\bx}^\infty := \{\bu= (\fu, \nu)\in \bR^2 \times \bR;\,  \cF^\infty(\bx,\bu) \leq 1\},
\end{equation}
and
\begin{equation}
\label{eq:Ball_lambda}
B_{\bx}^\lambda := \{ \bu = (\fu, \nu)\in \bR^2 \times \bR;\,  \cF^\lambda(\bx,\bu) \leq 1\}.
\end{equation}
Then any vector  $\bu=(\fu,\nu)\in B_{\bx}^\infty$ can be characterized by
\begin{equation}
\label{eq:Character}
u_\perp = 0, \quad u_\parallel>0,	\quad \text{and }\quad u_\parallel+\alpha\,\frac{|\nu|^{2}}{u_\parallel} \leq 1,
\end{equation}
where we introduce  $u_\parallel$ and $u_\perp$ as follows:
\begin{equation*}
u_\parallel:= \langle\fu,\vt\rangle,\quad
u_\perp: = \langle\fu, \vt^\perp\rangle.
\end{equation*}
Using  \eqref{eq:Character}, one has
\begin{equation}
\label{eq:Charac_infty}   
\left(u_\parallel - \frac{1}{2}\right)^2 + \alpha\,|\nu|^{2} \leq \frac{1}{4}.
\end{equation}
Thus $B_{\bx}^\infty$ is a flat 2D ellipse embedded in the 3D tangent space, and containing the origin on its boundary. Particularly, when $\alpha=1$, $B_{\bx}^\infty$ turns to  a flat 2D disk of radius $1/2$  as shown in Fig.~\ref{fig:Tissot}a. 

On the other hand, when $\lambda<\infty$, a short computation shows that any vector $\bu=(\fu,\nu)\in B_{\bx}^\lambda$ is characterized by a quadratic equation
\begin{equation}
\label{eq:Charac_lambda}
\frac{\lambda}{2}\,u^{2}_\perp + a_\lambda \left(u_\parallel - \frac{b_\lambda}{2}\right)^2 +\alpha\,|\nu|^{2}\leq \frac{c_\lambda}{4},
\end{equation}
where $a_\lambda, b_\lambda, c_\lambda$ are all $1+\cO(1/\lambda)$. Hence $B_{\bx}^\lambda$ is an ellipsoid, for instance see Fig. \ref{fig:Tissot}b with $\alpha=1$, almost flat in the direction of $\fv_\theta^\perp$  due to the large factor $\lambda/2$, which converges to the flat disk $B_{\bx}^\infty$ in the Haussdorf distance as $\lambda \to \infty$.

Tissot's indicatrix is also the control set in the optimal control interpretation of the Eikonal PDE \eqref{eq:EikonalPDE}. The Haussdorf convergence of the control sets  guarantees that the minimal action map  and minimal paths for the metric $\cF^\lambda$ converge towards those of $\cF^\infty$ as $\lambda \to \infty$.  Elements of proof of convergence can be found in Appendix B.

\begin{figure*}[t]
\centering
\includegraphics[width=17cm]{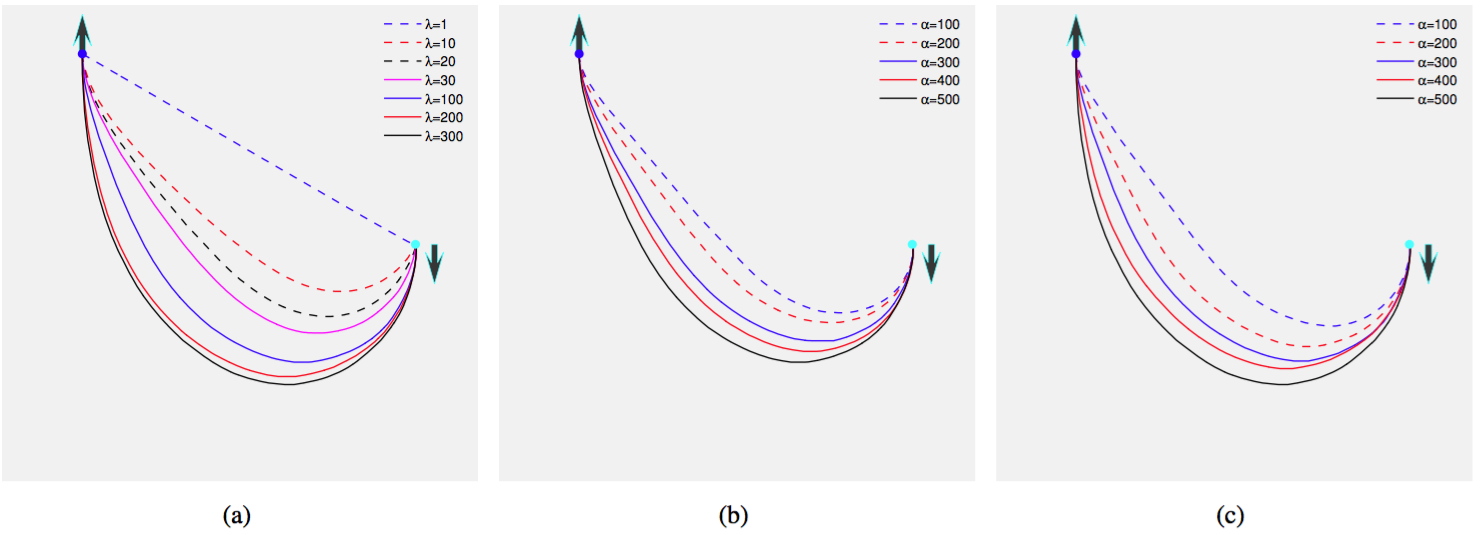}
\caption{Approximating Euler elastica curves by Finsler elastica minimal paths. (\textbf{a}) Finsler elastica minimal paths with $\alpha=500$ and different values of $\lambda$. (\textbf{b}) and (\textbf{c}) Finsler elastica minimal paths with $\lambda=100$ and $\lambda=300$ respectively, and  different values of $\alpha$.}
\label{fig:ApproximatedElasticas}
\end{figure*}


\subsection{Numerical Implementations}
\label{subsec:implementations}
Numerically, anisotropy is related to the problem stiffness, hence to its difficulty. In Table \ref{table:Cpu}, we show the computation time and the average number of Hopf-Lax updates required for each grid point by the adaptive stencils based fast marching method \citep{mirebeau2014efficient} for $\alpha=500$ and different values of $\lambda$ on a $300^2\times 108$ grid. This experiment was performed with a C++ implementation running on a standard 2.7GHz Intel I7 laptop with 16Gb RAM.
 
 We observe on Table \ref{table:Cpu} a logarithmic dependence of computation time and average number of the Hopf-Lax updates per grid point with respect to  anisotropy.  These observations agree with the  complexity analysis of the fast marching method presented in \citep{mirebeau2014efficient}, yielding the upper bound $\cO(N(\ln\mu)^3 + N \ln N)$, depending poly-logarithmically on the anisotropy ratio $\mu$~\eqref{eq:anisotropicRatio}, and quasi-linearly on the number $N$ of discretization points in the orientation-lifted domain $\bar\Omega$. 
In contrast, numerical methods such as \citep{sethian2003ordered} displaying a polynomial complexity $\cO(\mu^2 N\ln N)$ in the anisotropy ratio would be unworkable. The iterative AGSI method \citep{Bornemann:2006hc}, on the other hand, requires hundreds of evaluations of the Hopf-Lax operator \eqref{eq:Hopf-Lax}  per grid point to converge for large anisotropies, which also leads to prohibitive computation time, thus impractical. For $\lambda=30$ or $100$, the average numbers of the Hopf-Lax updates per grid required by the AGSI method are approximately $86$ and $182$, respectively, while the numbers of Hopf-Lax from the fast marching method are only $6.49$ and $7.27$, respectively, as demonstrated in Table~\ref{table:Cpu}.

In Fig.~\ref{fig:ApproximatedElasticas}a, we show  different Finsler elastica minimal paths, computed by the fast marching method \citep{mirebeau2014efficient},  with $\alpha=500$ (see~\eqref{eq:metricFinsler_A}) and different values of $\lambda$. The  arrows indicate the initial source and end points tangents. The cyan point denotes the initial position and the blue point indicates the end position. In Figs.~\ref{fig:ApproximatedElasticas}b and \ref{fig:ApproximatedElasticas}c, we show the Finsler  elastica minimal paths for different values of $\alpha$, with $\lambda=100$ and $\lambda=300$ respectively.  In this experiment, the angle resolution is $\theta_s=2\pi/108$ and  the image size is $300\times 300$. 
When $\lambda=1$, the metric $\cF^\lambda$ is constant over the domain $\bar\Omega$ and degenerates to the isotropic orientation-lifted metric $\cF^{\rm I}$  \eqref{eq:IsotropicLiftedMetric}, since the coefficient in front of the term $\langle\vt,\fu\rangle$ in~\eqref{eq:metricFinsler_A} vanishes. Hence the minimal geodesics are straight lines, see Fig.~\ref{fig:ApproximatedElasticas}a, that do not align with the prescribed endpoints tangents.  From Fig. \ref{fig:ApproximatedElasticas}, one can point out that as $\lambda$ and $\alpha$ increasing, curvature penalization forces the extracted paths to gradually align with the prescribed endpoints tangents and take the elastica shape.

\begin{table}[!t]
\centering
\setlength{\tabcolsep}{5pt}
\renewcommand{\arraystretch}{1.5}
\centering
\caption{Computation time (in seconds) and average number of Hopf-Lax updates required for each grid point by the fast marching method with  $\alpha=500$ and different values of $\lambda$ on a $300^2\times 108$ grid.}
\vspace{1mm}
\begin{tabular}{lccccccc}
\hline
$\lambda$  &1     & 10   & 20    & 30   &100  &200   &1000 \\
time      &13.9s & 25.3s  & 27.3s   & 27.7s  &31.7s  &33.9s   &36.8s\\
number     &3     & 5.49 & 6.06  & 6.49 &7.27 &7.82  &8.12 \\
\hline
\end{tabular}
\label{table:Cpu}
\end{table}

\subsection{Image Data-Driven Finsler Elastica Metric $\cP$}
\label{subsec:weightedFinsler}
We use $\Phi_0 \equiv 1$ in Section~\ref{sec:GeodesicInterpretation} for the sake of simplicity.   In the general case, the metric  $\cF^\infty$~\eqref{eq:inftyFinsler} and  its approximation $\cF^\lambda$~\eqref{eq:metricFinsler_A} should be respectively replaced by $\Phi_0^{-1}\cF^\infty$ and $\Phi^{-1}_0\cF^\lambda$.  Furthermore, in order to take into account the orientation information,  we use an orientation-dependent speed function $\Phi:\bar\Omega\to\bR^+$ to replace  $\Phi_0$. In this case, the data-driven Finsler elastica metric can be defined by
\begin{equation}
\label{eq:FinslerWeighted}
\cP(\bx,\bu):=\frac{1}{\Phi(\bx)}\cF^\lambda(\bx,\bu),\quad \forall\bx\in\bar\Omega,\,\forall\bu\in\bR^3, 
\end{equation}
and minimizing Eq.~\eqref{eq:DataBendEnergy} is approximated for large values of $\lambda$ by minimizing
\begin{align*}
\cL(\Gamma)&=\int_0^1\frac{1}{\Phi(\gamma(t))}\cF^\lambda(\gamma(t),\gamma^\prime(t))dt\\
&=\int_0^1\cP(\gamma(t),\gamma^\prime(t))dt,
\end{align*}
where $\gamma$ is the orientation-lifted curve of $\Gamma$.
The data-driven Finsler elastica metric $\cP$  is asymmetric in the sense that for most vectors $\bu\neq\0$, one has
\begin{equation}
\label{eq:Asymmetric}
\cP(\bx,\bu)\neq \cP(\bx,-\bu),\quad \forall\,\bx\in\bar\Omega.
\end{equation}
This asymmetric property can help to build a closed contour passing through a collection of orientation-lifted points as discussed in Section \ref{subsec:ClosedContour}.

The minimal action map $\cW_{\bs}$ associated to the data-driven Finsler elastica metric $\cP$ and an initial source point $\bs$ is the unique viscosity solution to the Eikonal PDE~\eqref{eq:EikonalPDE}~\citep{lions1982generalized,sethian2003ordered}:\begin{equation}
\label{eq:FinslerDataEikonalPDE}
\begin{cases}
\cP^*_{\bx}\big(\nabla \cW_{\bs}(\bx)\big)=1, &\forall \bx\in\bar\Omega\backslash\{\bs\} ,\\
\cW_{\bs}(\bs)=0,
\end{cases}	
\end{equation}
where  $\cP^*_{\bx}$ is the dual norm of $\cP_{\bx}(\cdot):=\cP(\bx,\cdot)$ defined by \eqref{eq:dualNorm}. The geodesic distance value $\cW_{\bs}(\bx)$ between $\bx$ and $\bs$ depends on both the curvature and the orientation information of the minimal paths.
When $\lambda$ is sufficiently large, the spatial and angular resolutions are sufficiently small, the fixed point system~\eqref{eq:FixedPoint} is properly solved,  and the minimal paths are properly extracted by~\eqref{eq:geodesicODE}.

\section{Computation of  Data-Driven Speed Functions by Steerable Filters}
\label{sec:Velocity}
In this section, we introduce two types of anisotropic speed  functions  for  boundary detection and tubular structure extraction, both of which are based on the steerable filters.

\subsection{Steerable Edge Detector}
\label{subsec:SteerableEdgeFilrer}
\citet{jacob2004design} proposed a new class of edge detection filters based on  the computational framework  and the steerable property. Letting $G_{\sigma}$ be a 2D isotropic Gaussian kernel with variance $\sigma$ and $\x=(x,y)$, the computational steerable filter ${\rm F} _\theta^{{\rm M}}$ with order $M$ can be expressed as
\begin{equation}
\label{eq:SteerableEdgeDetector}
{\rm F}^{{\rm M}}_\theta(\x)=\sum_{\tau=1}^M\sum_{\xi=0}^{\tau}\cK_{\tau,\xi}(\theta)\frac{\partial^{(\tau-\xi)}}{\partial x^{(\tau-\xi)} }\frac{\partial^{\xi}}{\partial y^{\xi}}G_{\sigma}(\x),
\end{equation}
where $\theta\in[0,2\pi)$ and $\cK_{\tau,\xi}$ are the orientation-dependent coefficients which can be computed in terms of some optimality criteria. 
Particularly, when $M=1$, the steerable filter ${\rm F}^{1}_\theta$ becomes the classical Canny detector~\citep{canny1986computational}. 
For higher order steerable filters, for example, $M=3$ or $M=5$, the orientation-dependent responses of the filters will be more robust to noise. Therefore, we set $M=5$ for the relevant experiments. Regarding the details of the computation of $\cK_{\tau,\xi}$, we refer to \citep{jacob2004design}.

A color image is regarded as a vector valued map $\mathbf I: \Omega\to\bR^3$, $\mathbf I(\x)=[I_1(\x),I_2(\x),I_3(\x)]$ for each $\x\in\Omega$.
A multi-orientation response function $h: \bar\Omega\to \bR^+$ of a color image $\mathbf I$ can be computed by the steerable filter ${\rm F}^{\rm M}_\theta$ \eqref{eq:SteerableEdgeDetector}
\begin{equation}
\label{eq:EdgeResponse}
 h(\x,\theta)=\frac{1}{3}\sum_{i=1}^3|I_i(\x)\ast {\rm F^M_\theta}(\x)|,\quad \forall\,\theta\in[0,2\pi).
\end{equation}
For a gray level image $I:\Omega\to\bR$, we have the simple computation of $h$:
\begin{equation}
\label{eq:GrayEdgeResponse}	
h(\x,\theta)=|I(\x)\ast{\rm F^M_\theta}(\x)|,\quad \forall\,\theta\in[0,2\pi).
\end{equation}
The multi-orientation response function $h$ is symmetric in the sense that for any orientation $\theta_\pi\in[0,\pi)$,  one has
\begin{equation*}
h(\x,\theta_\pi)=h(\x,\theta_\pi+\pi),\quad \forall\x\in\Omega.
\end{equation*}

\subsection{Multi-Orientation Optimally Oriented Flux Filter}
\label{subsec:TubularSpeed}
Optimally oriented flux filter is used to extract the local geometry of the image~\citep{law2008three}. The oriented flux of an image $I: \Omega\to\bR^+$, of dimension $2$, is defined by the amount of the image gradient projected along the orientation $\fv$ flowing out from a 2D circle $\cC_r$ at  point $\x=(x,y)\in\Omega$ with radius $r$:
\begin{equation}
\label{eq:OOF}
f(\x;r,\fv)=\int_{\partial \cC_{r}}(\nabla(G_{\sigma}\ast I)(\x+r\mathbf{n})\cdot \fv )(\fv \cdot \mathbf{n})\,ds,
\end{equation}
where $G_{\sigma}$ is a Gaussian with variance $\sigma$.  $\mathbf{n}$ is the outward normal of the boundary $\partial \cC_{r}$ and $ds$ is the infinitesimal length on $\partial\cC_{r}$.  According to the divergence theory, one has 
\begin{equation*}
f(\x;r,\fv)=\fv^{\rm T}\,\mathbf{Q}(\x,r)\,\fv,
\end{equation*}
for some symmetric matrix $\mathbf{Q}(\x,r)$:
\begin{equation}
\label{eq:OOF_Q}
\mathbf Q(\x,r)=
\begin{pmatrix} 
\partial_{xx}G_\sigma\,& \partial_{xy}G_\sigma \\
\partial_{yx}G_\sigma\,& \partial_{yy}G_\sigma
\end{pmatrix}
\ast\mathbbm{1}_{r}\ast I(\x),
\end{equation}
where $\mathbbm{1}_{r}$ is an indicator function of the circle $\cC_{r}$.

Let $\lambda_{1}(\cdot)\geq \lambda_{2}(\cdot)$ be the eigenvalues of symmetric matrix $\mathbf  Q(\cdot)$ and suppose that the intensities inside the tubular structures are darker than the background regions. If a point $\x$ is inside the tubular structure, one has $\lambda_{1}(\x,r^*(\x))\gg0$ and $\lambda_{2}(\x,r^*(\x))\approx 0$, where $r^*$ is the optimal scale map
\begin{equation}
\label{eq:OptimalScale}	
r^*(\x):=\arg\max_{r}\left\{\frac{1}{r}\lambda_1(\x,r)\right\},\quad \forall \x\in\Omega.
\end{equation}
The scale normalized factor $1/r$ in \eqref{eq:OptimalScale} provides the responses of the optimally oriented flux filter a scale invariant property as discussed in~\citep{law2008three}. 

As shown in~\citep{benmansour2011tubular}, the optimally oriented flux filter is steerable, which means that we can construct a multi-orientation  response function $g:\bar\Omega\to\bR^+$ for any $\theta\in [0,\,2\pi)$ by:
\begin{equation}
\label{eq:CentrelineResponse}
g(\x,\theta)=\max\big\{\fu^{\rm T}_{\theta}\,\mathbf{Q}(\x,r^*(\x))\,\fu_{\theta},\,0\big\},\quad \forall\,\x\in\Omega.
\end{equation}
where $\fu_\theta=(\cos\theta,\sin\theta)^{\rm T}$ is a unit vector associated to $\theta$.

In addition,  the vesselness map $V_{\rm n}:\Omega\to\bR$, which indicates the probability of a pixel $\x$ belonging to a vessel,  can be calculated by:
\begin{equation}
\label{eq:Vesselness}
V_{\rm n}(\x)=\max\left\{\max_{r}\left\{\frac{1}{r}\lambda_1(\x,r)\right\},\,0\right\}.
\end{equation}
The vesselness map $V_{\rm n}$ will be used to compute the isotropic Riemannian metric in the following section.

\subsection{Computation of Data-Driven Speed Functions}
\label{subsec:SpeedFunctions}
A requirement on the  speed function $\Phi$ used by the  data-driven Finsler elastica metric $\cP$ defined in~\eqref{eq:FinslerWeighted} is that it should depend on the position and the orientation.
Therefore,  based on  the orientation-dependent response function $h$, defined in Section \ref{subsec:SteerableEdgeFilrer}, the speed function $\Phi$  that is used for object boundary detection  can be computed by
\begin{equation}
\label{eq:Edge_Velocity}
\Phi(\x,\theta)=1+\eta\left(\frac{ h\big(\x,\text{mod}(\theta+\pi/2,2\pi)\big)}{\|h\|_\infty}\right)^p,
\end{equation}
for any $\x\in\Omega$ and any orientation $\theta\in[0,2\pi)$, where $\text{mod}(\theta+\pi/2,2\pi)$ is defined as $(\theta+\pi/2)\mod 2\pi$. 

Similarly,  we define the speed function $\Phi$ for tubular structure  extraction by using the  function $g$ \eqref{eq:CentrelineResponse}
\begin{equation}
\label{eq:Centreline_Velocity}
\Phi(\x,\theta)=1+\eta\left(\frac{ g\big(\x,\text{mod}(\theta+\pi/2,2\pi)\big)}{\|g\|_\infty}\right)^p,
\end{equation}
where  $\eta$ and $p$ are positive constants. In this paper,  we use $p=2$ for all the experiments.

\section{Closed Contour Detection and Tubular Structure Extraction}
\label{sec:ContourDetection}

We use the following convention in the remaining part of this paper: if $\bp=(\p,\theta)$ is a  point in $\bar\Omega$, then we use $\bp^\dag=\big(\p,\text{mod}((\theta+\pi),2\pi)\big)$ to denote the orientation-lifted point that has the same physical position $\p$ with $\bp$ but the opposite direction, where $\theta\in[0,2\pi)$.

\begin{figure*}[!t]
\centering
\includegraphics[width=17cm]{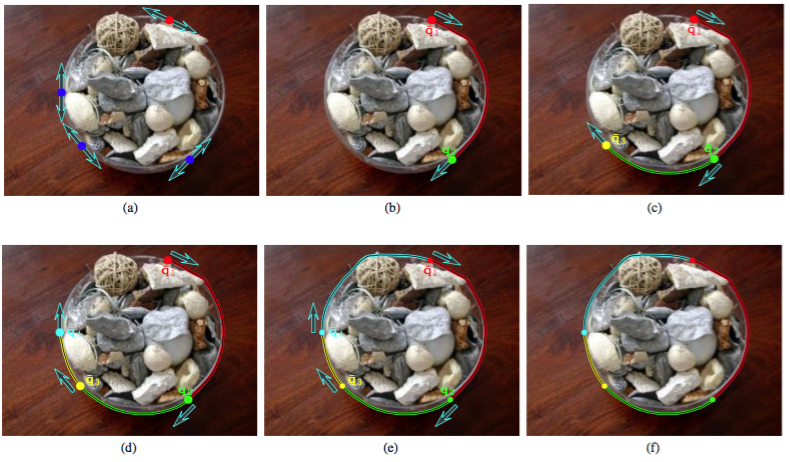}
\caption{Steps for the closed contour detection procedure. (\textbf{a}) Original image and all  of the vertices in $\cD$ denoted by dots and arrows. (\textbf{b}) The first pair of successive vertices $(\bq_1,\bq_2)$ is detected. (\textbf{c}) The third   vertice $\bq_3$ is detected. (\textbf{d}) The final vertice $\bq_4$ is detected and the closed contour detection procedure is terminated. (\textbf{e}) The minimal path joining $\bq_4$ and $\bq_1$ is tracked (cyan curve). (\textbf{f})  The final closed contour is obtained.}
\label{fig:Steps}
\end{figure*}

\subsection{Closed Contour Detection as a Set of  Piecewise Finsler Elastica Minimal Paths}
\label{subsec:ClosedContour}
In this section, we propose an interactive  closed contour detection method based on the Finsler elastica minimal paths constrained by a set $\cH$ of $m$ user-provided  physical points 
\begin{equation*}
\cH:=\{\x_i\in\Omega, i=1,2, ...,m;\,m\geq2\},
\end{equation*}
all of which are on the target object boundary. As discussed in Section \ref{sec:FinslerMinimaPathsModel}, the Finsler elastica metric $\cP$ is defined on the orientation-lifted space $\bar\Omega$.  Thus,  we build an orientation-lifted collection $\cD$ of $\cH$ by 
\begin{equation}
\label{eq:Setoflifting}
\begin{split}
\cD:=\Big\{&\bx_i=(\x_i,\theta_i),\,\bx^\dag_i=\big(\x_i,{\rm mod}(\theta_i+\pi,2\pi)\big);\\
&i=1,2, ...,m,~\x_i\in\cH,~{\rm and}~\theta_i\in[0,2\pi)\Big\},
\end{split}
\end{equation}
where the directions $\theta_i$ are manually specified in this paper. Corresponding to each physical point $\x_i\in\cH$, there exist two orientation-lifted points: $\bx_i$, $\bx^\dag_i\in\cD$, which have opposite tangents.
We show an example of these orientation-lifted points in Fig.~\ref{fig:Steps}a, where the physical positions and the corresponding tangents are denoted by dots and arrows, respectively. 

The basic idea of the proposed closed contour detection method is to identify a set  of ordered vertices from the collection $\cD$, and to join these detected vertices  by a set of piecewise Finsler elastica minimal paths.

We start the closed contour detection procedure by selecting a physical position from $\cH$, say  $\x_1$. The corresponding orientation-lifted points of $\x_1$ are denoted by $\bx_1,\bx^\dag_1\in\cD$. Once $\x_1$ is specified, we remove both  $\bx_1$ and $\bx^\dag_1$ from $\cD$. As shown in Fig.~\ref{fig:Steps}a,  $\bx_1$ and $\bx^\dag_1 $ are denoted by a red dot and two  arrows with opposite directions. 

Let $\ba^*\in\cD$ be the closest orientation-lifted point to $\bx_1$ in terms of the geodesic distance $\cW_{\bx_1}$ \eqref{eq:FinslerDataEikonalPDE} with respect to the data-driven  Finsler elastica metric $\cP$, i.e.
\begin{equation}
\label{eq:ClosestVertex}
\ba^*=\arg\min_{\bz\in\cD}\cW_{\bx_1}(\bz). 	
\end{equation}
Similarly to $\ba^*$,  the closest orientation-lifted point $\bc^*\in\cD$ of $\bx^\dag_1$ can be detected. By the detected points $\ba^*$ and $\bc^*$, the first pair of successive vertices $(\bq_1,\bq_2)$ is determined simultaneously using the following criterion:
\begin{equation}
\label{eq:FirstandSecondVertices}
(\bq_1,\bq_2)=
\begin{cases}
(\bx_1,\ba^*),\text{ if }\cW_{\bx_1}(\ba^*)<\cW_{\bx^\dag_1}(\bc^*),\\
(\bx^\dag_1,\bc^*),\text{ otherwise}.	
\end{cases}	
\end{equation}
In Fig.~\ref{fig:Steps}b, we illustrate the vertices $\bq_1$ and $\bq_2$  by the red and green dots with arrows, respectively. If the minimal action map $\cW_{\bx_1}$ (resp. $\cW_{\bx^\dag_1}$) is computed via the fast marching method \citep{mirebeau2014efficient}, $\ba^*$ (resp.  $\bc^*$) is the first vertex reached by the fast marching front, which is monotonically advancing. Once  the first pair of successive vertices $(\bq_1,\bq_2)$ is found, the geodesic $\cC_{\bq_1,\bq_2}$ (red curve in Fig.~\ref{fig:Steps}b) can be tracked by reversing the path obtained through the ODE~\eqref{eq:geodesicODE},  and both $\bq_2$, $\bq^\dag_2$ are removed from $\cD$. If the number of physical points is $2$, i.e. $m=2$, the closed contour detection procedure can be terminated. In order to form a closed contour, one can recover the geodesic $\cC_{\bq_2,\bq_1}$ using the minimal action map associated to  the source point $\bq_2$.

If $m>2$, the subsequent vertex $\bq_i$ with $i\geq 3$ is identified from the remaining points of $\cD$ by searching for the  nearest neighbour  of the vertex $\bq_{i-1}$ in terms of geodesic distance, i.e.
\begin{equation}
\label{eq:NextVertex}
\bq_i=\arg\min_{\bz\in\cD}\cW_{\bq_{i-1}}(\bz).
\end{equation}
After the detection of the vertex $\bq_i$, we remove both $\bq_i$ and $\bq^\dag_i$ from $\cD$. Again the geodesic $\cC_{\bq_{i-1},\bq_i}$ can be tracked, as denoted by the green curve in Fig.~\ref{fig:Steps}c. 

The procedure of detecting  the nearest neighbor from the set of remaining orientation-lifted points is recursively carried out according to the criterion \eqref{eq:NextVertex} until $m$ ordered vertices have been identified.  Then the geodesic $\cC_{\bq_m,\bq_1}$, which is denoted by the cyan curve in Fig.~\ref{fig:Steps}e,  is computed by simply allowing $\bq_m$ to be the source point and  $\bq_1$ to be the end  point.  The final closed contour, denoted by $\cC$, is defined as the concatenation of all of the detected Finsler elastica minimal paths as demonstrated in Fig.~\ref{fig:Steps}f.

In summary, the proposed closed contour detection procedure aims to seeking a set $\chi$ of $m$ pairs of successive orientation-lifted  points from $\cD$:
\begin{equation}
\label{eq:set_pairs}
\chi=\bigcup_{i=1}^{m-1}\big\{(\bq_i,\bq_{i+1})\big\}	\bigcup \big\{(\bq_m,\bq_1)\big\},
\end{equation}
and a closed contour $\cC$ contains a set of  Finsler elastica minimal paths, joining all the pairs of vertices in $\chi$. 
This method simply matches orientation-lifted points by pairs, joining a vertex to the remaining nearest neighbor with respect to the curvature-penalized geodesic distance, so as to form a closed contour located at the expected object boundaries. Note importantly, that the obtained piecewise geodesic contour is \emph{smooth} ($C^1$ differentiable) since the initial source and end orientation-lifted points of the consecutive geodesics have both matching positions $\q_i$ and orientations $\theta_i$.  In fact, we find a closed contour passing  through all the orientation-lifted points in a greedy manner. Instead of trying out all possible combinations of Finsler elastica minimal paths,  we use  a greedy searching strategy that is performed with a low complexity. The problem we solve here is similar to the NP-hard traveling salesman problem, where the cities are represented by the orientation-lifted points $\bq_i\in\cD$.

\subsection{Perceptual Grouping via  Curvature-Penalized Geodesic Distance}
\label{subsec:perceptual_grouping}
Perceptual grouping is relevant to the task of curve reconstruction and completion \citep{cohen2001multiple}. The geodesic distance based perceptual grouping model was firstly introduced by \citet{cohen2001multiple} using the concept of  saddle point. The basic idea of  this model is to identify each pair of points which has to be linked by a minimal path from a set of key points. Later on, \citet{bougleux2008anisotropic} improved this grouping model  by  using path orientations  and structure tensors. However, neither of  the mentioned grouping methods considered the curvature penalization.

In this section, we  address the perceptual grouping problem of finding $n$  closed contours, each of which is formed by  a set of piecewise  Finsler elastica minimal paths. Each closed contour $\cC_i$ passes through all the ordered orientation-lifted points involved  in $\cD_i\subseteq\cD$, where $\cD$ is defined in  \eqref{eq:Setoflifting} and $i=1,2,3,...,n$.

We initialize the proposed perceptual grouping procedure by specifying an initial  physical position $\x_1$, where the  orientation-lifted points of $\x_1$, denoted by  $\bx_1$ and $\bx^\dag_1$,  are involved in $\cD$. Both $\bx_1$ and $\bx^\dag_1$ are removed from $\cD$ after detecting the respective nearest vertices that correspond to $\bx_1$ and $\bx^*_1$  by \eqref{eq:ClosestVertex}.  As a consequence,  the first two vertices $\bq_1,\bq_2$ are identified using the criterion of \eqref{eq:FirstandSecondVertices}, and the geodesic $\cC_{\bq_1,\bq_2}$ is recovered by using the ODE~\eqref{eq:geodesicODE}.
Once  the vertices $\bq_1$ and $\bq_2$ are detected, we add $ \bq_1,\bq_2$ to $\cD_1$, remove $\bq_2,\bq^\dag_2$ from $\cD$ and compensate $\bq_1$ to $\cD$. 

Similar to the closed contour detection procedure,  the next vertex $\bq_k$ with $k\geq 3$ is found based on  the criterion of~\eqref{eq:NextVertex} and the detected vertex $\bq_{k-1}$. Following the detection of vertex $\bq_k$, we add $\bq_k$ to $\cD_1$, remove $\bq_k$, $\bq_k^\dag$ from $\cD$, and track the geodesic $\cC_{\bq_{k-1},\bq_k}$ that joins $\bq_{k-1}$ to $\bq_k$. 
This perceptual grouping  procedure is carried out by recursively searching for new vertices. Once the vertex $\bq_1$ is  detected again according to the criterion \eqref{eq:NextVertex}, we stop the construction of $\cD_1$ after removing $\bq_1$ from $\cD$, and recover the geodesic ending at vertex $\bq_1$. The desired closed contour $\cC_1$ can be obtained by concatenating  all the detected Finsler elastica minimal paths with source and end points in $\cD_1$.

We start to build the collection $\cD_2$ by choosing a new physical point  as initialization. This initial physical point is obtained from the remaining orientation-lifted points of $\cD$. Similar to the procedure of constructing $\cD_1$, we build the collection $\cD_2$ from the remaining points of $\cD$. The procedure of building the collections $\cD_i$ can be terminated when $n$ such collections have been identified or when the collection $\cD$ is empty.
One can note that the constructed collections $\cD_i$ follow
\begin{equation*}
\cD_i \cap \cD_j=	\varnothing,\quad \forall i\neq j.
\end{equation*}
In contrast to the closed contour detection method described in Section \ref{subsec:ClosedContour},  we do not enforce all of the orientation-lifted points in $\cD$ to be used in the perceptual grouping procedure.

\subsection{Tubular Structure Extraction}
\label{subsec:tubular_structure}
In this section, we apply the proposed Finsler elastica minimal path model to the tubular structure extraction,  where the centerlines of the tubular structures are represented by the Finsler elastica minimal paths. 

The minimal paths with the proposed data-driven Finsler elastica metric depend on the tangents of  both the  source point and the end point. To simplify the initialization procedure, we firstly compute the optimal orientation map, denoted by $\Theta:\Omega\to [0,\pi)$, which minimizes the multi-orientation function $g$ in \eqref{eq:CentrelineResponse}: 
\begin{equation}
\label{eq:OptimalOrientation}
	\Theta(\x)=\arg\min_{\theta\in[0,\pi)}\{ g(\x,\theta)\},\quad \forall \x\in\Omega.
\end{equation}
Once the optimal orientation map $\Theta$ is obtained, for  the source  position $\s\in\Omega$, one can obtain two orientation-lifted points $\bs=(\s,\Theta(\s))$ and $\bs^\dag$. Additionally, for  any end position $\p_i\in\Omega$ ($i=1,2,\cdots,n$), the corresponding orientation-lifted end points are defined by $\bp_i=(\p_i,\Theta(\p_i))$  and $\bp_i^\dag$.  

For each set of orientation-lifted end points $\{\bp_i,\bp^\dag_i\}$, we can extract four possible \emph{geodesics}, each of which joins a source point in $\{\bs,\bs^\dag\}$ to an end point in $\{\bp_i,\bp_i^\dag\}$. The goal in this section is to search for a geodesic $\cC^*_i$ with \emph{minimal}  geodesic curve length associated to the metric $\cP$, among all of the four possible geodesics.

Let us denote the  source  and  end points of the geodesic $\cC^*_i$ by $\ba^*$ and $\bc_i^*$, respectively. 
If the geodesic curve length is estimated by the fast marching method \citep{mirebeau2014efficient}, this procedure can be simplified as follows: starting the fast marching front propagation from both of the source points $\bs$ and $\bs^\dag$, the orientation-lifted point $\bc_i^*\in\{\bp_i,\bp_i^\dag\}$ is the first point that is reached by the front. The desired geodesic $\cC^*_i$ can be determined by reversing the path obtained through  the ODE \eqref{eq:geodesicODE}.  As a result, a set $\{\cC^*_i;1\leq i\leq n\}$ of all the desired geodesics can be extracted  from the same minimal action map generated by a single fast marching propagation.

In these applications, the geodesic distance maps associated to  the Finsler elastica metric are computed in a manner of early abort, i.e., once all the specified orientation-lifted endpoints are reached by the fast marching front,  the geodesic  distance computation will be terminated. This early abort trick can greatly reduce the computation time. 
It is similar to the partial front propagation described in \citep{deschamps2001fast} with a simple extension to multiple points.

\section{Experimental Results}
\label{sec:Experiments}
We show the advantages  of using curvature penalization for minimal paths in the following experiments involving a study of the proposed  metric itself, and  the comparative results against the isotropic Riemannian (IR) metric,  the anisotropic Riemannian (AR) metric  and the isotropic orientation-lifted Riemannian (IOLR) metric in the applications of closed contour detection and tubular structure extraction. Note that in this section, whenever we mention Finsler elastica metric, we mean the data-driven Finsler elastica metic.

\subsection{Riemannian Metrics Construction}
\label{subsec:ColorGradient}
We adopt the color image gradient  introduced by \citet{di1986note} to construct the IR  and AR metrics for closed boundaries detection of objects in color images.

Considering a color image $\mathbf I=(I_1,\,I_2,\,I_3)$ and a Gaussian kernel $G_\sigma$ with variance $\sigma$, the respective Gaussian-smoothed x- and y-derivatives of  $\mathbf I$ are defined by
\begin{equation}
\label{eq:GaussianDerivatives}
\mathbf{I}_x^\sigma:=\partial_xG_\sigma\ast\mathbf I\quad \text{and}\quad \mathbf{I}_y:=\partial_y G_\sigma\ast\mathbf I,
\end{equation}
where $\mathbf{I}^\sigma_x(\cdot)$ and $\mathbf{I}^\sigma_y(\cdot)$ should be understood as $1\times3$ vectors. Following~\citep{di1986note}, a tensor $\cE(\x)$  can be defined by
\begin{equation*}
\cE(\x)=
\begin{pmatrix}
\|\mathbf I^\sigma_x(\x)\|^2~~ &\langle\mathbf I^\sigma_x(\x),\, \mathbf I^\sigma_y(\x)\rangle\\
&\\   
\langle\mathbf I^\sigma_x(\x),\, \mathbf I^\sigma_y(\x)\rangle ~~&\|\mathbf I^\sigma_y(\x)\|^2
\end{pmatrix},\quad \forall\,\x\in\Omega.
\end{equation*}
It is known that  the tensor  $\cE(\x)$  can be decomposed in terms of its eigenvalues and eigenvectors for all $\x\in\Omega$ by
\begin{equation*}
\cE(\x)=\varphi_1(\x)\,\g_1(\x)\,\g^{\rm T}_1(\x)+\varphi_2(\x)\,\g_2(\x)\,\g^{\rm T}_2(\x),	
\end{equation*}
where $\varphi_1(\x)$, $\varphi_2(\x)$ are the eigenvalues of $\cE(\x)$ and $\g_1(\x)$,  $\g_2(\x)$ are the associated eigenvectors. Without loss of generality, we assume that $\varphi_1(\x)\leq \varphi_2(\x)$, $\forall\,\x\in\Omega$.  In this case, $\varphi_2$ denotes the color gradient magnitude and   $\g_2$ denotes the unit color gradient vector field.

Based on the scalar field $\varphi_2$, the IR metric $\cF^{{\rm I}}$  can be constructed for all $\x\in\Omega$ by
\begin{equation}
\label{eq:IRMetricSpeed}
\cF^{\rm I}(\x, \fu)=\Big(\beta_1+\beta_2\,\varphi^{p}_2(\x)\Big)^{-1}\|\fu\|,
\end{equation}
where $\beta_1$, $\beta_2$ and $p$ are  positive constants. In the following relevant experiments, we set $\beta_1=1$ and $p=2$.

The tensor field $\cM_{\rm A}$ for the AR metric $\cF^{\rm A}$ \eqref{eq:ARMetric} can be computed by
\begin{align}
\label{eq:ColorARTensor}
\cM_{\rm A}(\x)=&\exp(\tau\,\varphi_2(\x))\,\g_1(\x)\,\g_1^{\rm T}(\x)\nonumber\\
&+\exp(\tau\,\varphi_1(\x))\,\g_2(\x)\,\g_2^{\rm T}(\x),\quad \forall \x\in\Omega,
\end{align} 
where $\tau$ is a  negative  constant.

In the tubular structure extraction experiments,  for the construction of the IR metric $\cF^{\rm I}$~\eqref{eq:IRMetricSpeed}, we simply replace the scalar field $\varphi_2$  by the vesselness map $V_{\rm n}$ defined in~\eqref{eq:Vesselness}. 
Moreover, regarding the construction of the AR metric, we make use of a radius-lifted tensor field as introduced by~\citet{benmansour2011tubular}, instead of using the tensor field  $\cM_{\rm A}$ defined in~\eqref{eq:ColorARTensor}. In this case, the AR metric is regarded as the anisotropic radius-lifted Riemannian (ARLR) metric defined over the radius-lifted domain. In the following related experiments, we use the optimally oriented flux filter~\citep{law2008three} to compute the ARLR metric as suggested in~\citep{benmansour2011tubular}. For further details on the construction of the ARLR metric, we refer the reader to~\citep{benmansour2011tubular}.

The speed function $\Phi^{\rm IL}$  for the  IOLR metric $\cF^{\rm IL}$ \eqref{eq:IsotropicLiftedMetric} should be dependent of the orientations. Simply,  we compute the speed function  $\Phi^{\rm IL}$ by
\begin{equation}
\Phi^{\rm IL}(\x,\theta)=\Phi(\x,\theta),\quad\forall\,\theta\in[0,\pi),\,\forall\,\x\in\Omega,
\end{equation}
where  $\Phi$  is the orientation-dependent speed function defined in \eqref{eq:Edge_Velocity} and \eqref{eq:Centreline_Velocity}.  The parameter $\rho$ of the IOLR metric $\cF^{\rm IL}$ penalizes the variations of the orientation $\theta$, and is set as $\rho=\alpha$, where $\alpha$ is the parameter for the curvature term in the bending energy $\cL$ \eqref{eq:DataBendEnergy}.

\begin{figure*}[!t]
\centering
\includegraphics[width=17cm]{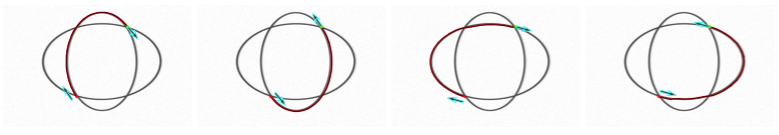}
\caption{Flexible Finsler elastica minimal paths extraction on ellipse-like curves. The red  and green dots denote the source and end positions, respectively. Arrows indicate the tangents.}
\label{fig:Asymetry_Ellipse1}
\end{figure*}

\begin{figure*}[!t]
\centering
\includegraphics[width=17cm]{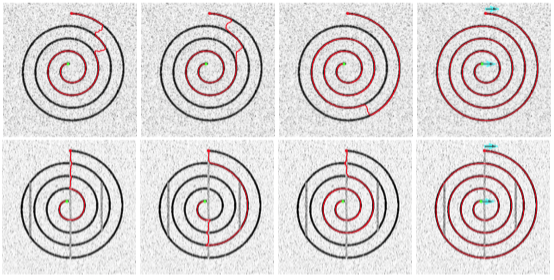}
\caption{Minimal paths extraction results on Spirals. \textbf{Columns 1-4} Minimal paths extracted by the IR metric, the ARLR metric, the IOLR metric and the  Finsler elastica metric, respectively. The red and green dots denote the source  and  end positions, respectively. The arrows indicate the tangents.}
\label{fig:Spiral}
\end{figure*} 

\subsection{Parameters Setting}
\label{subsec:Parameters}
Curvature penalization in the proposed Finsler elastica metric relies on two parameters, $\alpha$ and $\lambda$ \eqref{eq:metricFinsler_A}. The choice of $\lambda$ is dictated by algorithmic compromises. Indeed, minimal paths with respect to the Finsler elastica metric $\cP$  converge to the elastica curves in the limit $\lambda \to \infty$, hence a large value  of $\lambda$ is desirable. However, large values of $\lambda$ yield metrics with strong anisotropy ratio $\mu(\cP)$. As a result,  the numerical algorithm used in this paper, adapted from \cite{mirebeau2014efficient}, uses larger discretization stencils, which increases its numerical cost and reduces its locality. For instance, $\lambda = 30$ (resp.\ $100$ or $300$) leads to stencils with a radius of $4$ pixels (resp. $8$ or $13$). We typically use $\lambda=100$.

On the other hand, the parameter $\alpha$ is used  to weight the curvature penalty in the Finsler elastica metric $\cP$. In the course of the fast marching method, a large value of $\alpha$ makes the front to propagate slowly along the orientation dimension, implying that the obtained geodesics tend to be smooth, i.e., with low curvature. When $\alpha$ is very small, the extracted geodesics mainly depend on the image data-driven speed functions defined in Section \ref{subsec:SpeedFunctions}. Therefore, the choice of $\alpha$ should depend on the desired image features. Basically, we make use of the following heuristics.
There is a natural candidate $\alpha_*$ for the parameter $\alpha$, dictated by the physical units of the parameters, namely $\alpha_* = (R_* / \Phi_*)^2$, where $R_*$ is the smallest radius of curvature of the image features to be extracted, measured in pixels, and $\Phi_*$ is the typical value of the speed function $\Phi$ around these features. 

The angular resolution is set as $\theta_s=\pi/36$ for both the IOLR metric and the proposed Finsler elastica metric.
The parameter $\eta$ for image data-driven speed function $\Phi$ is  set for each tested image individually. The parameter $\beta_2$ that is used in the IR metric $\cF^{\rm I}$~\eqref{eq:IRMetricSpeed} is set as $\beta_2=2\eta$ for all the comparative experiments. 
Unless otherwise specified, we set the anisotropy ratio values for the AR metric and the ARLR metric to be 20.

\subsection{Smoothness and Asymmetry of the Finsler Elastica Minimal Paths}
\label{subsec:Asymmetry-SmoothnessExperiments}
The  Finsler elastica metric invoking orientation lifting and curvature penalty benefits from the smooth and asymmetric properties of  minimal paths. We demonstrate  these properties in a synthetic image as shown in Fig.~\ref{fig:Asymetry_Ellipse1}, where two ellipse-like shapes cross each other. The red   and  green dots are the source and end positions, respectively. The arrows  indicate the tangents at the corresponding positions. One can see that for the fixed source  and end positions, changing the corresponding tangents will give different minimal paths.  As shown in the first two columns of  Fig.~\ref{fig:Asymetry_Ellipse1},  the two minimal paths with the same  source and end positions can form a complete ellipse  shape.

In Fig. \ref{fig:Spiral}, we design a spiral that has high anisotropy.  The source  and end positions are placed at the ends of the spiral. In the top row we add high noise to the spiral, while in the bottom row we blur the spiral.
In columns 1-4, we show the  minimal paths  obtained from the IR metric, the ARLR metric, the IOLR metric and the  Finsler elastica metric, respectively. One can see that by using  the mentioned Riemannian metrics, the shortcuts occur as shown in columns 1 to 3. In the top row, the minimal path (shown in column 3) obtained by using the IOLR metric is improved compared against the respective paths from the IR  and  AR metrics. However, a segment of the spiral is  also missed due to the shortcuts problem.  In contrast, the minimal paths shown in column 4   extracted by the Finsler elastica metric can completely avoid the problem of  shortcuts  thanks to the curvature penalization  embedded in the proposed metric. In this experiment, we  use an anisotropy ratio value of $100$ for the ARLR metric. For the Finsler elastica metric, we set $\alpha=500$ to ensure the  geodesics to be smooth enough.

\begin{figure}[!t]
\centering
\includegraphics[width=8cm]{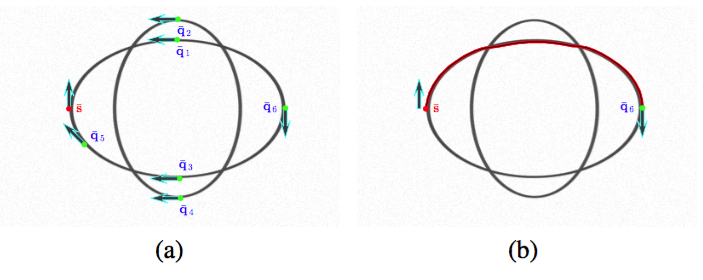}
\caption{Finding the closest orientation-lifted candidate to the source point $\bs$ in terms of the geodesic distance associated to the Finsler elastica metric. (\textbf{a})  The orientation-lifted candidates $\bar\q_i$, $i=1,2,\cdots,6$, and source point $\bs$ are demonstrated. (\textbf{b}) The closest candidate $\bar\q_6$ is detected. Red curve indicates the geodesic linking $\bs$ to $\bar\q_6$.}
\label{fig:Asymetry_Ellipse2}
\end{figure}

\begin{figure}[!t]
\centering
\includegraphics[width=8cm]{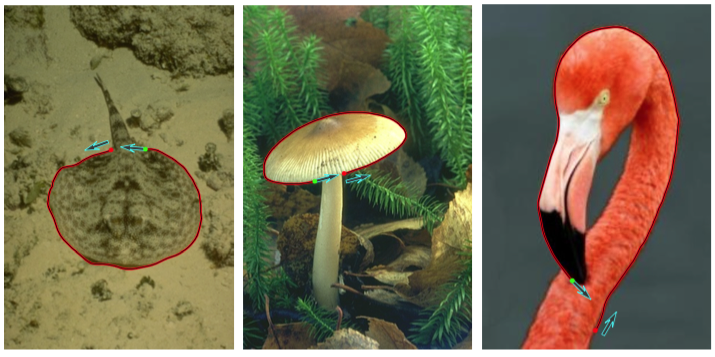}	
\caption{Finsler elastica minimal paths extraction results. The red  and  green dots indicate the  source and end positions,  respectively. The arrows indicate the corresponding tangents.}
\label{fig:BoundaryExtraction}
\end{figure}

In Fig.~\ref{fig:Asymetry_Ellipse2}a, we specify  six orientation-lifted candidates $\bar\q_i$, $i=1,2...,6$, which are denoted by green dots with arrows, and a source point $\bar{\mathbf s}$ indicated by red dot with arrow. Among all of these candidates, we aim to find the closest orientation-lifted candidate to the   source  point $\bs$, in terms of the geodesic distance associated to the metric $\cP$ defined in~ \eqref{eq:FinslerWeighted}. In Fig.~\ref{fig:Asymetry_Ellipse2}b,  it is shown that the closest orientation-lifted point to $\bs$ is the candidate $\bar\q_6$, even though the geodesic (red curve),  joining  the source point $\bs$ to the candidate $\bq_6$,  passes through the vicinity of the physical position of  $\bar\q_1$. Moreover, one can claim that the Euclidean distance  between the physical positions of $\bs$ and $\bar\q_6$ is the largest among the  Euclidean distance values between the physical positions of  $\bs$ and any remaining candidate $\bq_i$.  This experiment demonstrates the asymmetric and smooth properties of the proposed Finsler elastica minimal path model.

In Fig.~\ref{fig:BoundaryExtraction}, we show the extracted Finsler elastica minimal paths on three images, where each pair of the prescribed source and end positions is very close to each other in terms of Euclidean distance. For each image, we expect to detect a long boundary between the two given orientation-lifted points.
It can be observed that the extracted  minimal paths are able to catch the desired boundaries. In Fig.~\ref{fig:BoundaryExtraction}, the images shown in columns 1 and 2 are from the Berkeley Segmentation Dataset \citep{arbelaez2011contour} and the image in column 3 is from  the Weizmann dataset \citep{alpert2012image}.

\begin{figure*}[!htb]
\centering
\includegraphics[width=3.4cm]{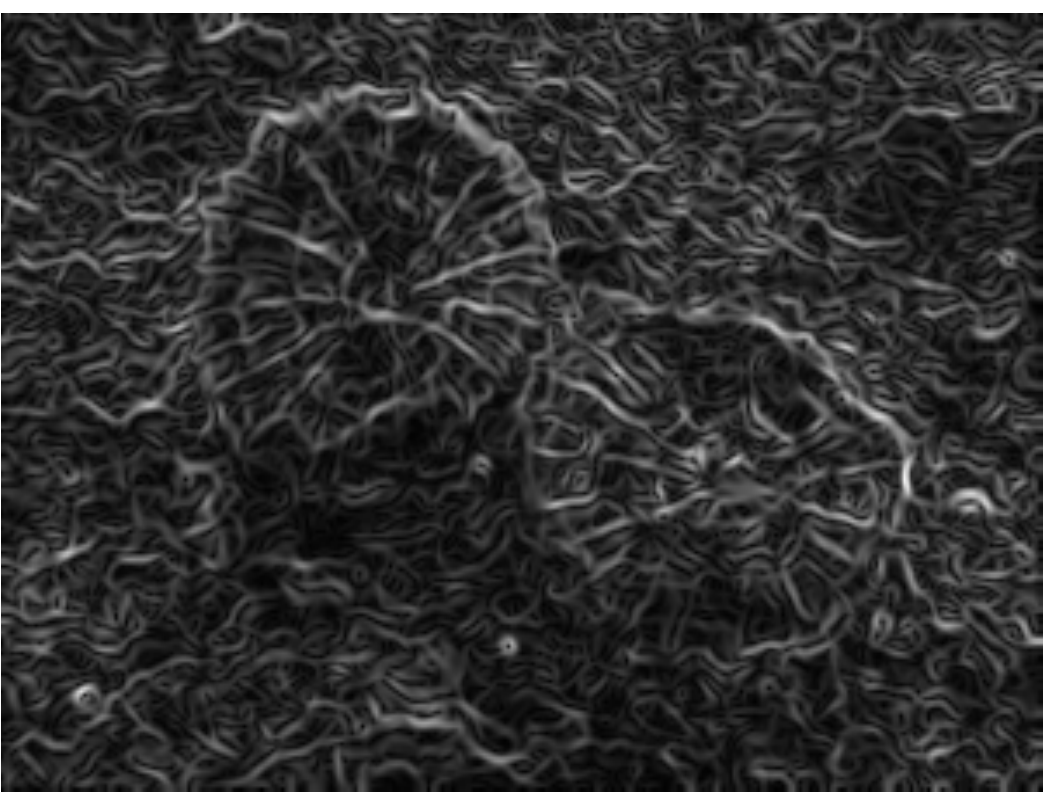}
\includegraphics[width=3.4cm]{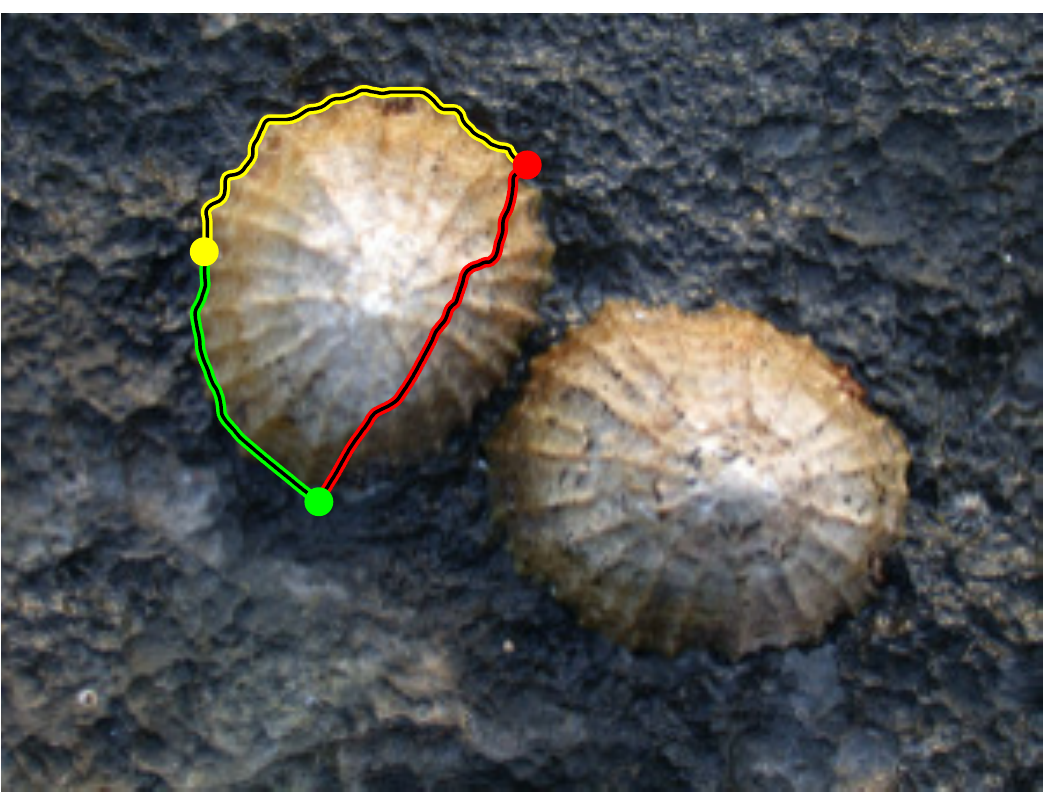}
\includegraphics[width=3.4cm]{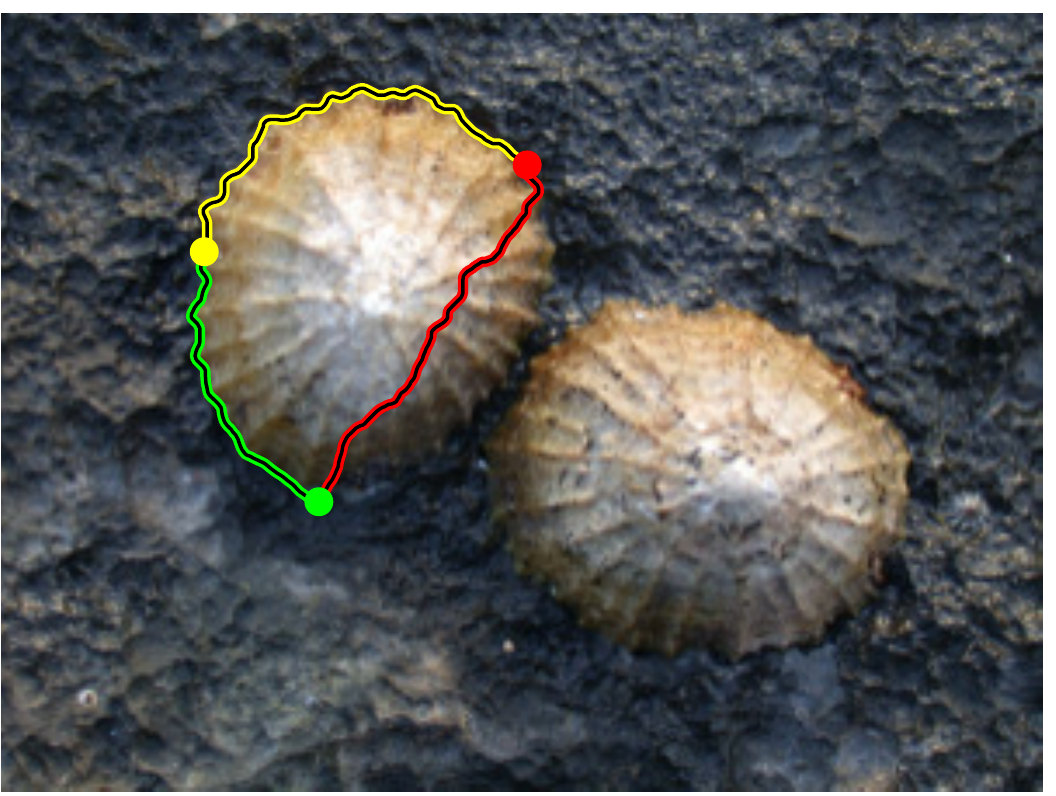}
\includegraphics[width=3.4cm]{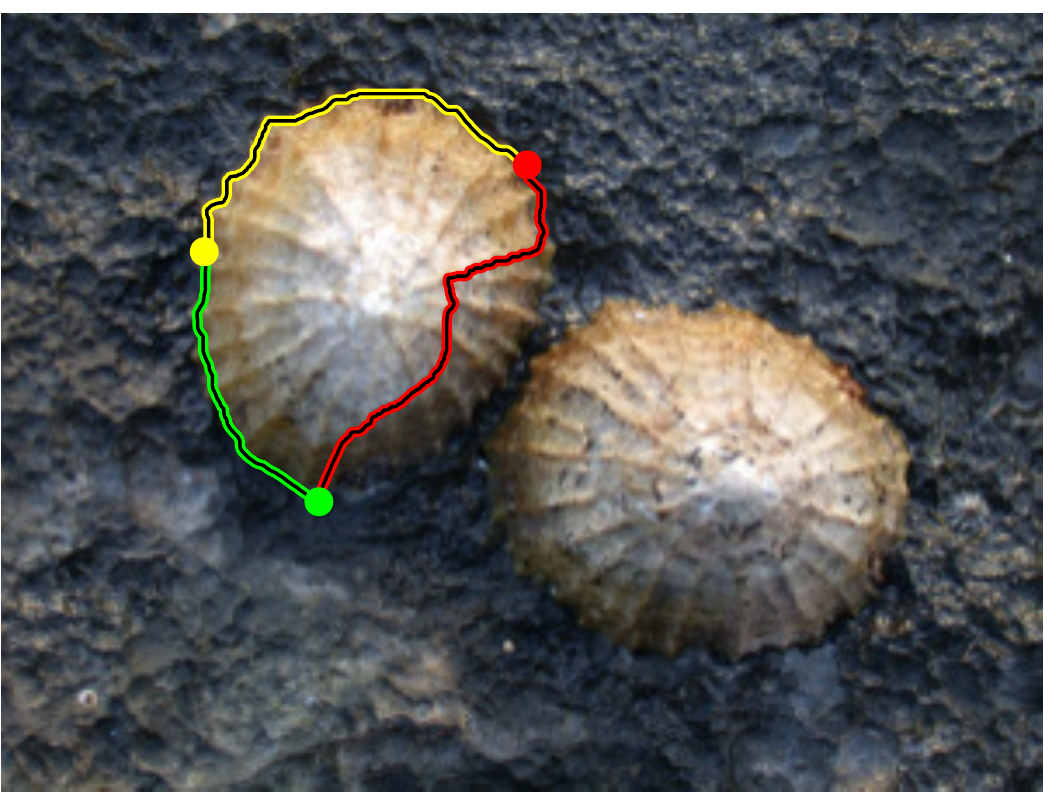}
\includegraphics[width=3.4cm]{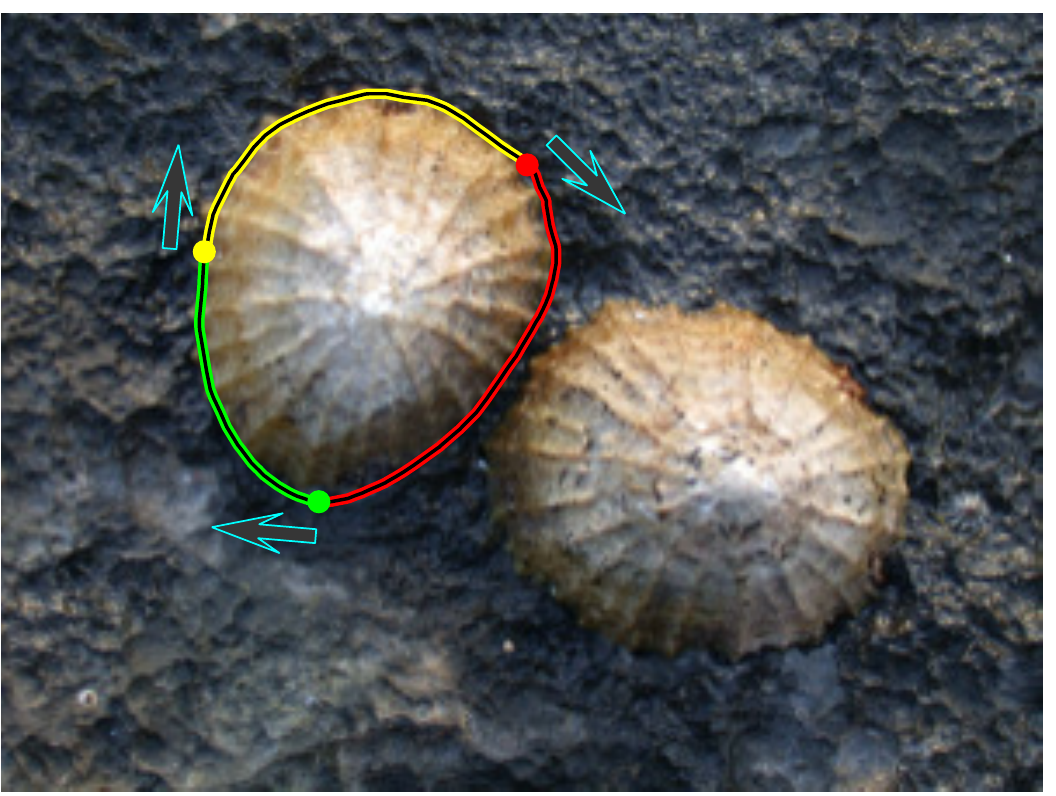}	\\
\includegraphics[width=3.4cm]{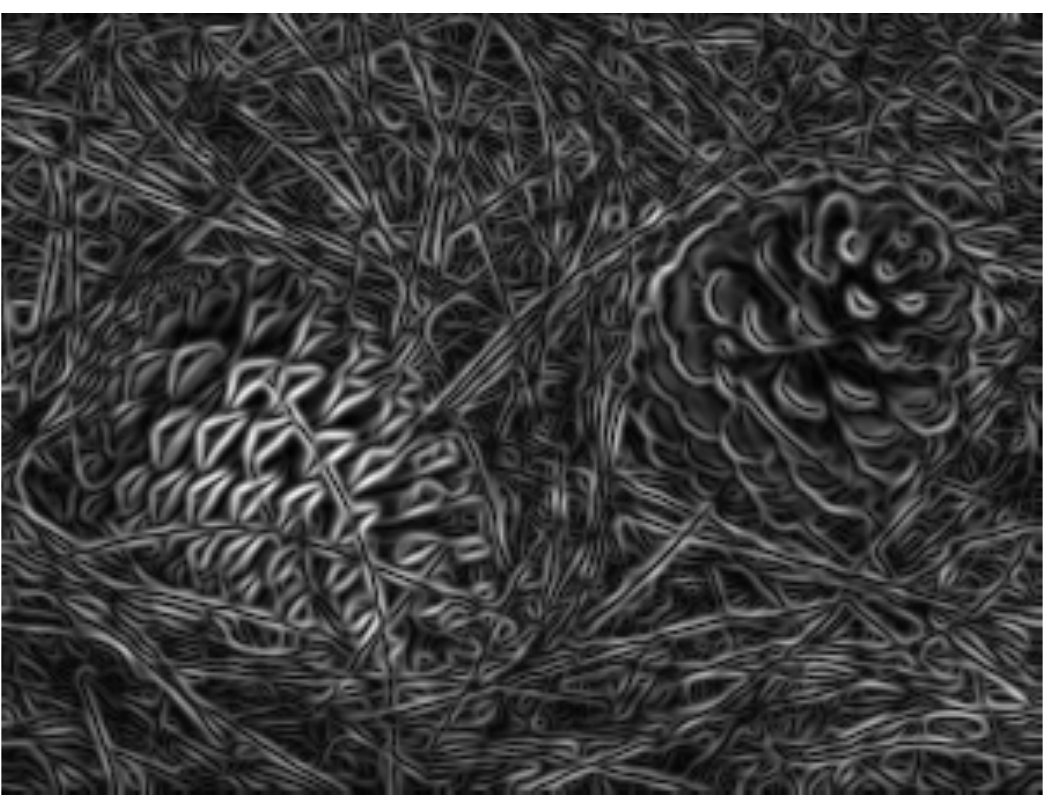}
\includegraphics[width=3.4cm]{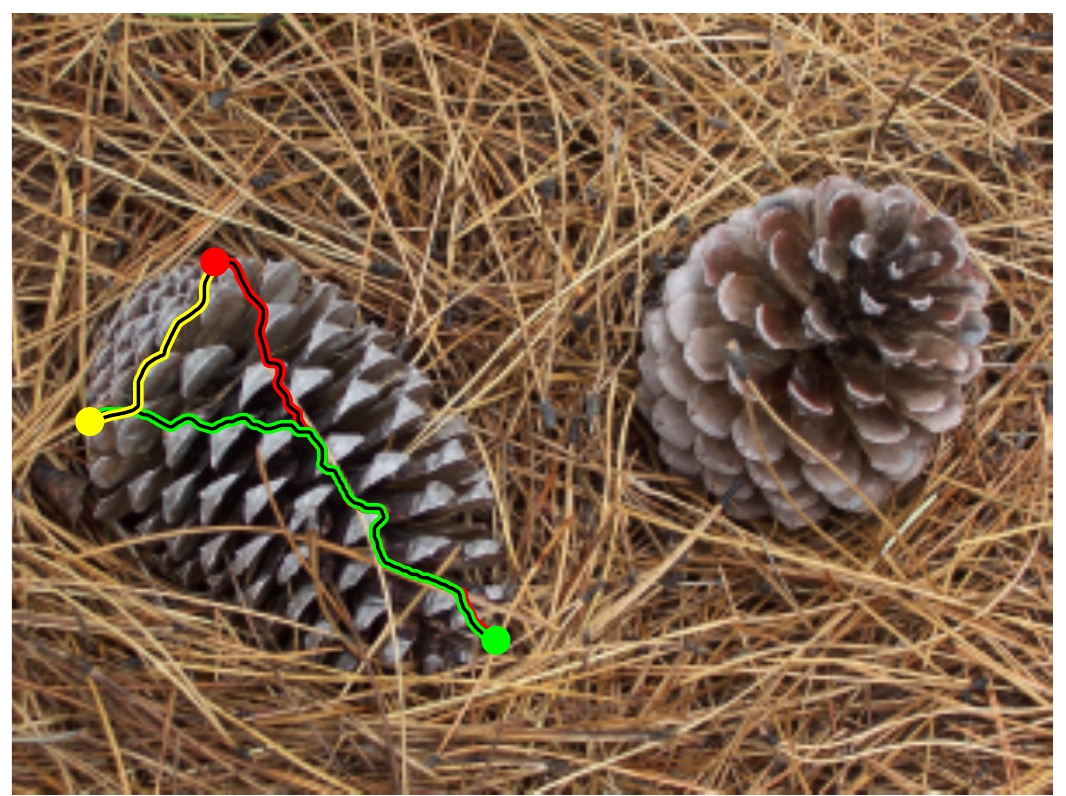}
\includegraphics[width=3.4cm]{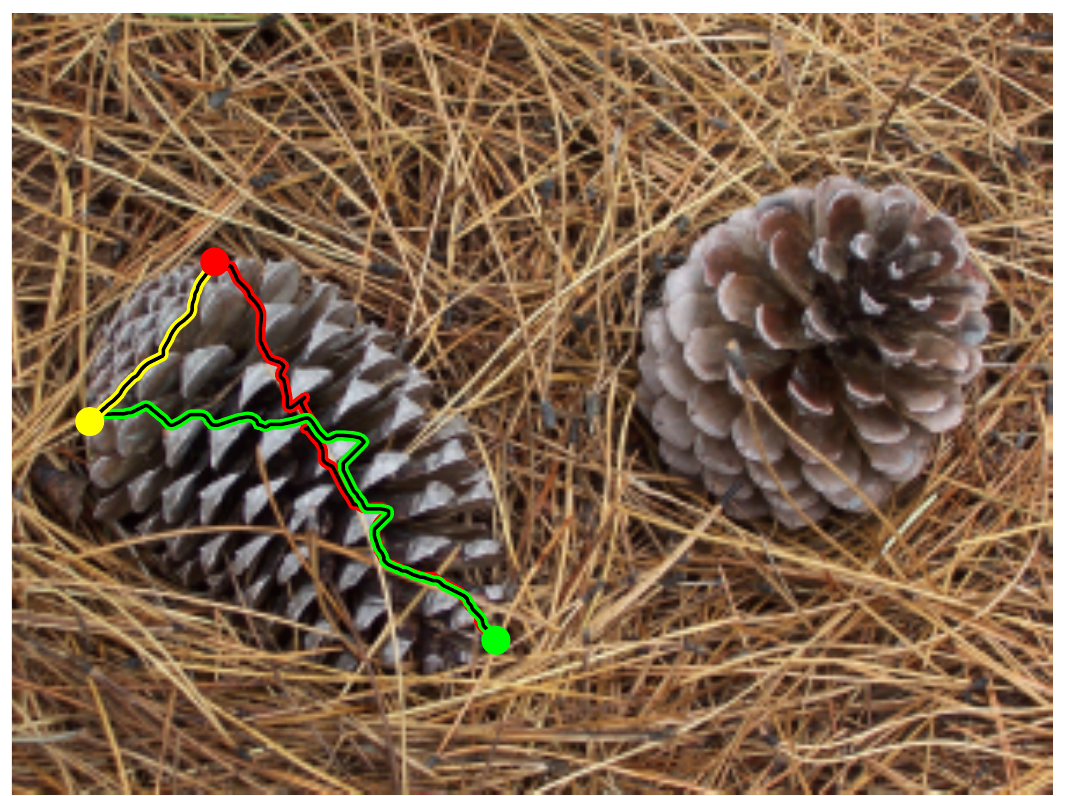}
\includegraphics[width=3.4cm]{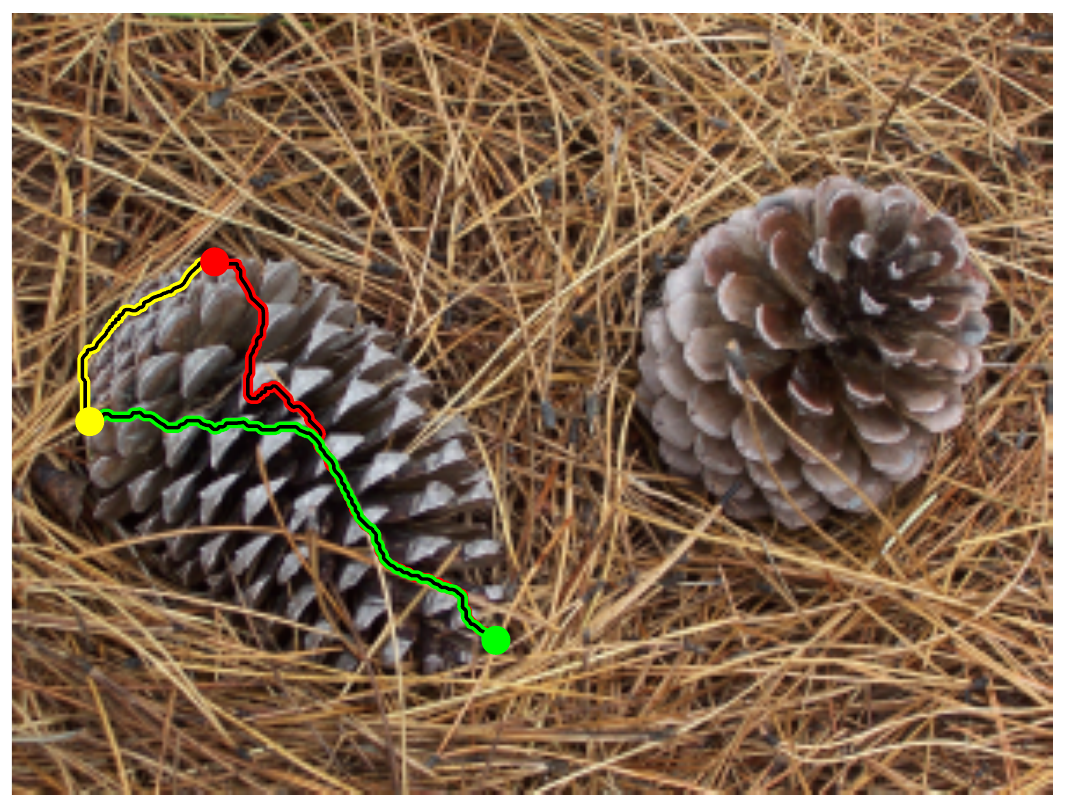}
\includegraphics[width=3.4cm]{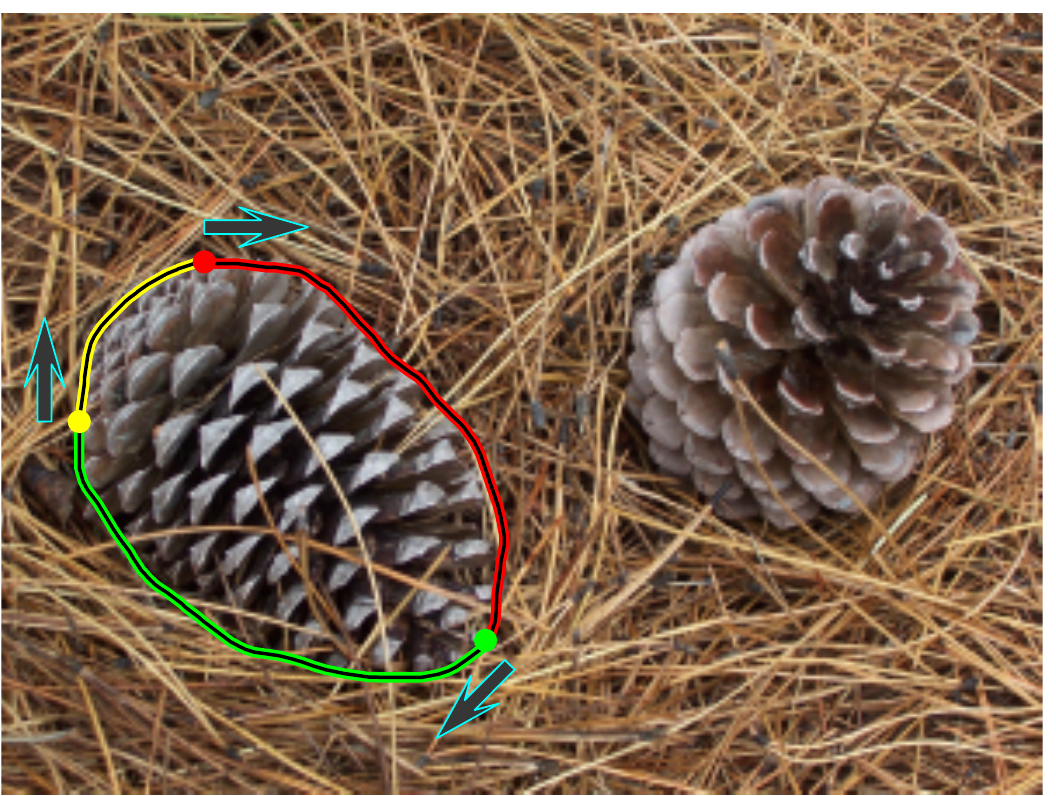}	\\
\includegraphics[width=3.4cm]{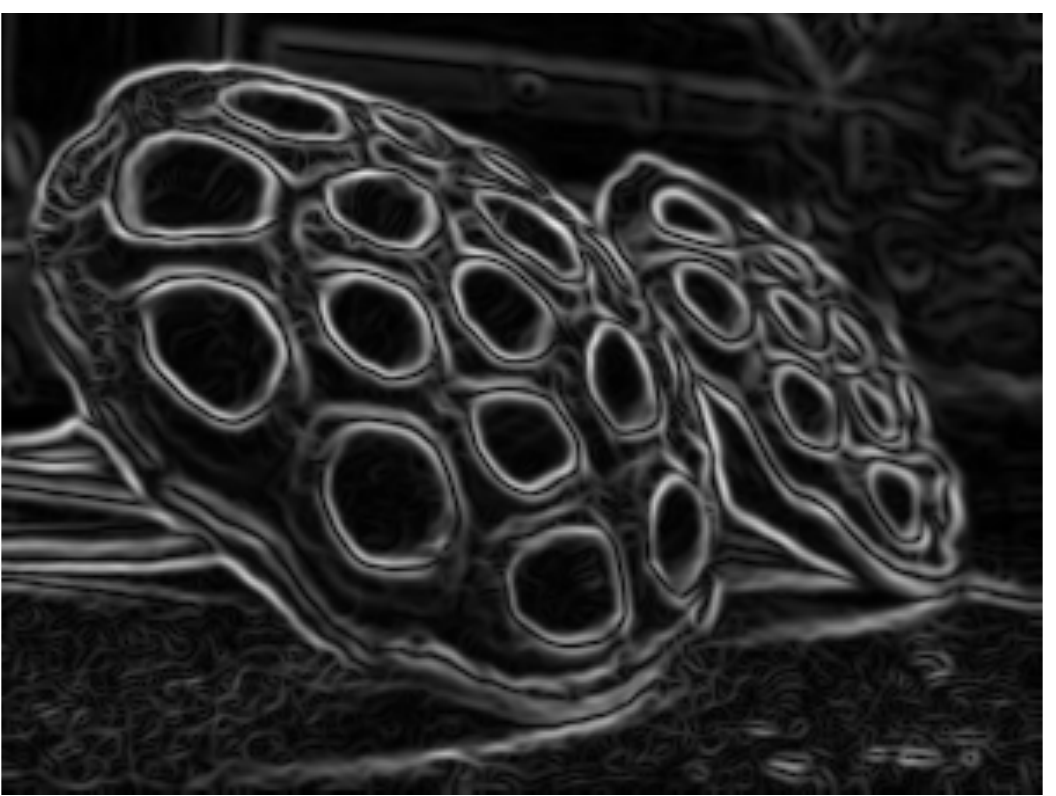}
\includegraphics[width=3.4cm]{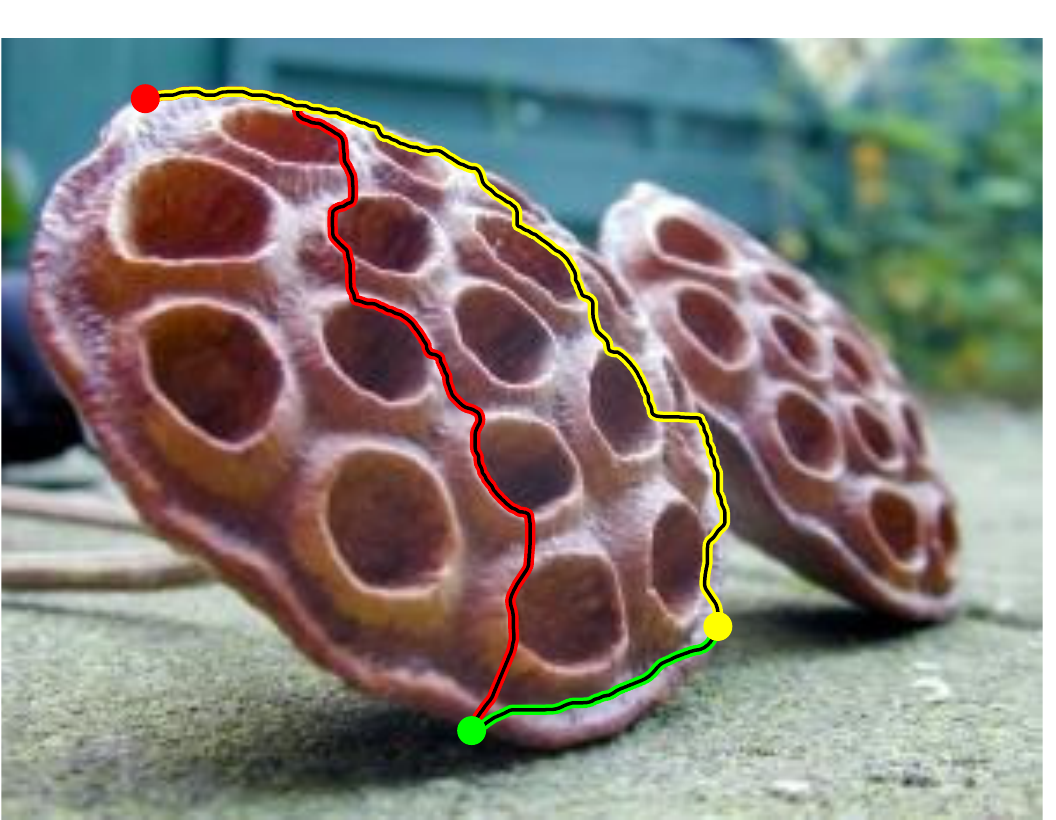}
\includegraphics[width=3.4cm]{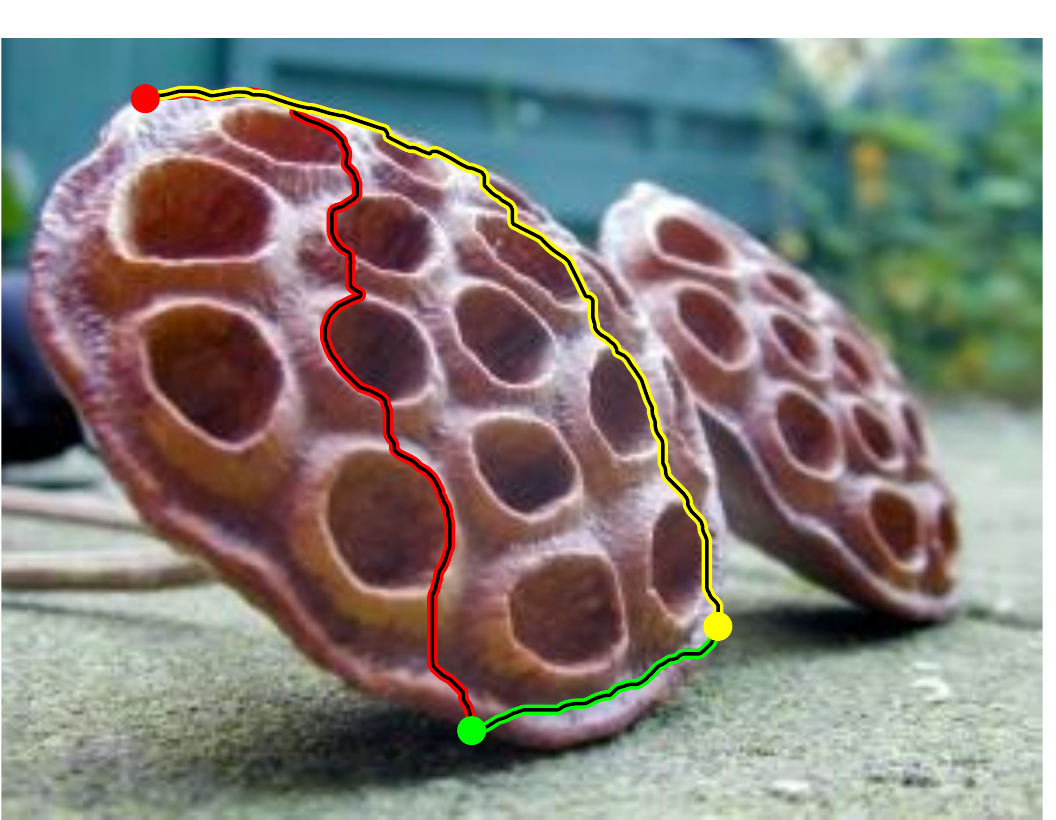}
\includegraphics[width=3.4cm]{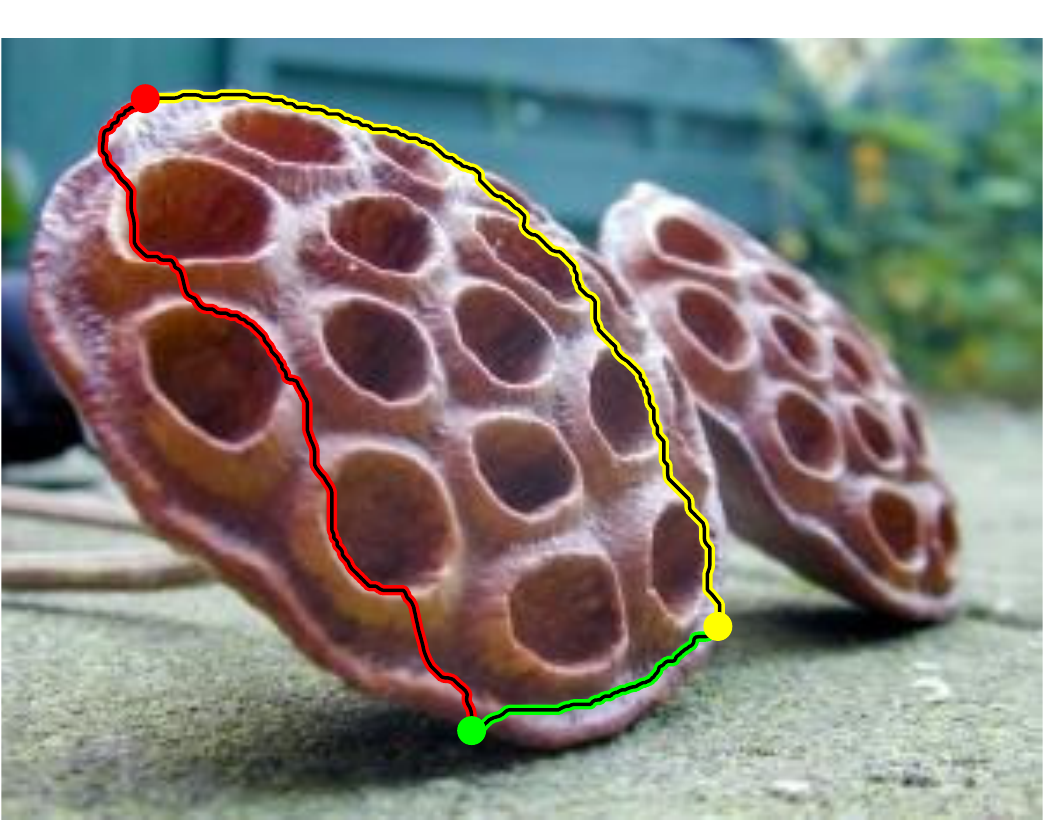}
\includegraphics[width=3.4cm]{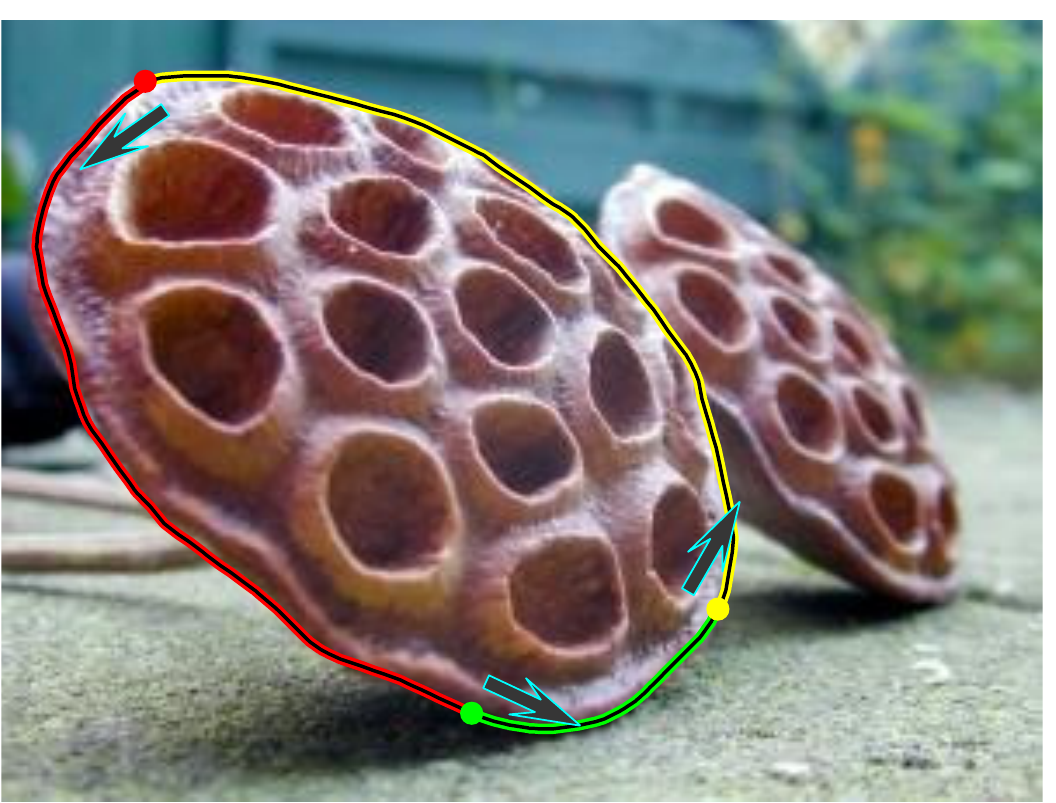}	\\
\includegraphics[width=3.4cm]{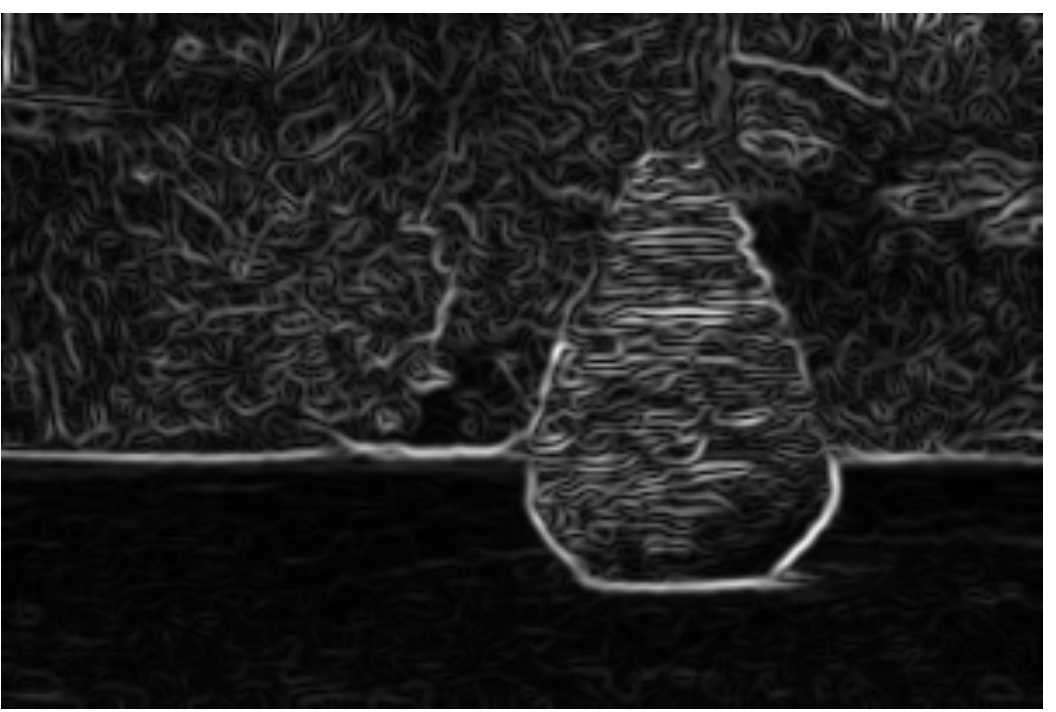}
\includegraphics[width=3.4cm]{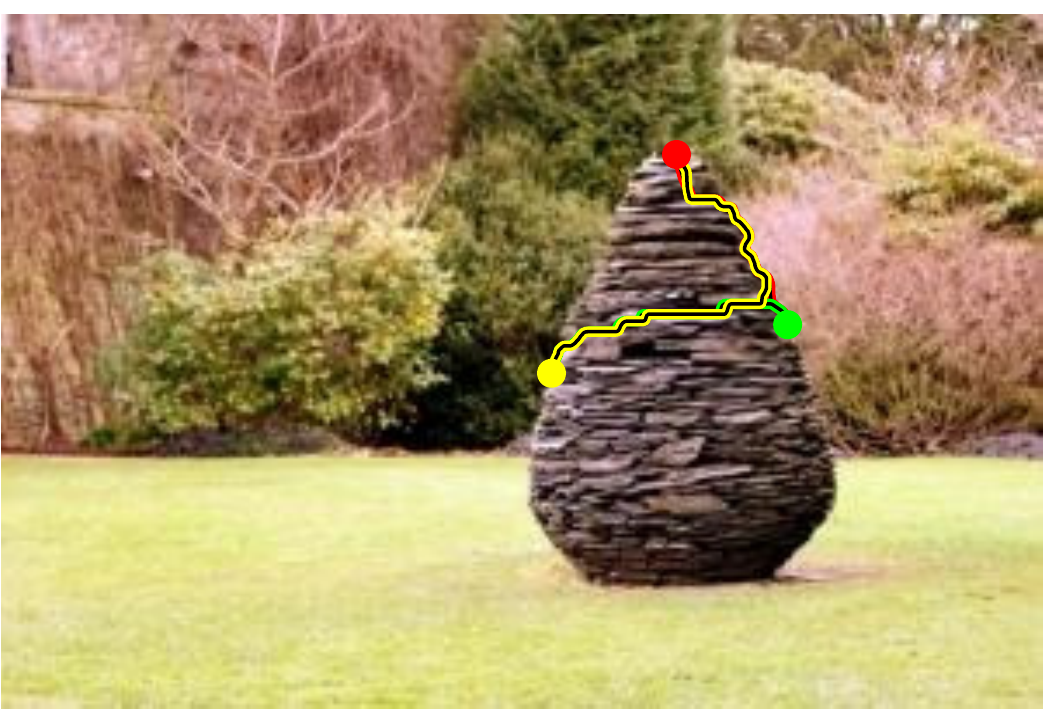}
\includegraphics[width=3.4cm]{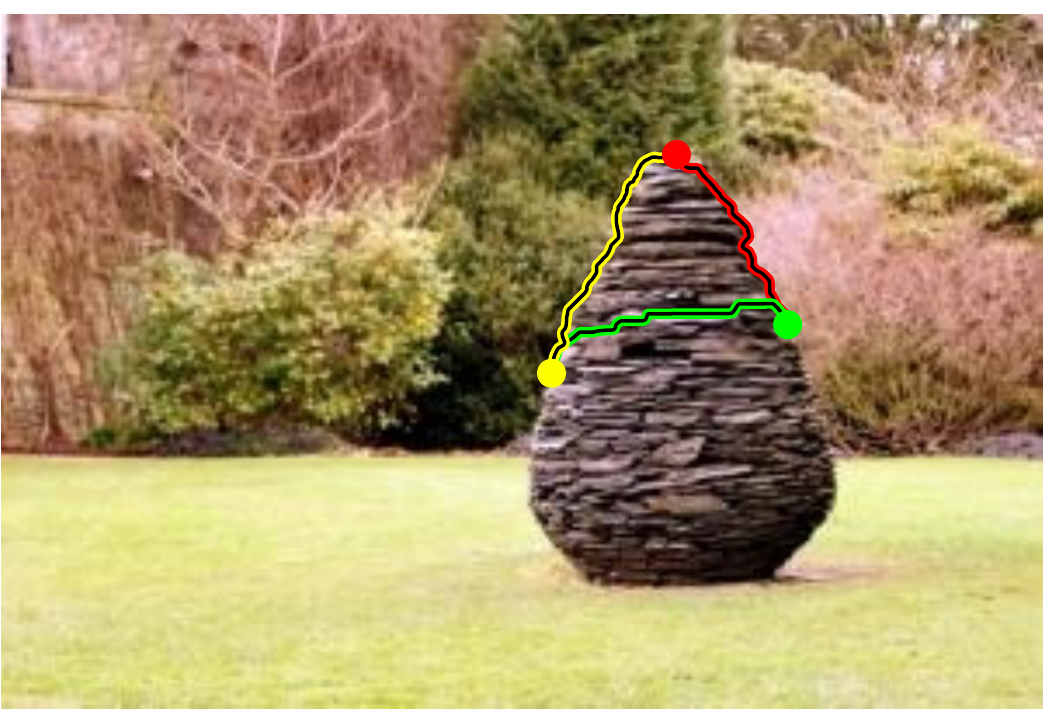}
\includegraphics[width=3.4cm]{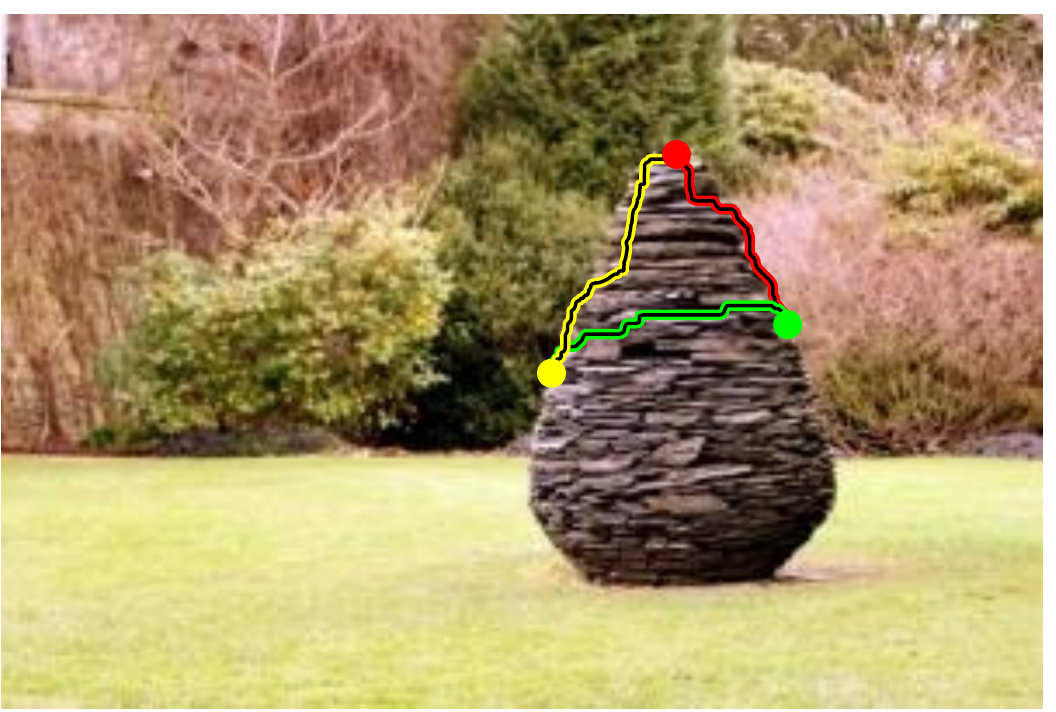}
\includegraphics[width=3.4cm]{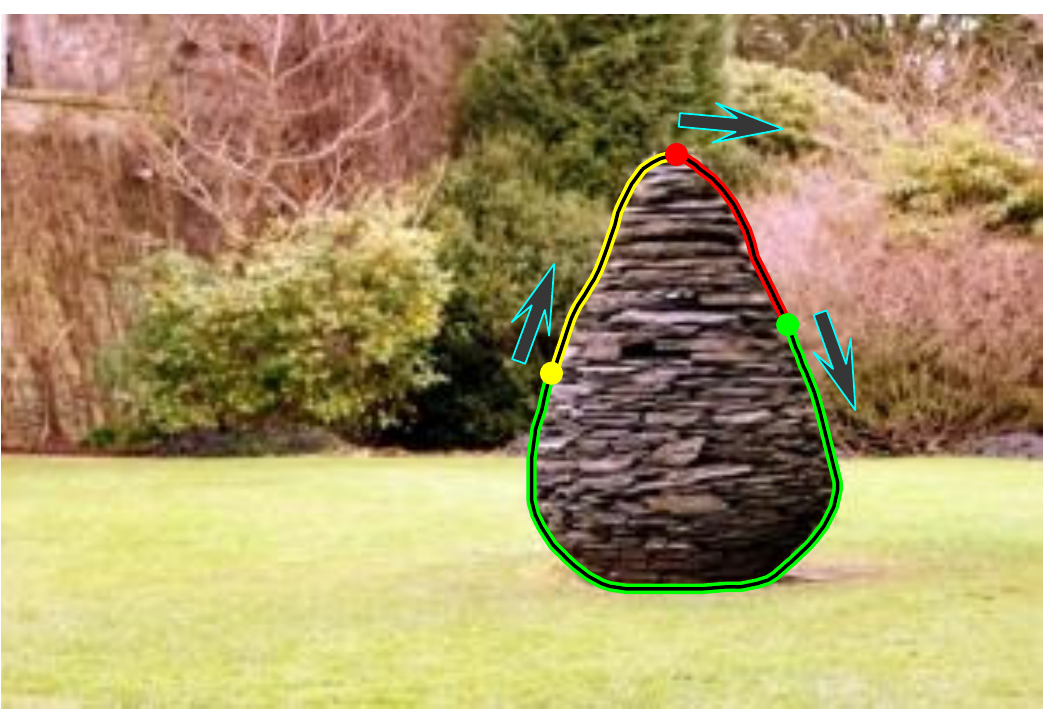}	\\
\includegraphics[width=3.4cm]{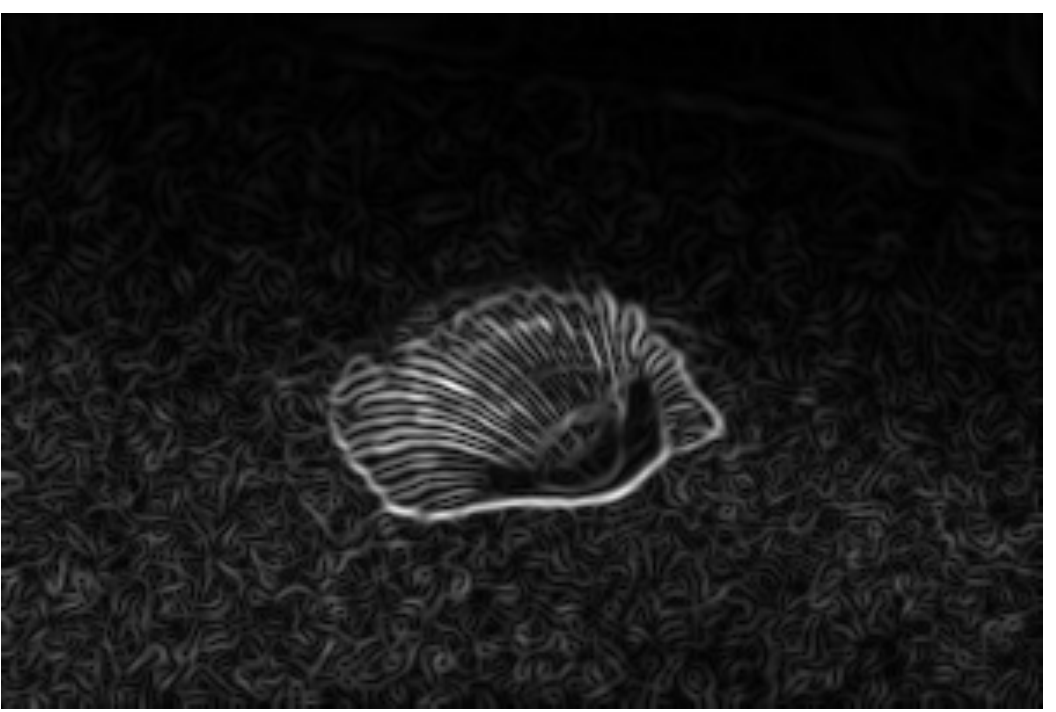}
\includegraphics[width=3.4cm]{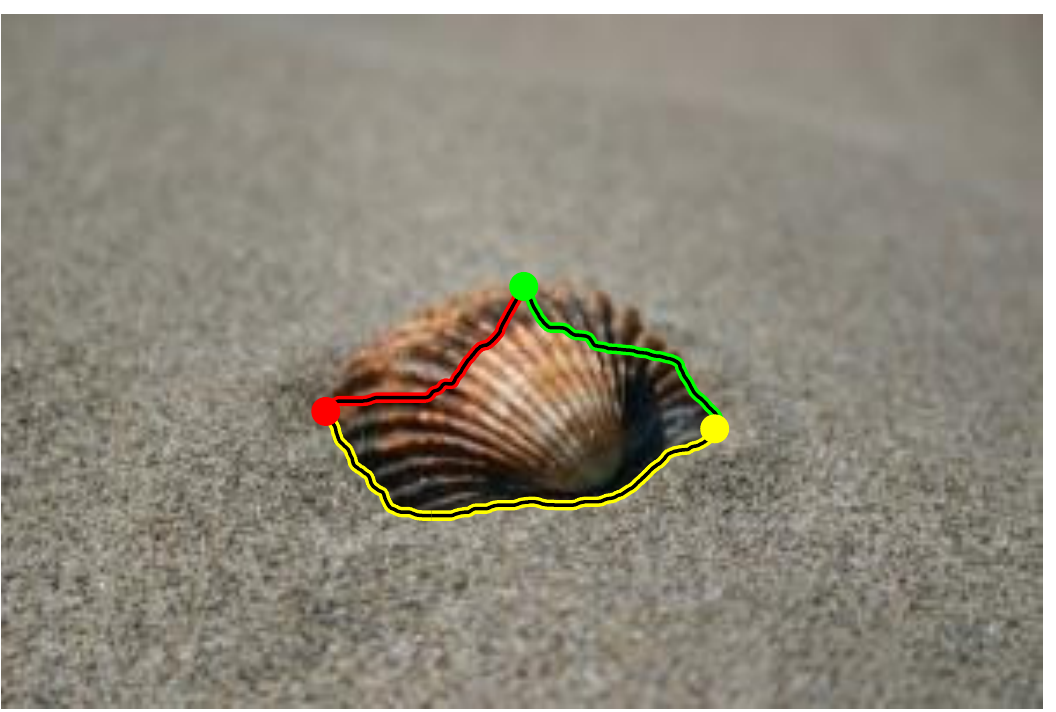}
\includegraphics[width=3.4cm]{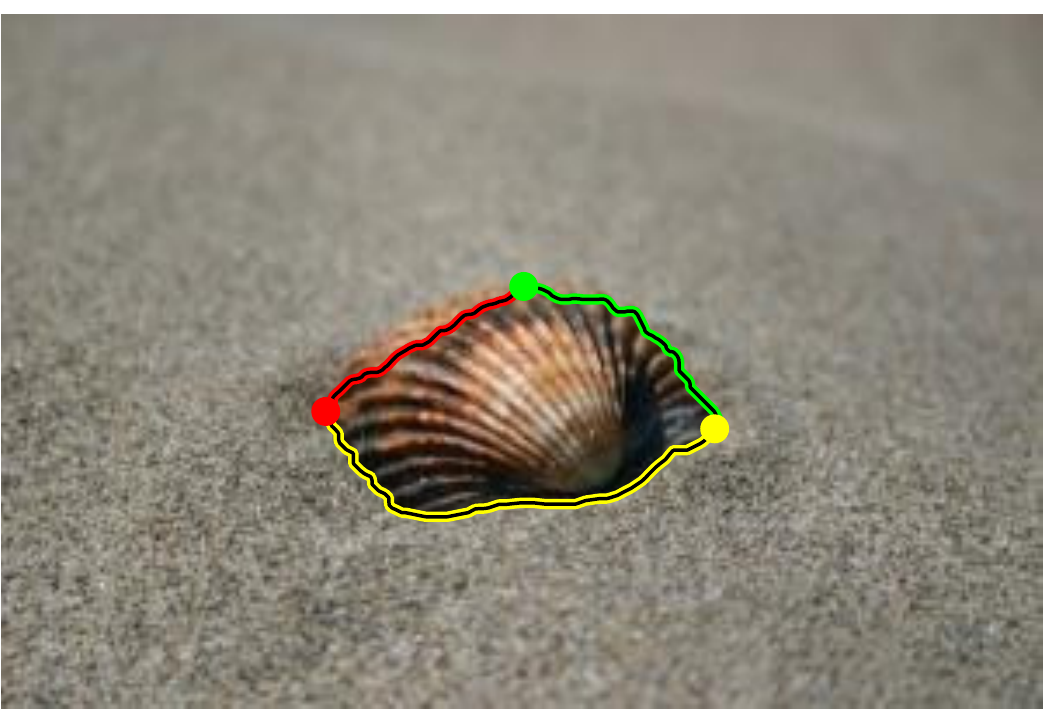}
\includegraphics[width=3.4cm]{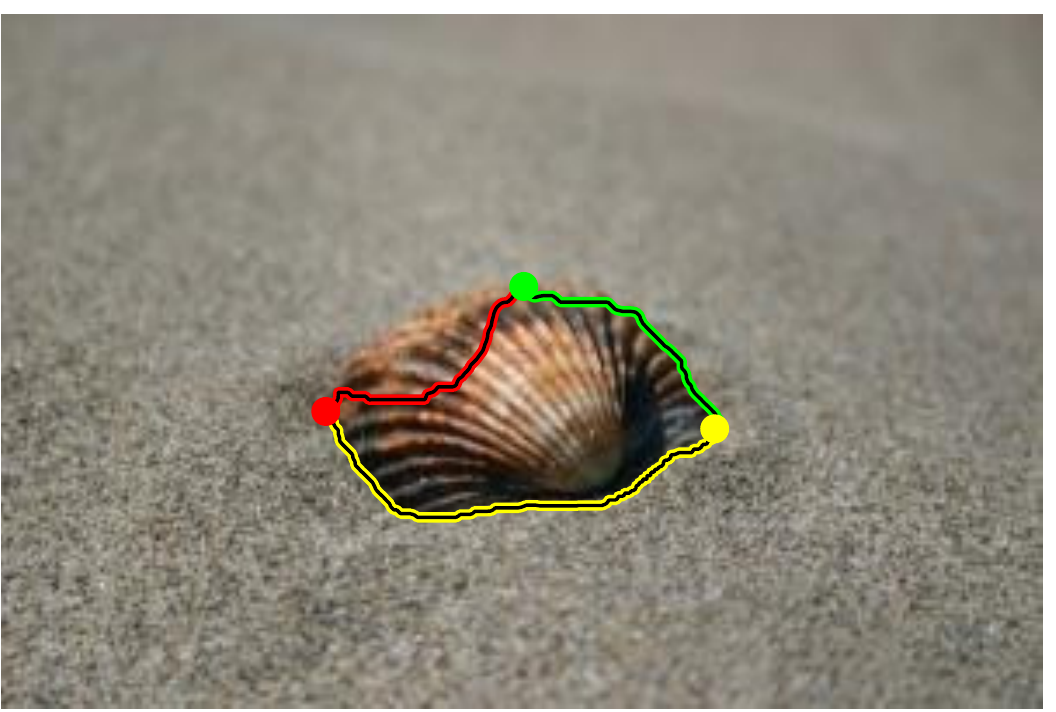}
\includegraphics[width=3.4cm]{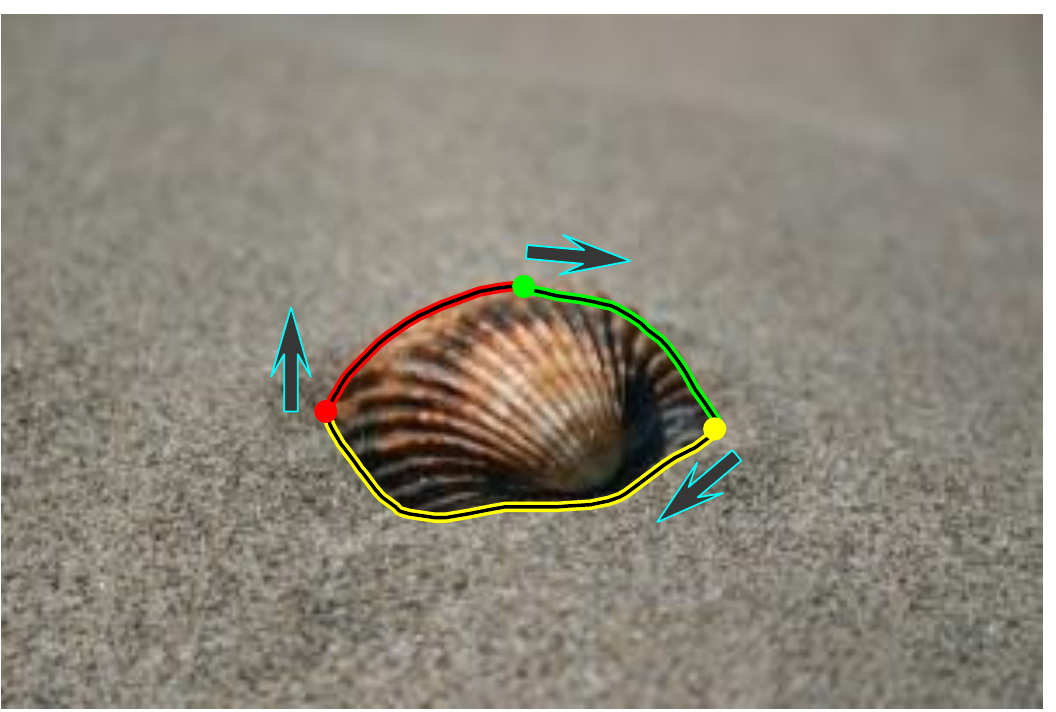}	\\
\includegraphics[width=3.4cm]{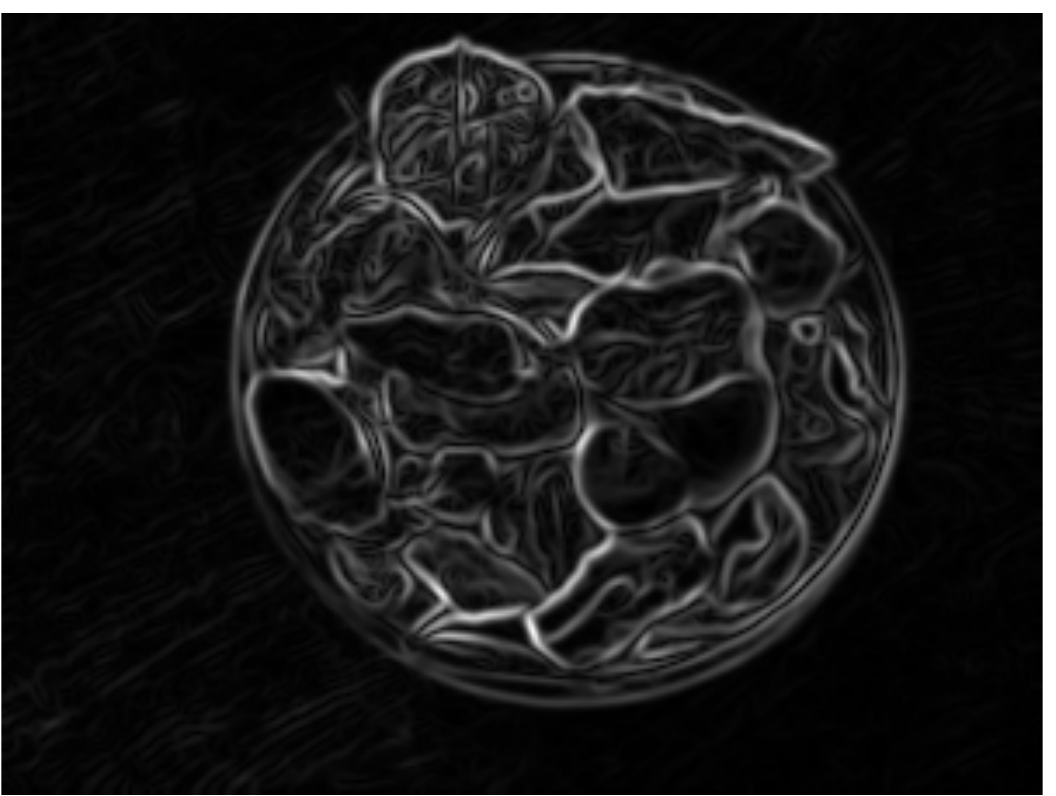}
\includegraphics[width=3.4cm]{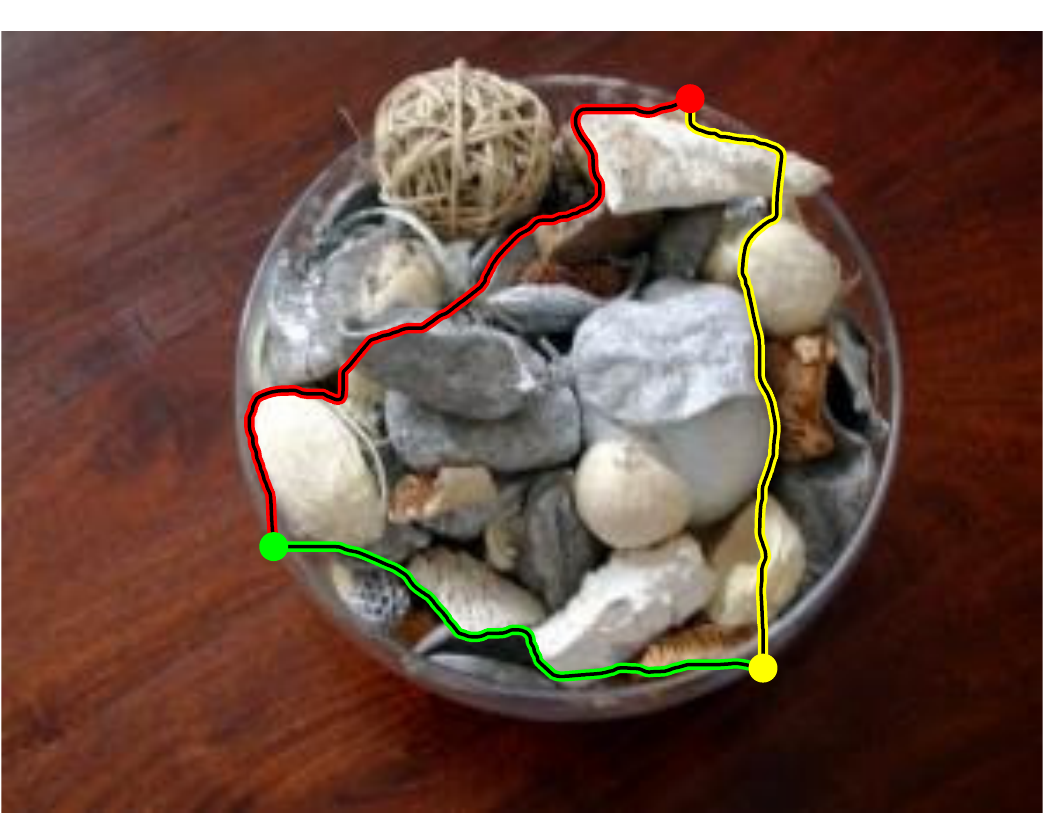}
\includegraphics[width=3.4cm]{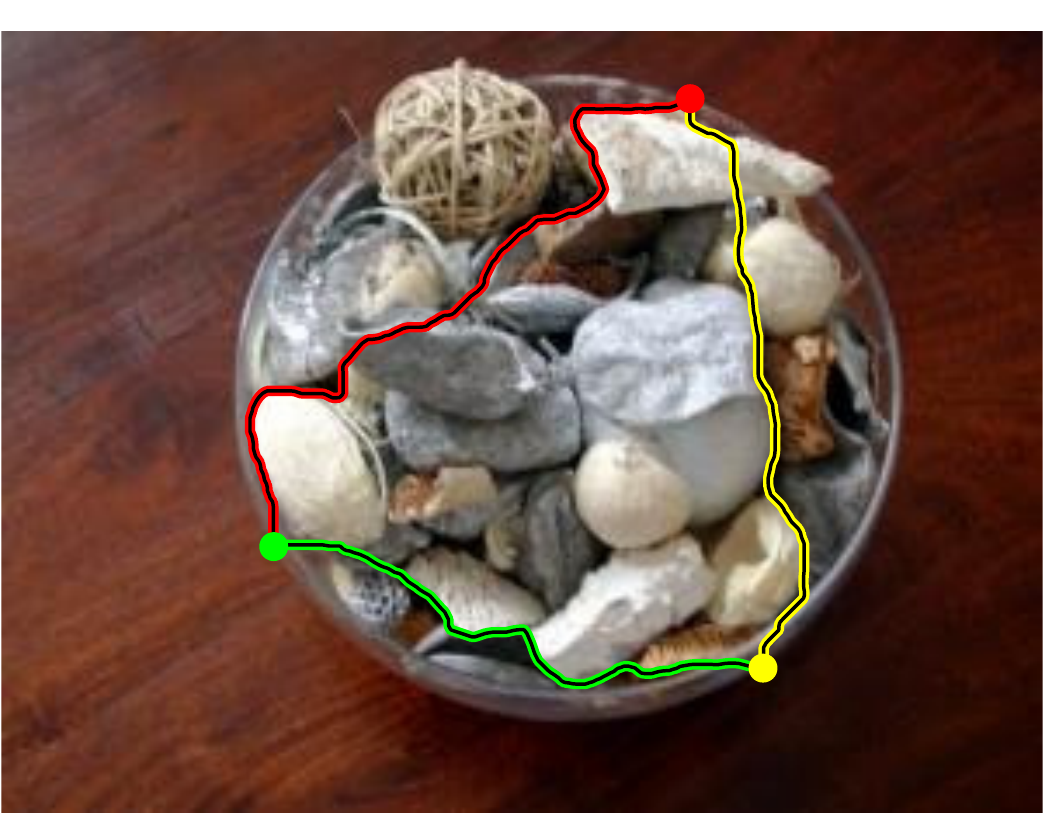}
\includegraphics[width=3.4cm]{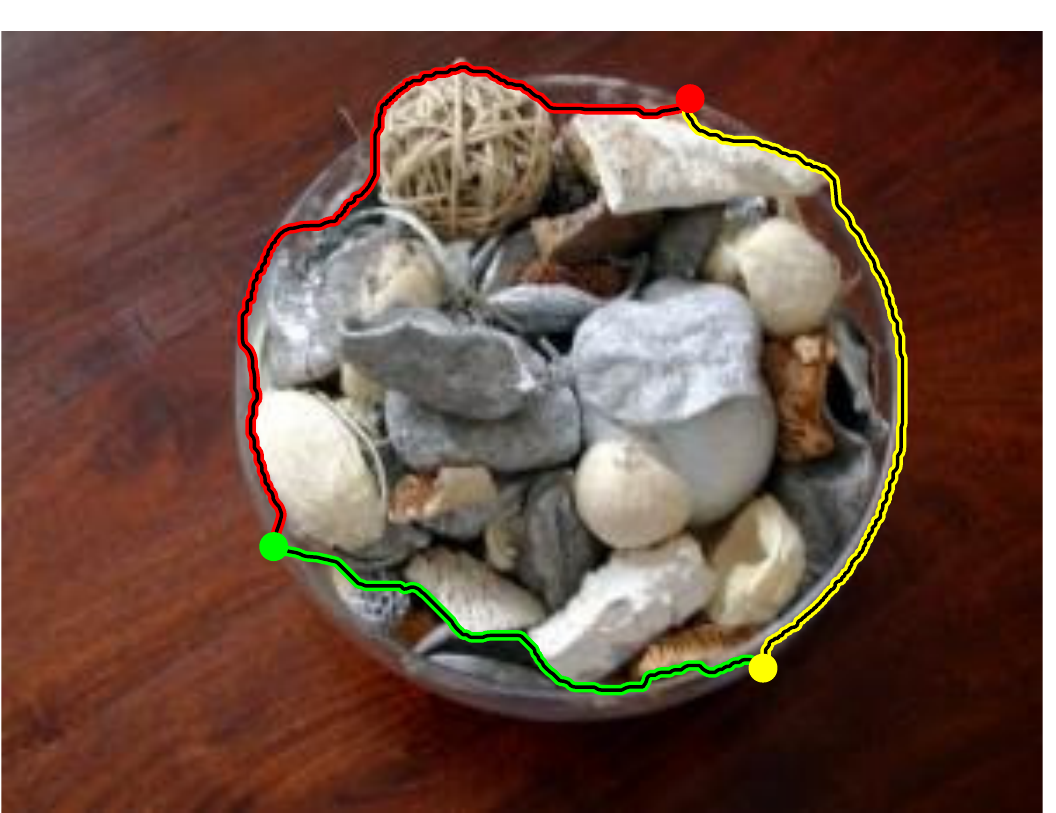}
\includegraphics[width=3.4cm]{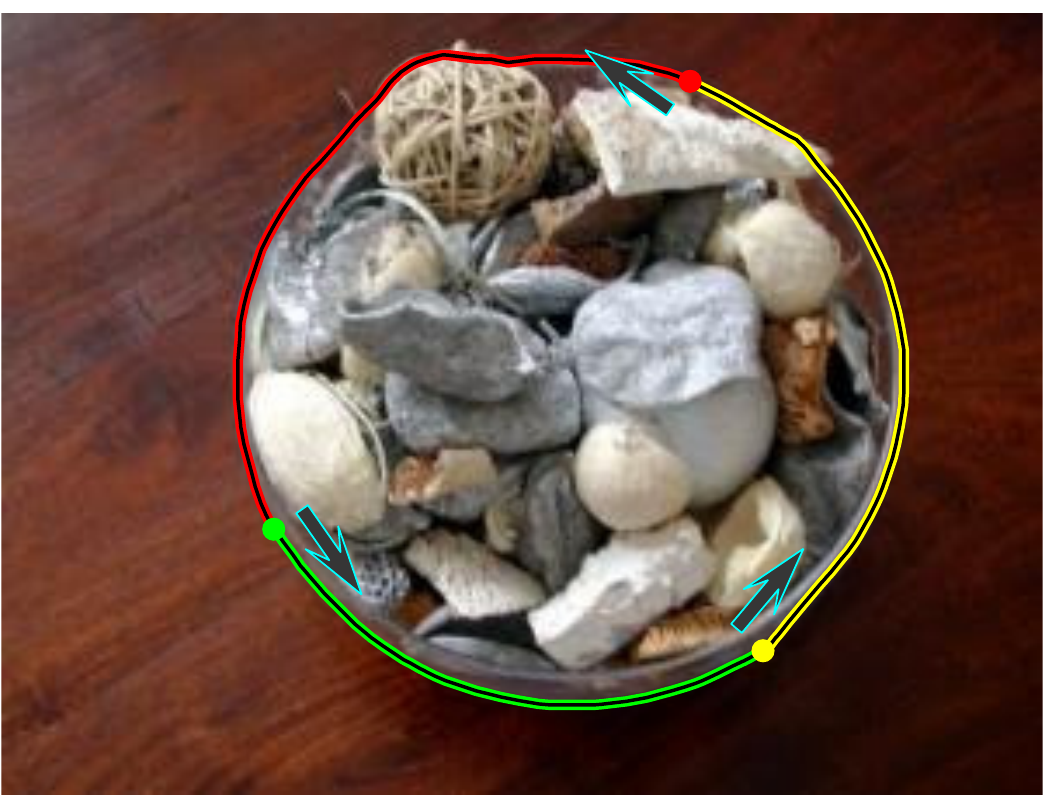}	\\
\includegraphics[width=3.4cm]{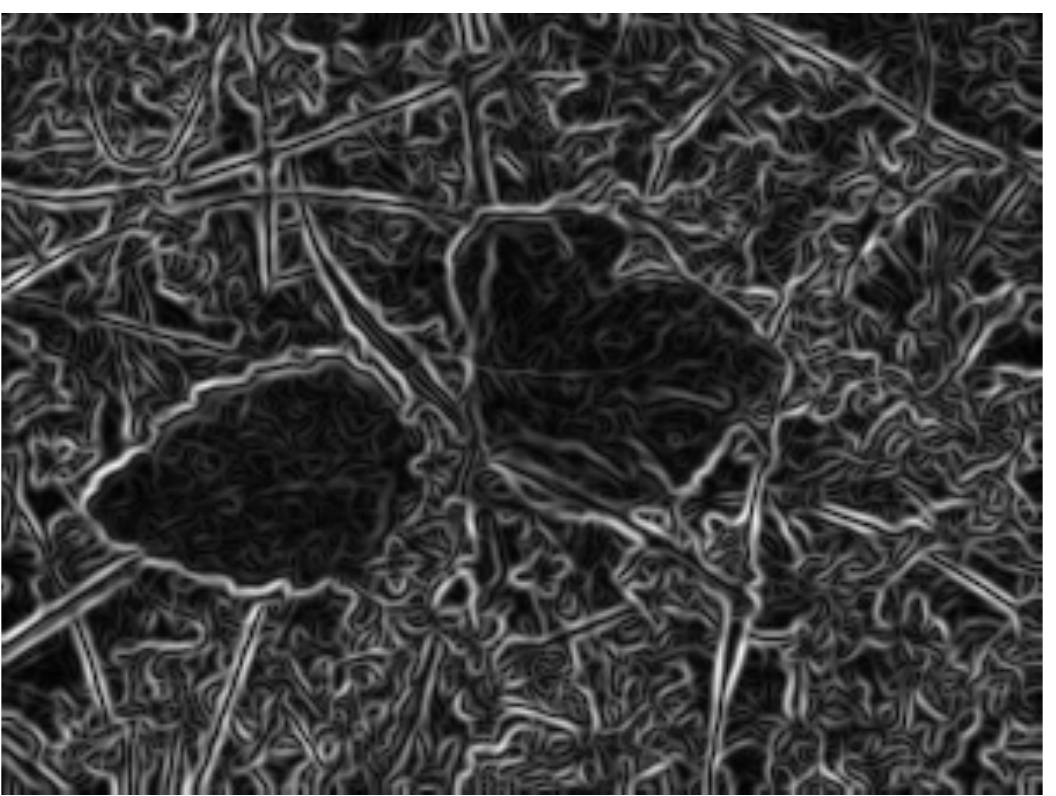}
\includegraphics[width=3.4cm]{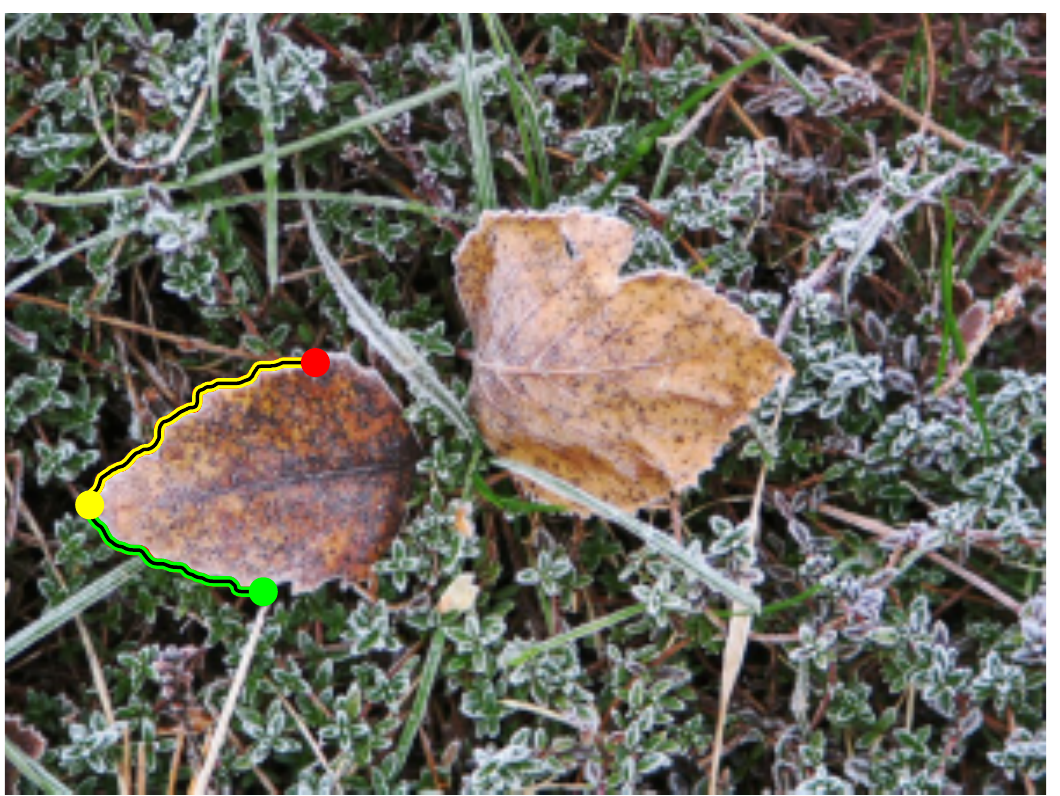}
\includegraphics[width=3.4cm]{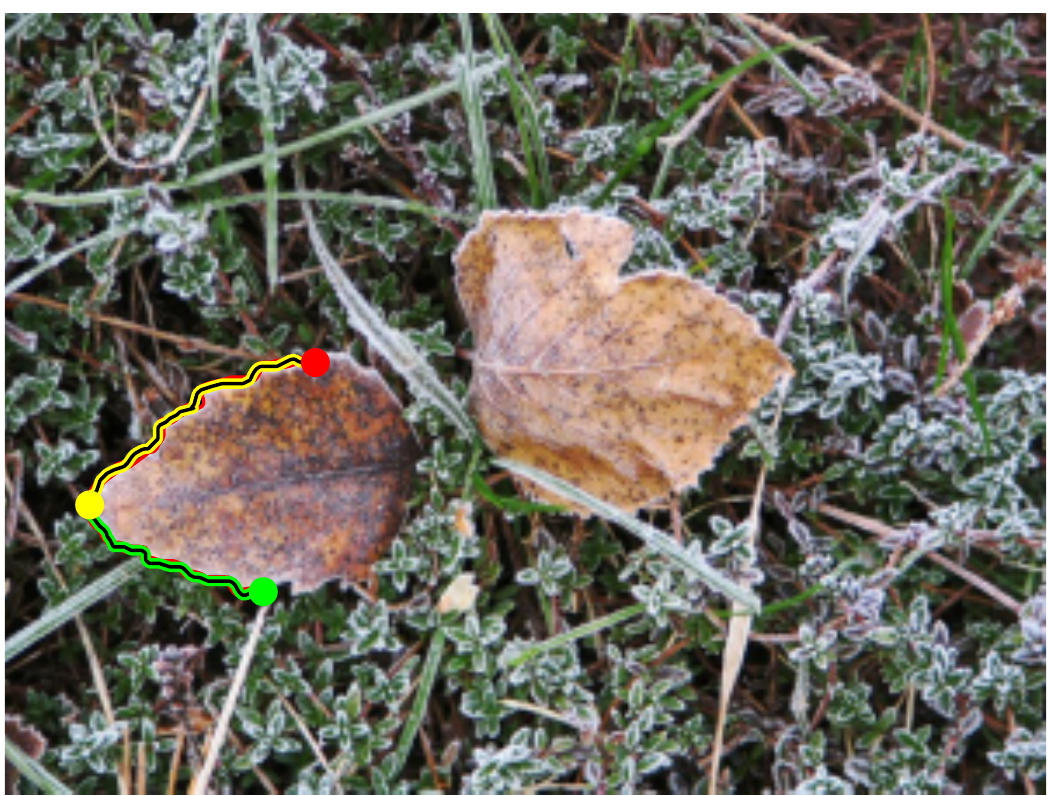}
\includegraphics[width=3.4cm]{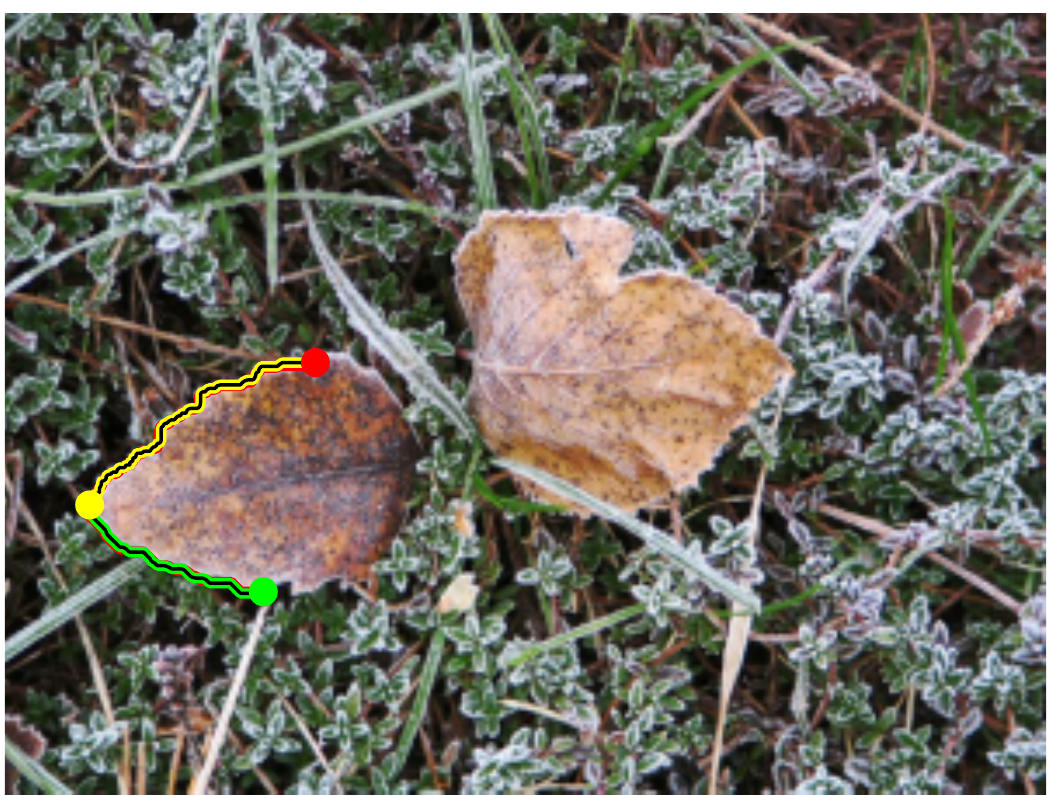}
\includegraphics[width=3.4cm]{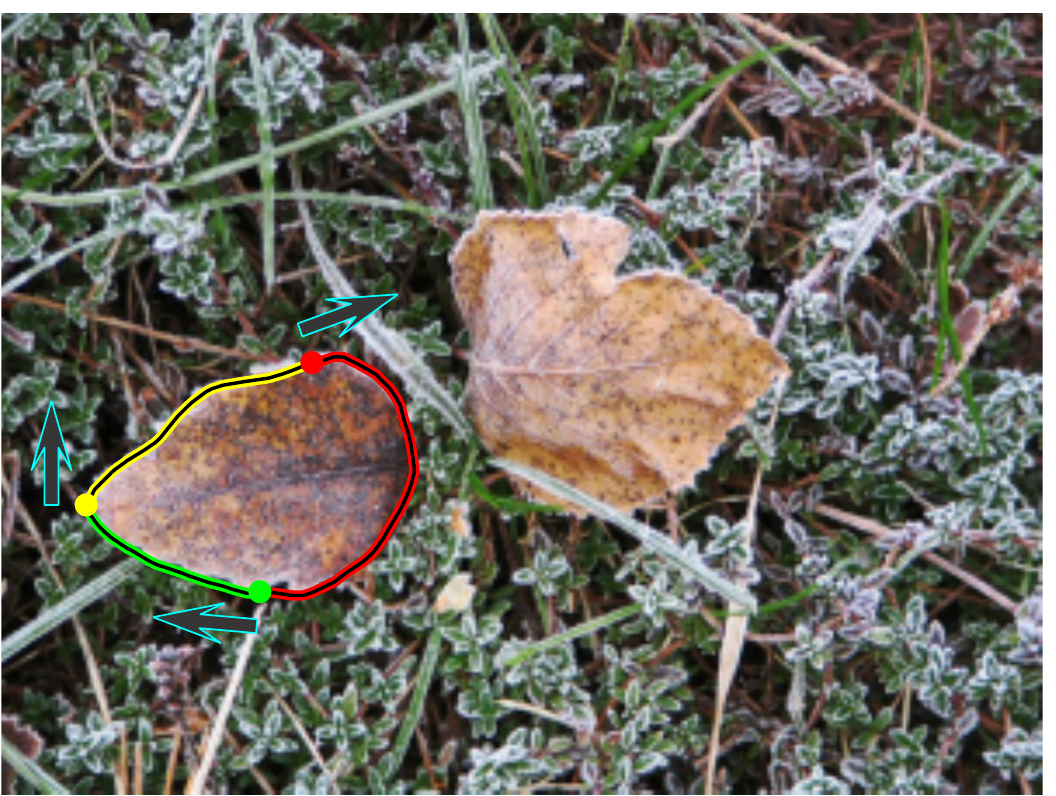}	\\	
\includegraphics[width=3.4cm]{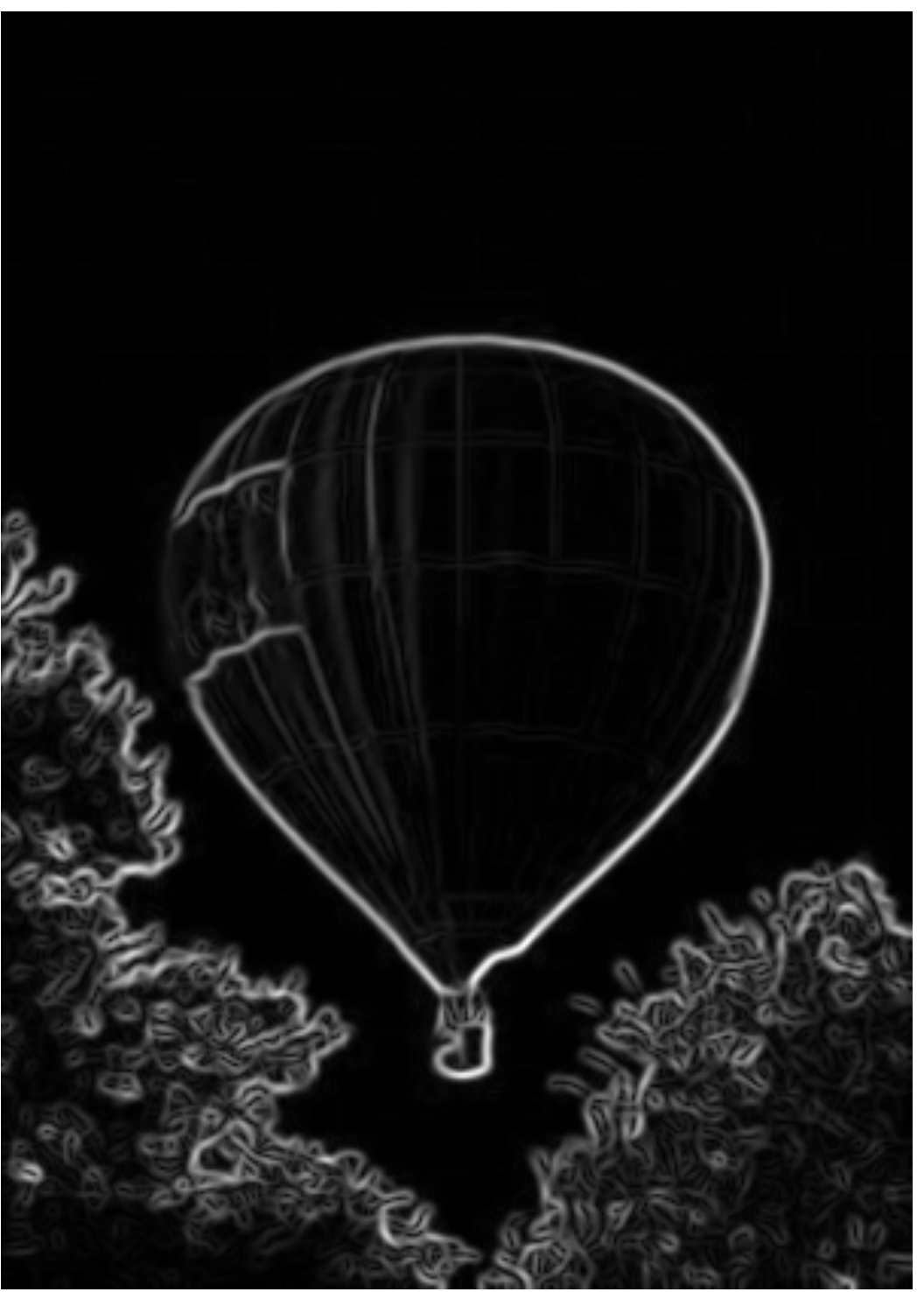}
\includegraphics[width=3.4cm]{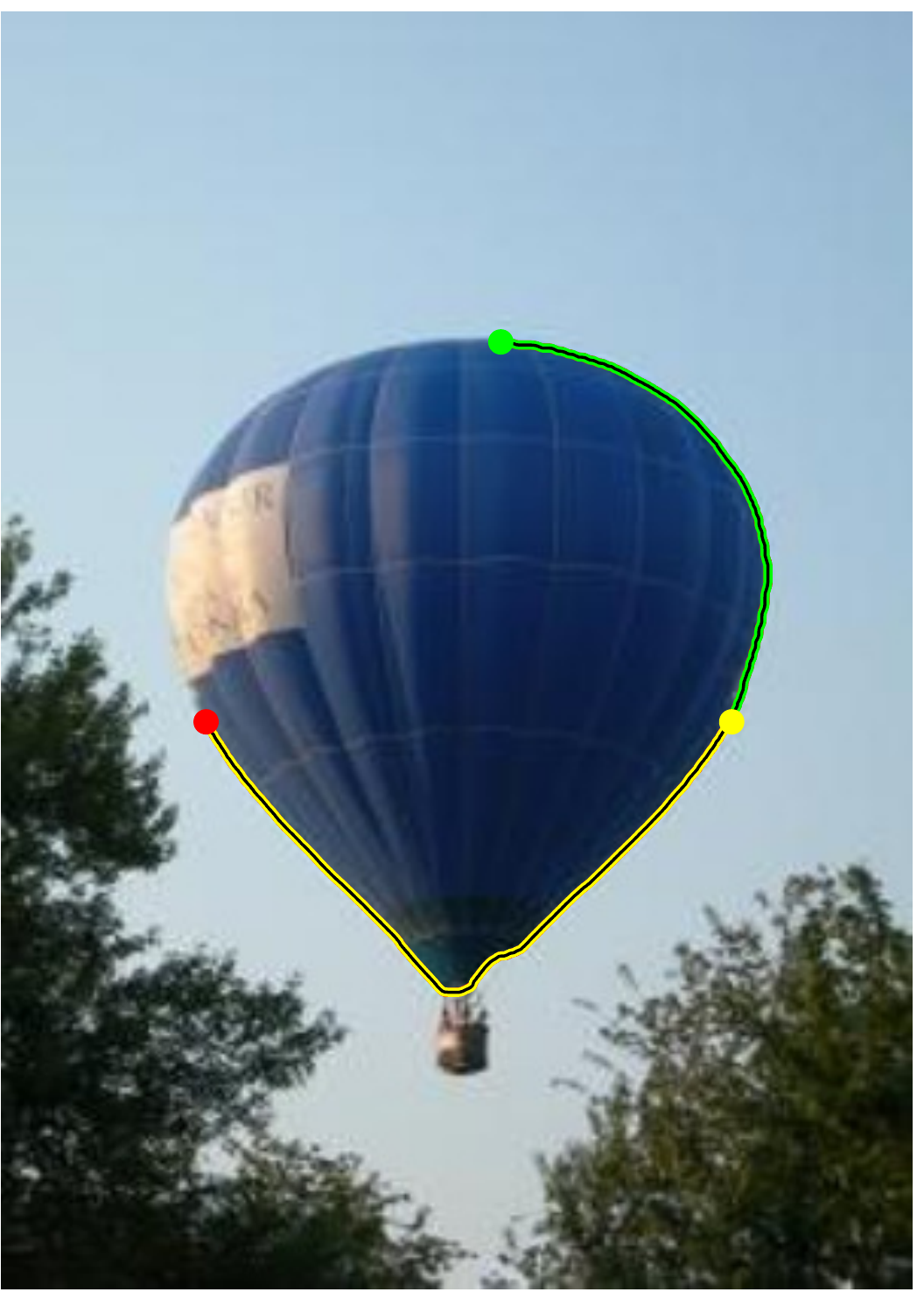}
\includegraphics[width=3.4cm]{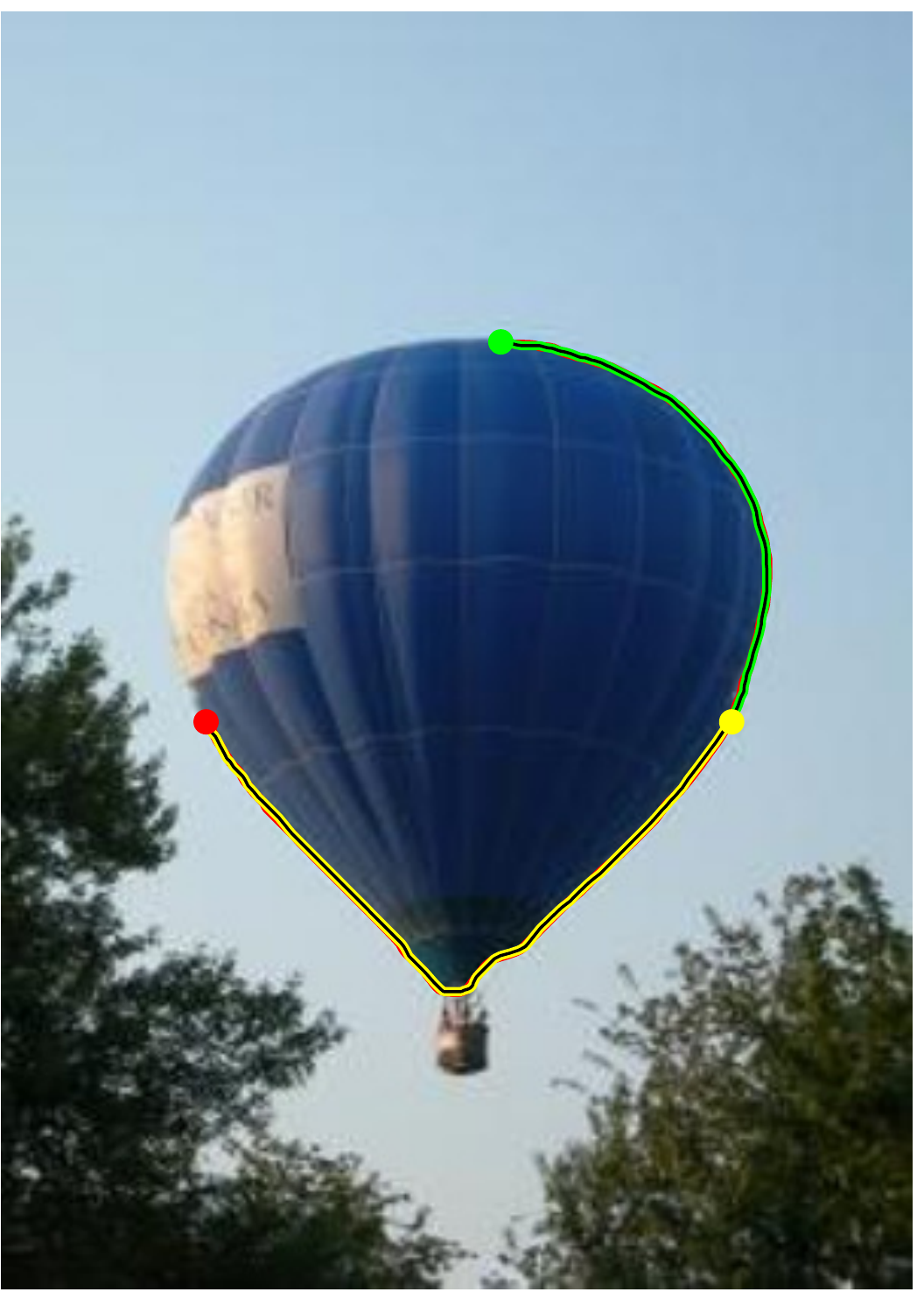}
\includegraphics[width=3.4cm]{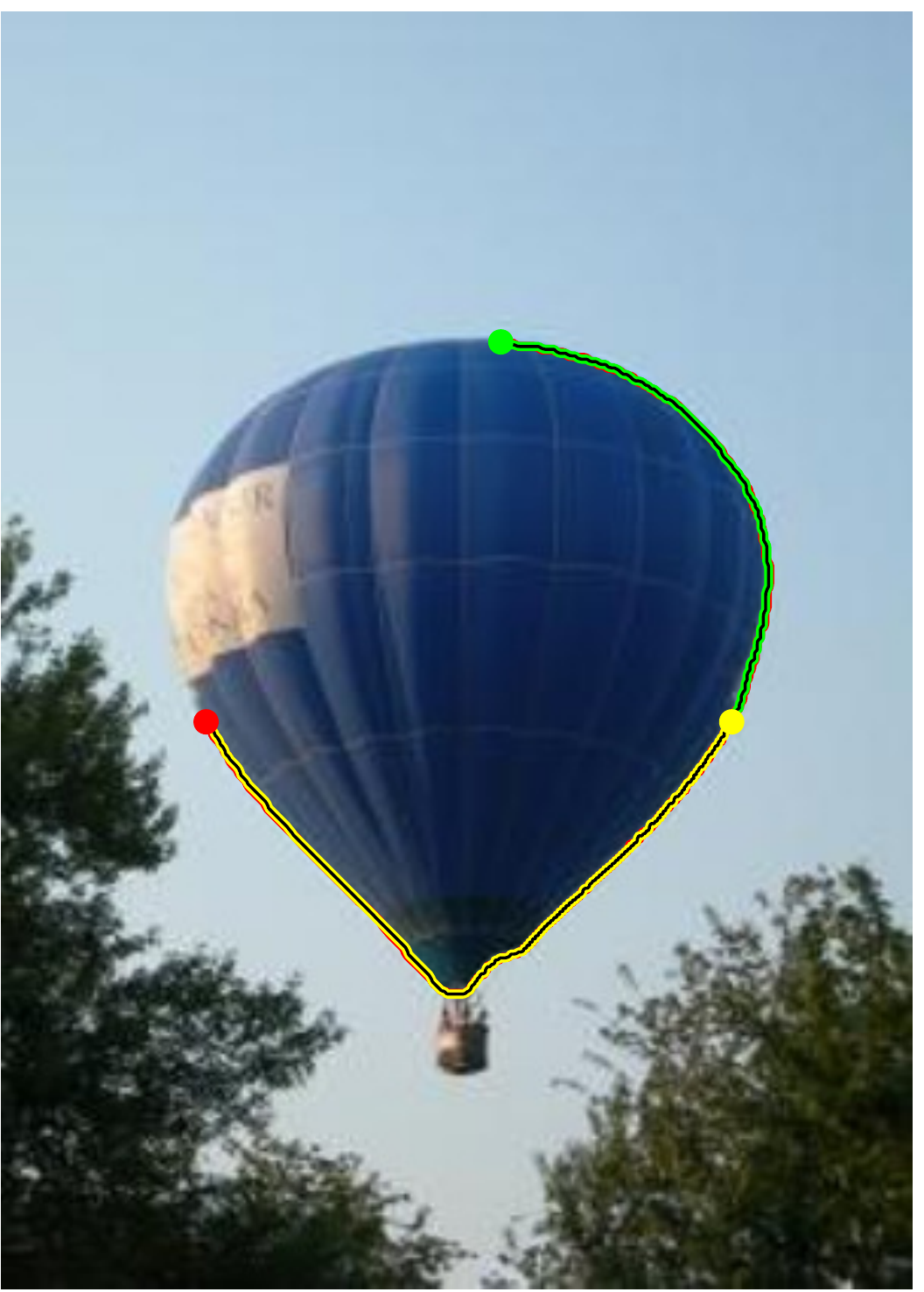}
\includegraphics[width=3.4cm]{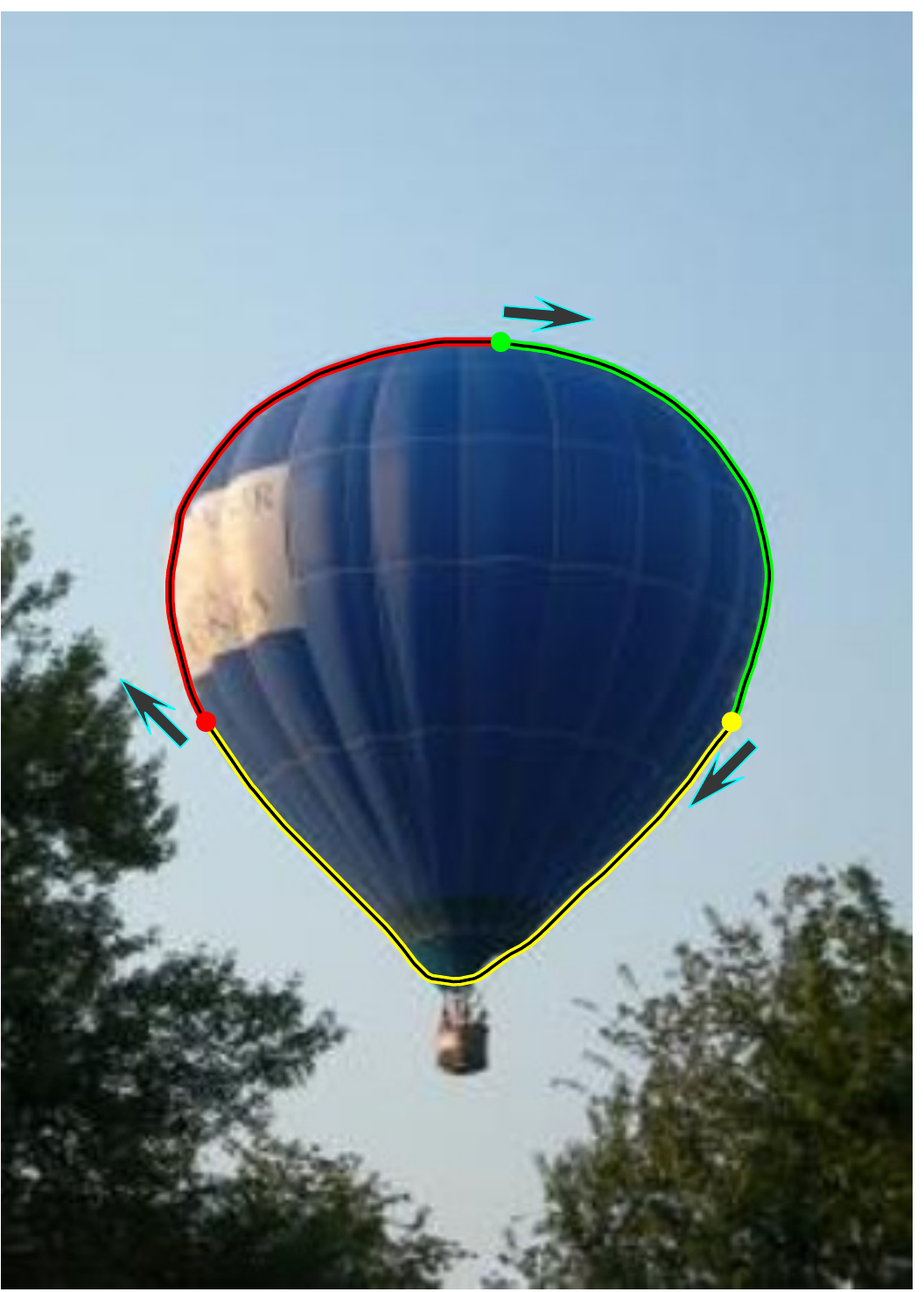}	\\	
\caption{Comparative closed contour detection results obtained by using different metrics. \textbf{Column 1} Edge saliency map. \textbf{Columns 2-5} The closed contour detection results from the IR  metric, the AR metric, the IOLR  metric and the Finsler elastica metric, respectively.}
\label{fig:ContourDetection}
\end{figure*}

\subsection{Closed Contour Detection and Image Segmentation}
\label{subsec:ContourDetectionExperiments}
Fig.~\ref{fig:ContourDetection} shows the closed contour detection results with three  prescribed physical positions using  different metrics, where each position is assigned two opposite orientations~\footnote{For the IR metric and the AR metric, only the physical positions of these orientation-lifted vertices are used.}.  In this experiment, we first  build  the collection $\chi$~\eqref{eq:set_pairs}  by the proposed contour detection procedure using the Finsler elastica metric as described  in Section \ref{subsec:ClosedContour}, where the detection results are shown in column 5. Columns 2-4 show the closed contour detection results using the IR metric, the AR metric, and the IOLR metric, respectively. The minimal paths shown in columns 2 to 4 are obtained by simply linking each pair of vertices by the respective metrics involved in $\chi$. The red, yellow, and green dots are the physical positions of the vertices $\bq_1$, $\bq_2$, and $\bq_3$, respectively. The arrows shown in column 5 indicate the  tangents of the geodesics at the corresponding positions. We assign each geodesic  the same color as its source position. In these images, most parts of the desired boundaries appear to be weak edges, which can be observed from the edge saliency maps shown in column 1. The detected contours associated to the Finsler elastica metric succeed at catching the desired boundaries due to the curvature penalization and asymmetric property. In contrast, the three Riemannian metrics without curvature penalization fail to extract the expected boundaries.  The images used in this experiment are from the Weizmann dataset.

In Fig.~\ref{fig:TwoPointsContourDetaction}, we show the closed contour detection results obtained by the proposed  method with only two given physical positions  and the corresponding orientations. One can see that the proposed method can indeed reduce the user intervention  at least for objects with smooth boundaries.

For the Finsler elastica metric $\cP$ defined in \eqref{eq:FinslerWeighted},  the curvature penalization depends on the parameter $\alpha$ ($\lambda$ is fixed to $100$). 
In Fig.~\ref{fig:ContourWeakParameter}, we show the closed contour detection results by varying $\alpha$ to demonstrate the influence of the curvature term in our approach. In  column 1, we show the closed contour detection results with suitable values of $\alpha$, say $\alpha_0$. In  columns 2 and 3, the closed contour detection results using $\alpha_0/10$ and $5\alpha_0$ are demonstrated. One can see that it could lead to shortcuts by using small values of $\alpha$ in rows 1-3 of column 2. In contrast, with a larger $\alpha$, the detected closed contour can  catch the optimal boundaries of the objects, which  supports the effect of using curvature penalization. The edge saliency maps for each image in this experiments can be found from the first column  of  Fig.~\ref{fig:ContourDetection}.

\begin{figure}[!t]
\centering
\includegraphics[width=8cm]{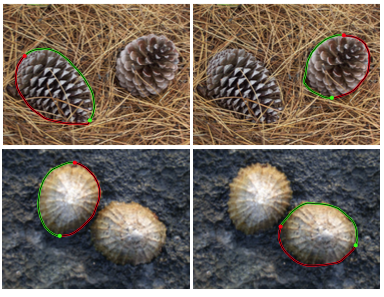}
\caption{Closed contour detection results by using only two given physical positions and the corresponding orientations.}
\label{fig:TwoPointsContourDetaction}
\end{figure}

\begin{figure}[!htb]
\centering
\includegraphics[width=8cm]{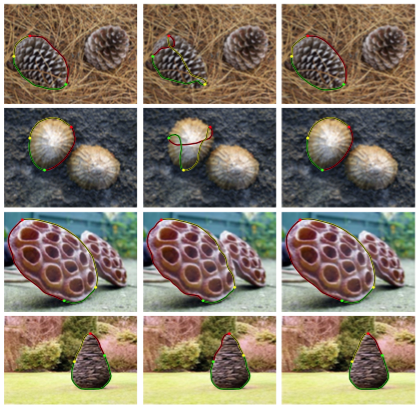}
\caption{Closed contour detection results with different values of  $\alpha$. \textbf{Column 1} Results by the suitable values of $\alpha$ and $\lambda$. \textbf{Columns 2}  Results by small values of $\alpha$. \textbf{Column 3} Results by large values of  $\alpha$.}
\label{fig:ContourWeakParameter}
\end{figure}

\subsection{Perceptual Grouping}
The perceptual grouping result on a synthetic noisy image is shown in Fig.~\ref{fig:PerceptualGrouping1}. In  Fig.~\ref{fig:PerceptualGrouping1}a, we demonstrate the original image, which consists of a set of edges. The red and blue dots with arrows are the orientation-lifted points provided by the user as initializations, where the red dot is the selected initial physical position.  Fig.~\ref{fig:PerceptualGrouping1}b shows the perceptual grouping results of the proposed method. The identified orientation-lifted points in the set $\cD_1$ are denoted by red dots with arrows.  The red curves that links the ordered orientation-lifted vertices in $\cD_1$ indicate the expected closed curves. 

Fig.~\ref{fig:PerceptualGrouping2} illustrates the capacity of the proposed  method to handle the  grouping problem with spurious  points. Different initializations are shown in Figs.~\ref{fig:PerceptualGrouping2}a and \ref{fig:PerceptualGrouping2}c, where the red dots  are the selected initial physical positions. Figs.~\ref{fig:PerceptualGrouping2}b and \ref{fig:PerceptualGrouping2}d are the grouping results. 
The red curves indicate the detected closed curves. 

The proposed perceptual grouping method can detect multiple  closed curves by  specifying the number of expected curves. In Fig.~\ref{fig:PerceptualGrouping3}, three  curves are detected by the proposed method.   Column 1 shows different initializations,  where the red dots are the selected initial physical positions. Columns 2 to 4 demonstrate the intermediate grouping results that correspond to different initializations shown in column 1.
The final perceptual grouping results  are shown in column 5. One can claim that our algorithm indeed has the ability to detect curves intersecting with one another. The vertices shown  in columns 2 to 4, denoted by  red dots with arrows, make up the corresponding collections $\cD_1$ to $\cD_4$.

\begin{figure}[!t]
\centering
\includegraphics[width=8cm]{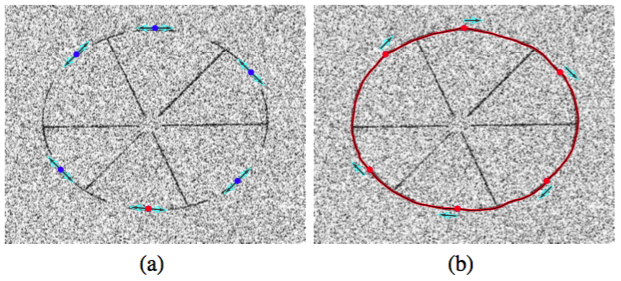}	
\caption{(\textbf{a}) Initialization. The red and blue dots indicate the physical positions, where the red dot is the initial position. (\textbf{b}) Perceptual grouping result.  The arrows denote the tangents for each physical position.}
\label{fig:PerceptualGrouping1}
\end{figure}

\begin{figure*}[!htb]
\centering
\includegraphics[width=17cm]{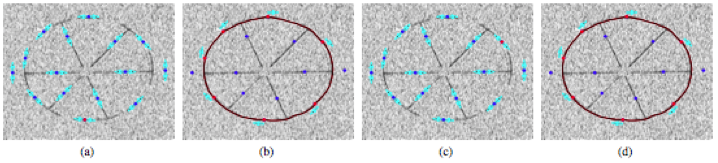}
\caption{Perceptual grouping results.  (\textbf{a}) Initialization 1. Red dot is the selected initial position. (\textbf{b}) Perceptual grouping result for initialization 1.  (\textbf{c})  Initialization 2. (\textbf{d}) Perceptual grouping result for initialization 2.}
\label{fig:PerceptualGrouping2}
\end{figure*}

\begin{figure*}[!t]
\centering
\includegraphics[width =17cm]{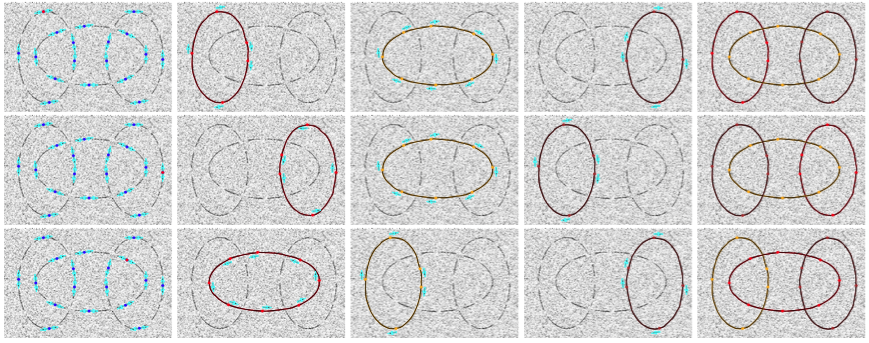}
\caption{Perceptual grouping results by the proposed method where three groups are identified.  \textbf{Column 1}  Initializations.  The red dots denote the selected initial physical positions. \textbf{Columns 2-4} Intermediate  grouping results for the corresponding initializations. \textbf{Column 5} Final grouping results.}
\label{fig:PerceptualGrouping3}
\end{figure*}

\begin{figure*}[!t]
\centering	
\includegraphics[width=17cm]{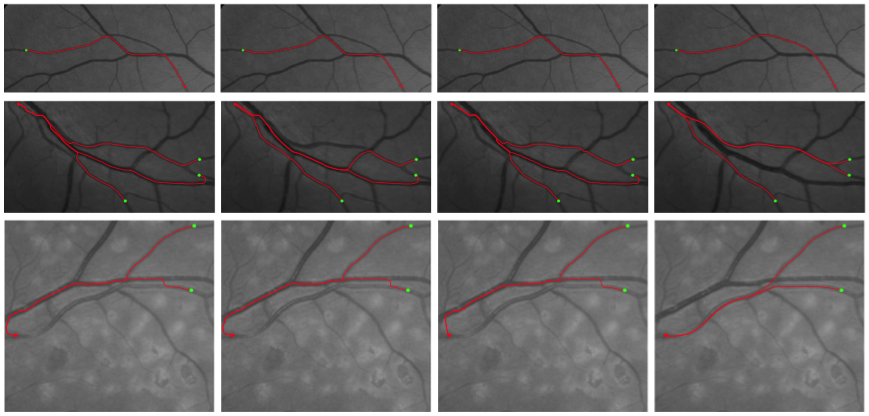}
\caption{Comparative blood vessel extraction results on retinal images. \textbf{Columns 1-4}  The extracted minimal paths using the IR metric, the ARLR metric, the IOLR metric, and the proposed Finsler elastica metric, respectively.}
\label{fig:Retina}
\end{figure*}

\begin{figure*}[!t]
\centering
\includegraphics[width=17cm]{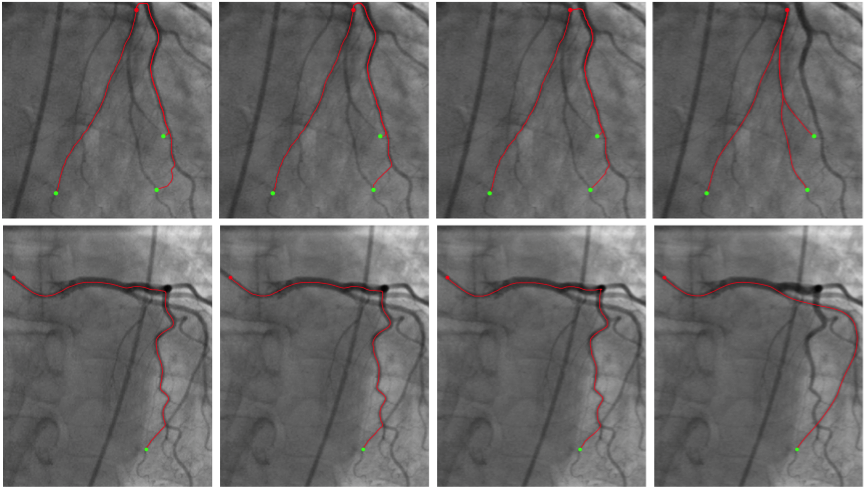}
\caption{Comparative blood vessel extraction results on fluoroscopy images. \textbf{Columns 1-4}  The extracted minimal paths using the IR metric, the ARLR metric, the IOLR metric, and the proposed Finsler elastica metric, respectively.}
\label{fig:Fluoroscopy}
\end{figure*}

\subsection{Tubular Structure Extraction}

In this section, we show the tubular structure  extraction results, where the  source and end positions are indicated by red and green dots, respectively.  In Figs.~\ref{fig:Retina} to \ref{fig:RetinaBlurred}, only the physical positions are provided manually. The corresponding orientations of these physical positions are computed automatically by~\eqref{eq:OptimalOrientation}. In Fig.~\ref{fig:Road}, the corresponding tangents for the physical positions are provided manually because high noise could lead to  failure of  the optimal orientation detection  using \eqref{eq:OptimalOrientation}.
We use the extraction strategy described in Section \ref{subsec:tubular_structure} for the Finsler elastica metric.

In  Fig.~\ref{fig:Retina}, the retinal vessels are extracted by the IR metric, the ARLR metric, the IOLR metric and the Finsler elastica metric as shown in columns 1 to 4, respectively.  In columns 1 to 3,  the minimal paths suffer from the short branches combination problem. In other words, these paths pass through some unexpected vessel segments.  In contrast, the minimal paths obtained by the proposed metric can determine correct combinations of  vessel branches.  

Similar extraction results are observed in Fig.~\ref{fig:Fluoroscopy}. Again, the short branches combination occurs in columns 1 to 3,  where the  minimal paths in these columns are extracted by the IR metric, the ARLR metric and the IOLR metric, respectively. Instead,   the Finsler elastica minimal paths can follow the correct vessel segments as shown in column 4.

In Fig.~\ref{fig:AV_Retina}, we present  the extraction results of the retinal artery centerlines   in three patches of retinal images\footnote{Many thanks to Dr. Jiong Zhang to provide us these images.}.  The centerline of a retinal artery usually appears as a smooth curve. In column 1, we show the retinal artery-vein ground truth maps, where the red and blue regions indicate the arteries and  veins, respectively. Note that  the small vessels have been removed from  the ground truth maps.  Columns 2 to 5 show the centerlines  extraction results  by using the IR metric, the ARLR metric, the IOLR metric and the Finsler elastica metric, respectively.  One can see that the minimal paths demonstrated in columns 2 to 4 pass through  the wrong vessels due to the low gray-level contrast of the retinal arteries. The proposed model can obtain the correct artery centerlines as shown in column 5, thanks to the curvature penalization.

\begin{figure*}[!t]
\centering	
\includegraphics[width=17cm]{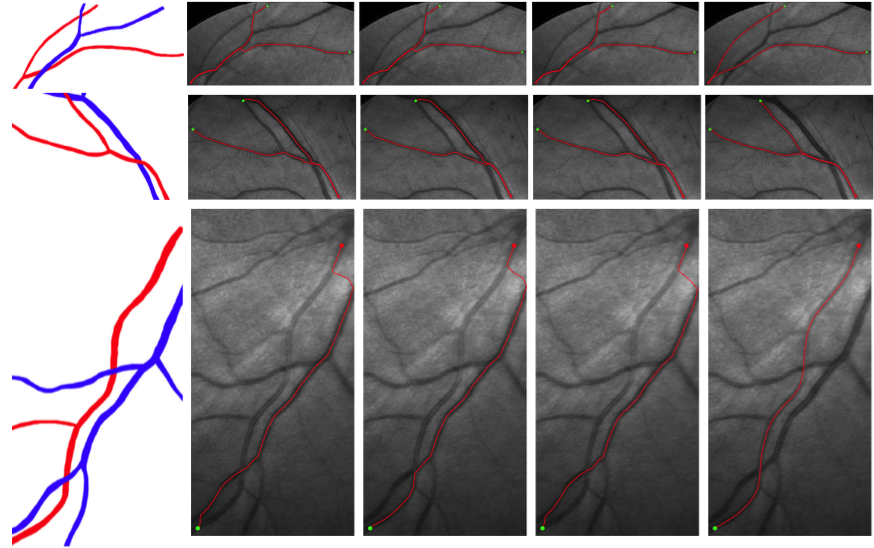}
\caption{Comparative blood vessels extraction results on retinal images. \textbf{Column 1} The retinal artery-vein vessels ground truth maps. \textbf{Columns 2-5}  The extracted minimal paths by the IR metric, the ARLR metric, the IOLR metric, and the  Finsler elastica metric, respectively.}
\label{fig:AV_Retina}
\end{figure*}

\begin{figure*}[!t]
\centering
\includegraphics[width=17cm]{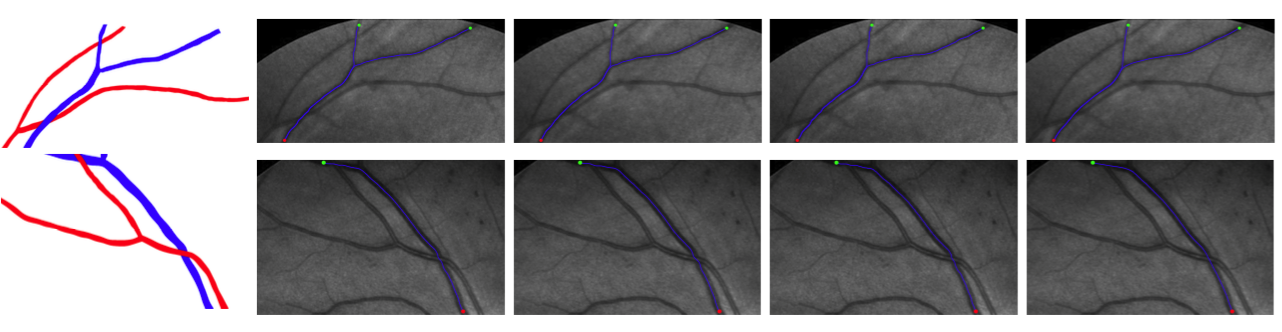}
\caption{Retinal veins extraction results from different metrics. \textbf{Column 1} The retinal artery-vein vessels ground truth maps. \textbf{Columns 2-5}  The extracted minimal paths (blue curves) by the IR metric, the ARLR metric, the IOLR metric, and the  Finsler elastica metric, respectively.}
\label{fig:AV_Veins}
\end{figure*}

\begin{figure*}[!t]
\centering	
\includegraphics[width=17cm]{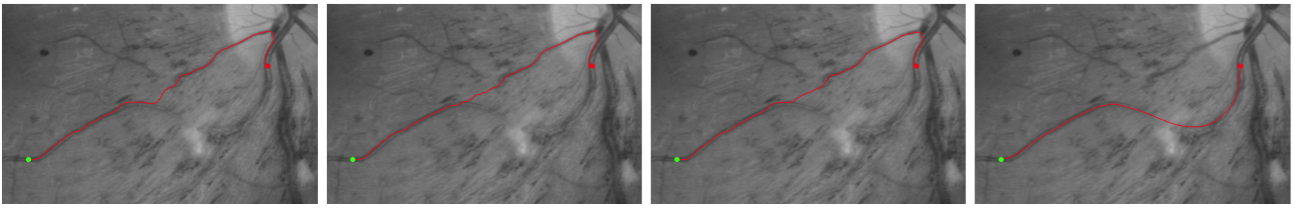}
\caption{Comparative blood vessel extraction results on a blurred retinal image. \textbf{Columns 1-4}  The extracted minimal paths by  the IR metric, the ARLR metric, the IOLR metric, and the  Finsler elastica metric, respectively.}
\label{fig:RetinaBlurred}
\end{figure*}

\begin{figure*}[!t]
\centering
\includegraphics[width=17cm]{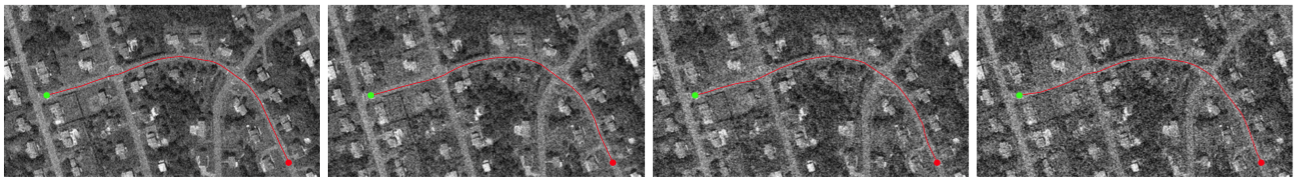}	
\caption{ Roads extraction results by the proposed Finsler elastica metric in an aerial image blurred by Gaussian noise.}
\label{fig:Road}
\end{figure*}

In Fig.~\ref{fig:AV_Veins}, we demonstrate the retinal vein extraction results (blue curves) in the same patches which are used in rows 1-2 of Fig.~\ref{fig:AV_Retina}. The extracted minimal paths (blue curves) are shown in columns 2 to 5 by using  the IR metric, the ARLR metric, the IOLR metric and the  Finsler elastica metric, respectively. From this figure, we can see that all of the extracted minimal paths can successfully follow the retinal veins as indicated by red regions in the artery-vein ground truth maps shown in column 1. 

In  Fig.~\ref{fig:RetinaBlurred}, the vessel extraction results on a patch of a retinal image are demonstrated, where the vessels are blurred by other tissues. The minimal paths  using the three Riemannian metrics fail to extract the desired vessel as shown in columns 1 to 3. In contrast, the Finsler elastica minimal paths can successfully delineate  the targeted vessel as shown in column 4,  thanks to the curvature penalization.

In Fig. \ref{fig:Road}, we show the road segmentation results on an aerial image by the proposed Finsler elastica metric. The road images are blurred by  Gaussian noise with different variances. One can claim that our method is able to obtain smooth and accurate minimal paths on noisy images.

\section{Conclusion}
 \label{sec:Conclusion}
The core contribution of this paper lies at the introduction of  the curvature penalization to the Eikonal PDE-based minimal path framework, which is accomplished by  establishing the  connection between the Euler elastica bending energy and the geodesic energy via a family of orientation-lifted Finsler elastica metrics. Solving the Eikonal PDE with respect to the Finsler elastica metric, the minimal path model thus can determine globally minimizing curves that blend the benefits of the orientation lifting and the curvature penalization. Given a set of the user-provided orientation-lifted points, we have demonstrated the ability of the proposed model for interactive image segmentation, perceptual grouping and tubular structure extraction. The experimental results on both synthetic and real images demonstrate the advantages of the Finsler elastica minimal paths approach.

\begin{acknowledgements}
The authors would like to thank all the anonymous reviewers for their detailed remarks that helped us improve the presentation of this paper. This work was partially supported by ANR grant NS-LBR,  ANR-13-JS01-0003-01.	
\end{acknowledgements}

\section*{Appendix A: Fast Marching Method}

\begin{appendix}
\label{AppendixA}
Basically, the fast marching method introduces a boolean map $b:Z\to\{\text{\emph{Trial}}, \text{\emph{Accepted}}\}$ and tags each grid point $\bx\in Z$ either \emph{Trial} or \emph{Accepted}. The \emph{Trial} points are defined as grid points which have been updated at least once by Hopf-Lax operator \eqref{eq:Hopf-Lax} but not frozen. The \emph{Accepted} points are these points that have been updated and frozen. This method can solve the fixed point system \eqref{eq:FixedPoint} in a single pass manner.

\begin{algorithm}[h]
\caption{Fast Marching Method}
\label{alg:FM}
\begin{algorithmic}
\renewcommand{\algorithmicrequire}{ \textbf{Input:}}
\Require
Metric $\cF$, stencil map $S$, initial source points collection $\cW$.
\renewcommand{\algorithmicensure }{ \textbf{Output:}}
\Ensure
Minimal action map $\cU$.
\renewcommand{\algorithmicrequire}{ \textbf{Initialization:}}
\Require
\State For each point $\x\in Z$, set $\cU(\x)\gets+\infty$ and $b(\x)\gets\text{\emph{Trial}}$.
\State For each point $\y\in\cW$, set $\cU(\y)\gets0$.\end{algorithmic}
\begin{algorithmic}[1]
\renewcommand{\algorithmicrequire}{ \textbf{While:} at least one grid point is tagged as \emph{Trial}}
\Require
\State Find $\x_{\rm min}$, the \emph{Trial} point which minimizes $\cU$.
\State $b(\x_{\rm min})\gets$ \emph{Accepted}. 
\For{All $\y$ such that $\x_{\rm min} \in S(\y)$ and $b(\y)\neq \text{\emph{Accepted}}$ }
\State Compute $\cU_{\rm new}(\y)$ using~\eqref{eq:Hopf-Lax}.
\State Set $\cU(\y)\gets\min\{\cU_{\rm new}(\y),\,\cU(\y)\}$.
\EndFor
\end{algorithmic}
\end{algorithm}
\end{appendix}

\section*{Appendix B: Elements of Proof of Finsler Elastica Minimal Paths Convergence}  
\begin{appendix}
Let $X\subset\bR^d$ be a compact domain and $\Im$ be the collection of compact convex sets of $\bR^d$. We will later specialize to $d=3$ for the application to the Euler elastica curves. The set $\Im$ is metric space, equipped with the Haussdorff distance which is defined as follows.
\begin{defi}
The Euclidean distance map from a set $A\subseteq\bR^d$ is
\begin{equation}
\label{eq:EuclideanDistance}
d(A,\x):=\inf_{\y\in A}\|\x-\y\|.
\end{equation}	
The Haussdorff distance between sets $A_1$, $A_2\subseteq\bR^d$ is
\begin{equation}
\label{eq:Haussdorff}
H(A_1,A_2):=\sup_{\x\in\bR^d}|d(A_1,\x)-d(A_2,\x)|.	
\end{equation}
\end{defi}

\begin{defi}
Let $\gamma\in C^0([0,1],X)$ be a path and $\cB\in C^0(X,\Im)$ be a collection of controls on $X$. The path $\gamma$ is said $\cB$-admissible iff it is locally Lipschitz and $\gamma^\prime(t)\in \cB(\gamma(t))$ for a.e. $t\in[0,1]$.
\end{defi}

\begin{defi}
A collection of controls  on $X$ is a map $\cB\in C^0(X,\Im)$. Its diameter $diam(\cB)$ and modulus\footnote{%
We actually use a slight variant of the classical modulus of continuity because the latter one, obtained with the hard cutoff function ${\mathfrak C(t)} = 1$ if $t\leq 1$, $0$ otherwise, lacks continuity in general.%
} of continuity $\Xi(\cB,\epsilon)$, are defined by $\Xi(\cB,0)=0$ and for $\epsilon>0$
\begin{align}
\label{eq:Diam}
&diam(\cB):=\sup\{\|\fu\|; \fu\in\cB(\x),\,\forall\x\in X\},\\
\label{eq:Modulus}
&\Xi(\cB,\epsilon):=\sup_{\x,\y\in X} H(\cB(\x),\cB(\y)) {\mathfrak C}(\|\x-\y\|/\epsilon),
\end{align}
where ${\mathfrak C}$ is the continuous cutoff function defined by 
\begin{equation*}
	{\mathfrak C}(t) = 
	\begin{cases}
	1 & \text{if } t \leq 1\\
	2-t & \text{if } 1 \leq t \leq 2\\
	0 & \text{if } t \geq 2.
	\end{cases}
\end{equation*}
\end{defi}
Clearly $\cB\mapsto diam(\cB)$ and $(\cB,\epsilon )\mapsto \Xi(\cB,\epsilon)$ are continuous functions of $\cB\in C^0(X,\Im)$  and $\epsilon\in\bR^+$. In addition $\Xi(\cB,\epsilon)$ is increasing w.r.t. $\epsilon$. Here and below, if $A_1$, $A_2$ are metric spaces, and $A_1$ is compact, then $C^0(A_1,A_2)$ is equipped with the topology of uniform convergence. This applies in particular to the space of paths $C^0([0,1],X)$ and of controls $C^0(X,\Im)$.

\begin{lema}
If $\gamma$ is $\cB$-admissible, then its Lipschitz constant is at most $diam(\cB)$. A necessary and sufficient condition for $\gamma$ to be $\cB$-admissible is: for all $0\leq p\leq q\leq 1$,
\begin{equation}
\label{eq:AdmissibleCondition}
d\left(\cB(\gamma(p)),\frac{\gamma(q)-\gamma(p)}{q-p}  \right)\leq \Xi(\cB,(q-p)\,diam(\cB)),	
\end{equation}

\begin{proof}
Assume that $\gamma$ is $\cB$-admissible. Then for any 	$0\leq p\leq q\leq 1$ one has
\begin{equation}
\|\gamma(p)-\gamma(q)\|\leq \int_{p}^{q}\|\gamma^\prime(\varrho)\|d\varrho\leq |p-q|\,diam(\cB),
\end{equation}
hence $\gamma$ is $diam(\cB)$-Lipschitz as announced. 
Denoting $w_\varrho= p+(q-p)\varrho$, for all $\varrho\in[0,1]$, one obtains
\begin{equation}
\frac{\gamma(q)-\gamma(p)}{q-p} = \int_0^1 \gamma^\prime(w_\varrho) d\varrho.
\end{equation}
Hence by Jensen's inequality and the convexity of $d(\cB(\gamma(p)),\cdot)$, which follows the convexity of $\cB(\gamma(p))$, we obtain
\begin{align}
\label{eq:Ad1}
d\left(\cB(\gamma(p)),\frac{\gamma(q)-\gamma(p)}{q-p}\right)&\leq \int_0^1 d(\cB(\gamma(p)),\gamma^\prime(w_\varrho))d\varrho\\
\label{eq:Ad2}
&\leq \int_0^1 H(\cB(\gamma(p)),\cB(\gamma(w_\varrho)))d\varrho\\
\label{eq:Ad3}
&\leq\Xi(\cB,(q-p)diam(\cB)).	
\end{align}
which establishes half of the announced characterization.  The inequality \eqref{eq:Ad2} follows from the admissibility property $\gamma^\prime(w_\varrho)\in\cB(\gamma(w_\varrho))$ and the definition of the Haussdorff distance \eqref{eq:Haussdorff}. The inequality \eqref{eq:Ad3} follows from the above established Lipschitz regularity of $\gamma$ and the definition of the modulus of continuity \eqref{eq:Modulus}.

Conversely, assume that $\gamma$ and $\cB$ obey \eqref{eq:AdmissibleCondition}. Then 
\begin{equation*}
\left\|\frac{\gamma(q)-\gamma(p)}{q-p} \right\|\leq diam(\cB)+\Xi(\cB,diam(\cB)),	
\end{equation*}
for any $0\leq p\leq q\leq 1$. Thus $\gamma$ is a Lipschitz path as announced, and therefore it is almost everywhere differentiable. If $p\in[0,1]$ is a point of differentiability, then letting $q\to p$ we obtain 
\begin{equation*}
d(\cB(\gamma(p)),\gamma^\prime(p))\leq \Xi(\cB,0)=0,	
\end{equation*} 
which is the announced admissibility property $\gamma^\prime(p)\in\cB(\gamma(p))$.\qed
\end{proof}
\end{lema}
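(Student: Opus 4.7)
The lemma bundles three assertions: that $\cB$-admissibility implies the Lipschitz bound $diam(\cB)$, that admissibility implies \eqref{eq:AdmissibleCondition}, and the converse. I would handle them in this order, since the Lipschitz bound is needed in both subsequent implications. The bound itself is immediate: absolute continuity of a locally Lipschitz $\gamma$ yields $\gamma(q)-\gamma(p)=\int_p^q \gamma'(t)\,dt$, and $\|\gamma'(t)\|\le diam(\cB)$ almost everywhere by admissibility combined with \eqref{eq:Diam}, so integration gives the claim.

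For the forward direction of the equivalence, I would fix $0\le p<q\le 1$ and rewrite the difference quotient as an average of derivatives,
\[
\frac{\gamma(q)-\gamma(p)}{q-p} \;=\; \int_0^1 \gamma'\!\bigl(p + r(q-p)\bigr)\, dr,
\]
then apply Jensen's inequality to the convex function $\fu\mapsto d(\cB(\gamma(p)),\fu)$, whose convexity follows from the classical fact that distance to a convex set is convex together with the convexity of $\cB(\gamma(p))$. This produces
\[
d\!\left(\cB(\gamma(p)),\ \frac{\gamma(q)-\gamma(p)}{q-p}\right) \;\le\; \int_0^1 d\bigl(\cB(\gamma(p)),\, \gamma'(p+r(q-p))\bigr)\, dr.
\]
Inside the integral, admissibility places $\gamma'(p+r(q-p))$ in $\cB(\gamma(p+r(q-p)))$, so by \eqref{eq:Haussdorff} each integrand is at most $H(\cB(\gamma(p)),\cB(\gamma(p+r(q-p))))$. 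Using the Lipschitz bound from the first step together with \eqref{eq:Modulus} and monotonicity of $\Xi(\cB,\cdot)$ then controls this uniformly by $\Xi(\cB,(q-p)\,diam(\cB))$, which is exactly \eqref{eq:AdmissibleCondition}.

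For the converse, assuming \eqref{eq:AdmissibleCondition} for all $p\le q$, the triangle-style inequality $\|\gamma(q)-\gamma(p)\|/(q-p)\le diam(\cB)+\Xi(\cB,diam(\cB))$ is immediate, so $\gamma$ is Lipschitz and hence differentiable almost everywhere by Rademacher's theorem. At any point $p$ of differentiability, letting $q\downarrow p$ in \eqref{eq:AdmissibleCondition} and using $\Xi(\cB,0)=0$ with continuity of $\Xi(\cB,\cdot)$ at zero yields $d(\cB(\gamma(p)),\gamma'(p))=0$; closedness of the convex set $\cB(\gamma(p))$ then forces $\gamma'(p)\in\cB(\gamma(p))$, which is admissibility.

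The main technical subtlety is in the forward direction: Jensen's inequality must be applied with respect to the single fixed convex set $\cB(\gamma(p))$, not to the varying family $\cB(\gamma(\cdot))$, and so the variation of $\cB$ along $\gamma$ must be pulled out separately via the Haussdorff distance before the modulus of continuity is invoked. One also needs to verify carefully that distance to a closed convex set is convex (and $1$-Lipschitz), and to track which properties of $\Xi(\cB,\cdot)$ --- monotonicity at finite arguments, continuity at zero --- are being used at each step.
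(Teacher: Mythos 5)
Your proof is correct and follows essentially the same route as the paper: the Lipschitz bound by integrating $\|\gamma'\|\le diam(\cB)$, the forward implication via writing the difference quotient as an average of derivatives and applying Jensen's inequality to the convex distance function to the fixed set $\cB(\gamma(p))$ before invoking the Hausdorff distance and the modulus of continuity, and the converse via Lipschitz regularity, almost-everywhere differentiability, and letting $q\to p$ with $\Xi(\cB,0)=0$. The subtlety you flag about keeping $\cB(\gamma(p))$ fixed in Jensen's inequality is exactly how the paper structures the chain of inequalities.
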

The characterization \eqref{eq:AdmissibleCondition} is written in terms of continuous functions of the path $\gamma$ and control set $\cB$, hence it is a closed condition, which implies the following two corollaries. We denote $$T\cB(\x):=\{T\fu;\fu\in\cB(\x),\forall \x\in X \}.$$
\begin{coro}
\label{col:Closed}
The set $$\{(\gamma,\cB)\in C^0([0,1],X)\times C^0(X,\Im);
\gamma \text{ is }  \cB\text{-admissible}\}$$ is closed.
\end{coro}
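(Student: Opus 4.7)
The strategy is to invoke the characterization of admissibility established in the preceding lemma, which expresses ``$\gamma$ is $\cB$-admissible'' as an infinite family of scalar inequalities \eqref{eq:AdmissibleCondition} indexed by pairs $(p,q)$ with $0\le p<q\le 1$. Setting
\[
F_{p,q}(\gamma,\cB):=d\Bigl(\cB(\gamma(p)),\tfrac{\gamma(q)-\gamma(p)}{q-p}\Bigr)-\Xi\bigl(\cB,(q-p)\,diam(\cB)\bigr),
\]
the set in the corollary coincides with $\bigcap_{0\le p<q\le 1}\{(\gamma,\cB):F_{p,q}(\gamma,\cB)\le 0\}$. Since an arbitrary intersection of closed sets is closed, it suffices to show that each $F_{p,q}$ is a continuous functional on $C^0([0,1],X)\times C^0(X,\Im)$, with both factors equipped with the uniform topology.

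Verifying this continuity amounts to composing a short chain of known facts. First, the evaluation $\gamma\mapsto\gamma(p)$ is continuous from $C^0([0,1],X)$ to $X$ by definition of the uniform topology, hence so is the secant map $\gamma\mapsto(\gamma(q)-\gamma(p))/(q-p)$. Second, for any uniformly convergent sequence $\cB_n\to\cB$ in $C^0(X,\Im)$ and any $\x_n\to\x$ in $X$, the triangle inequality
\[
H(\cB_n(\x_n),\cB(\x))\le\sup_{\y\in X}H(\cB_n(\y),\cB(\y))+H(\cB(\x_n),\cB(\x))
\]
yields joint continuity of the evaluation map $(\cB,\x)\mapsto\cB(\x)$. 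Third, the Euclidean distance to a set is jointly continuous by virtue of $|d(A_1,\fu_1)-d(A_2,\fu_2)|\le H(A_1,A_2)+\|\fu_1-\fu_2\|$. Finally, both $\cB\mapsto diam(\cB)$ and $(\cB,\epsilon)\mapsto\Xi(\cB,\epsilon)$ are continuous, as the text records immediately after their definitions.

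Chaining these continuous maps shows that $F_{p,q}$ is continuous, hence its sublevel set $\{F_{p,q}\le 0\}$ is closed, and the intersection of these closed sets is exactly the set appearing in the corollary. The only point that deserves care in a full write-up is the joint continuity of $\Xi(\cB,\epsilon)$ in its two arguments; this is precisely the reason the authors replaced the sharp indicator by the softened cutoff $\mathfrak{C}$ in the definition \eqref{eq:Modulus}, as highlighted in the footnote, so the apparent obstacle has in effect been designed away at the level of the definition.
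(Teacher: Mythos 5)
Your proof is correct and follows exactly the route the paper intends: the paper disposes of this corollary in one sentence by noting that the characterization \eqref{eq:AdmissibleCondition} is expressed through continuous functions of $(\gamma,\cB)$ and is therefore a closed condition. You have simply made that sentence rigorous by exhibiting the set as an intersection of sublevel sets of the continuous functionals $F_{p,q}$ and verifying each link in the continuity chain, so no further comment is needed.
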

\begin{coro}
\label{cor:Convergence}
Let $\x_n$, $\y_n$, $\cB_n$ and $T_n$ be converging sequences in $X$, $X$, $C^0(X,\Im)$ and $\bR^+$, with limits $\x_\infty$, $\y_\infty$, $\cB_\infty$, and $T_\infty$ respectively. Let $\gamma_n\in C^0([0,1],X)$ be a $(T_n+1/n)\cB_n$-admissible path with endpoints $\x_n$ and $\y_n$. Then the sequence of paths $(\gamma_n)$ is equip-continuous, and the limit $\gamma_\infty$ of any converging subsequence\, is a $T_\infty\cB_\infty$-admissible path $\gamma\in C^0([0,1],X)$ with endpoints $\x_\infty$ and $\y_\infty$.

\begin{proof}
Note that  the map $(T,\cB)\mapsto T\cB$ is continuous on\, $\bR^+\times C^0(X,\Im)$, hence the controls $\tilde \cB_n:=(T_n+1/n)\cB_n$ converge to $\tilde\cB_\infty:=T_\infty \cB_\infty$. Defining $E:=\sup\{diam(\tilde \cB_n);n>0\}$ which is finite by continuity of $diam(\cdot)$ \eqref{eq:Diam} and convergence of $\tilde \cB_n$, we find that the paths $(\gamma_n)_{n>0}$ are simultaneously $E$-Lipschitz, hence that a subsequence uniformly converges to some path $\gamma_\infty$. The $\tilde\cB_\infty$-admissibility of $\gamma_\infty$ then follows from Corollary \ref{col:Closed}.\qed
\end{proof}
\end{coro}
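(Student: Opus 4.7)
The plan is to assemble the conclusion from three ingredients: continuity of the scaling operation $(T,\cB)\mapsto T\cB$, a uniform Lipschitz bound coming from the previous lemma, and the closedness result of Corollary \ref{col:Closed}. First I would verify that the rescaled controls $\tilde\cB_n := (T_n + 1/n)\cB_n$ converge in $C^0(X,\Im)$ to $\tilde\cB_\infty := T_\infty \cB_\infty$. This follows because multiplying a convex compact set by a positive scalar $T$ simply dilates it, so $H(T_1 A, T_2 A) \leq |T_1 - T_2|\,\sup\{\|\fu\|;\fu\in A\}$ and $H(T A_1, T A_2) = T\,H(A_1,A_2)$, giving joint continuity once combined with the uniform convergence $\cB_n \to \cB_\infty$ over $X$.

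Next I would derive the uniform Lipschitz bound. By the preceding lemma, each $\gamma_n$ is $diam(\tilde\cB_n)$-Lipschitz. Since $diam(\cdot)$ is continuous on $C^0(X,\Im)$ and the sequence $\tilde\cB_n$ converges, the quantity $E := \sup_n diam(\tilde\cB_n)$ is finite, so every $\gamma_n$ is $E$-Lipschitz. In particular, the family $(\gamma_n)$ is equi-continuous and takes values in the compact set $X$, so the Arzelà-Ascoli theorem yields the existence of uniformly converging subsequences. This handles the equi-continuity claim and guarantees that talking about "the limit of any converging subsequence" is non-vacuous.

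Let $\gamma_{n_k} \to \gamma_\infty$ uniformly on $[0,1]$. Evaluating at $t=0$ and $t=1$, the pointwise limits give $\gamma_\infty(0) = \lim \x_{n_k} = \x_\infty$ and $\gamma_\infty(1) = \y_\infty$, which settles the endpoint condition. Finally, admissibility: the pair $(\gamma_{n_k}, \tilde\cB_{n_k})$ lies in the admissibility set and converges in $C^0([0,1],X) \times C^0(X,\Im)$ to $(\gamma_\infty, \tilde\cB_\infty)$; by Corollary \ref{col:Closed} the limit pair is still admissible, i.e.\ $\gamma_\infty$ is $T_\infty \cB_\infty$-admissible as required.

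The main obstacle I anticipate is being clean about the joint continuity of the scaling operation $(T,\cB) \mapsto T\cB$ in the product topology — one needs both that $T_n \to T_\infty$ in $\bR^+$ and $\cB_n \to \cB_\infty$ uniformly in Hausdorff metric combine without loss, and that the $+1/n$ correction vanishes in this same topology. Once that continuity is in hand, the rest of the argument is a straightforward application of Arzelà-Ascoli together with the closedness statement already established, with no further analytical difficulty.
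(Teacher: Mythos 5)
Your proposal is correct and follows essentially the same route as the paper's proof: convergence of the rescaled controls $\tilde\cB_n=(T_n+1/n)\cB_n$ via continuity of $(T,\cB)\mapsto T\cB$, a uniform Lipschitz bound $E=\sup_n diam(\tilde\cB_n)$ yielding equi-continuity and Arzel\`a--Ascoli subsequences, and closedness of the admissibility set (Corollary \ref{col:Closed}) to conclude. The extra details you supply (the Hausdorff estimates for the scaling map and the explicit endpoint evaluation) are correct elaborations of steps the paper leaves implicit.
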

We next introduce the minimum-time optimal control problems. The minimum of \eqref{eq:MimimumTime} is attained by Corollary \ref{cor:Convergence}, which also immediately implies the Corollary \ref{coro:LowerSemi}.
\begin{defi}
For all $\x$, $\y\in X$, and $\cB\in C^0(X,\Im)$, we let
\begin{align}
\label{eq:MimimumTime}
\cT_\cB(\x,\y):=min\{ &T>0;\exists\gamma\in C^0([0,1],X),\nonumber\\
&\gamma(0)=\x,\gamma(1)=\y,\gamma \text{ is } T\cB\text{-admissible}\}.	
\end{align} 	
\end{defi}

\begin{coro}
\label{coro:LowerSemi}
The map $(\x,\y,\cB)\mapsto\cT_\cB(\x,\y)$ is lower semi-continuous on $X\times X\times\cC^0(X,\Im)$. 
In other words, whenever $(\x_n,\y_n,\cB_n)\to(\x_\infty,\y_\infty,\cB_\infty)$ as $n\to\infty$ one has
\begin{equation}
\label{ea:liminfT}
	\lim \inf \cT_{\cB_n}(\x_n,\y_n)\geq \cT_{\cB_\infty}(\x_\infty,\y_\infty).
\end{equation}
\end{coro}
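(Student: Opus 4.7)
The plan is to extract a sequence of admissible paths that nearly achieve the minimum times $\cT_{\cB_n}(\x_n,\y_n)$, pass to a limit via Corollary~\ref{cor:Convergence}, and recognize that limit as a competitor for $\cT_{\cB_\infty}(\x_\infty,\y_\infty)$. This is essentially the standard lower semi-continuity argument in minimum-time optimal control, tailored to the topology on controls provided by the Haussdorff distance.

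First, I would pass to a subsequence (without relabelling) along which $T_n := \cT_{\cB_n}(\x_n,\y_n)$ actually converges to $L := \liminf_n \cT_{\cB_n}(\x_n,\y_n) \in [0,+\infty]$; if $L = +\infty$ there is nothing to prove, so assume $L$ is finite. Since the minimum in~\eqref{eq:MimimumTime} is attained (itself a consequence of Corollary~\ref{cor:Convergence} applied with constant sequences $\x_n\equiv\x_n$, $\y_n\equiv\y_n$, $\cB_n\equiv\cB_n$), I obtain paths $\gamma_n \in C^0([0,1],X)$ with endpoints $\x_n,\y_n$ that are $T_n\cB_n$-admissible. Because the control sets $\cB(\x)$ are convex and contain the origin (in the intended setting; for the Finsler elastica application they are precisely the Tissot indicatrix, i.e.\ unit balls of a norm), scaling is monotone: $T\cB(\x)\subseteq T'\cB(\x)$ whenever $0<T\leq T'$. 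Consequently $\gamma_n$ is also $(T_n + 1/n)\cB_n$-admissible, which is the exact hypothesis needed to feed Corollary~\ref{cor:Convergence}.

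Second, I would invoke Corollary~\ref{cor:Convergence} with the sequences $\x_n\to\x_\infty$, $\y_n\to\y_\infty$, $\cB_n\to\cB_\infty$ and $T_n\to L$. It tells me first that $(\gamma_n)$ is equi-Lipschitz, and hence (Arzel\`a--Ascoli) admits a uniformly convergent subsequence whose limit $\gamma_\infty \in C^0([0,1],X)$ is $L\cB_\infty$-admissible with the correct endpoints $\x_\infty$ and $\y_\infty$. Applying the very definition~\eqref{eq:MimimumTime} of $\cT_{\cB_\infty}$ to this admissible competitor yields $\cT_{\cB_\infty}(\x_\infty,\y_\infty) \leq L = \liminf_n \cT_{\cB_n}(\x_n,\y_n)$, which is exactly~\eqref{ea:liminfT}.

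The main obstacle, and the only non-routine point, is the "slack step": the path $\gamma_n$ constructed from the definition of $\cT_{\cB_n}$ is $T_n\cB_n$-admissible, but Corollary~\ref{cor:Convergence} is stated for $(T_n+1/n)\cB_n$-admissible paths. Bridging this gap requires the monotone-scaling property of the controls, which rests on the implicit (and essential) assumption that $0 \in \cB(\x)$ for each $\x$; absent this, one could not enlarge the feasible set by increasing the time budget, and the convergence machinery of Corollary~\ref{cor:Convergence} would not apply directly. Once this structural remark is in place, the remainder of the proof is a routine compactness argument.
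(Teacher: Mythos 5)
Your proof is correct and follows essentially the same route as the paper's: pass to a subsequence along which $T_n=\cT_{\cB_n}(\x_n,\y_n)$ converges to the liminf, take optimal admissible paths with the prescribed endpoints, and invoke Corollary~\ref{cor:Convergence} to extract a uniformly convergent subsequence whose limit is an admissible competitor for $\cT_{\cB_\infty}(\x_\infty,\y_\infty)$. Your extra care over the $+1/n$ slack (monotone scaling of the control sets, which needs $0\in\cB(\x)$) is a legitimate refinement that the paper leaves implicit; it holds in the intended application, where the controls are unit balls of a Finsler metric.
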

\begin{proof}
For each $n>0$ let $T_n = \cT_{\cB_n}(\x_n,\y_n)$ and let $T_\infty = \lim \inf T_n$ as $n\to \infty$. 
Up to extracting a subsequence, we can assume that $T_\infty = \lim T_n$ as $n\to \infty$. 
Denoting by $\gamma_n$ a path as in Corollary \ref{cor:Convergence} for all $n \geq 0$, we find that there is a converging subsequence which limit $\gamma_\infty$ is $T_\infty \cB_\infty$ admissible and obeys $\gamma_\infty(0)=\x_\infty$ and $\gamma_\infty(1)=\y_\infty$. Thus $T_\infty \geq \cT_{\cB_\infty}(\x_\infty,\y_\infty)$ as announced.\qed
\end{proof}

\begin{defi}
Let 	$\cB_1$, $\cB_2\in C^0(X,\Im)$. These collections of controls are said included $\cB_1\subseteq \cB_2$ iff $\cB_1(\x)\subseteq\cB_2(\x)$ for all $\x\in X$. 
\end{defi}
The property $\cB_1\subseteq \cB_2$ clearly implies, for all $\x$, $\y\in X$
\begin{equation}
\label{eq:grterT}
\cT_{\cB_1}(\x,\y)\geq \cT_{\cB_2}(\x,\y).
\end{equation}

\begin{coro}
Assume that one has a converging sequence of controls $\cB_n\to\cB_\infty$ obeying the inclusions $\cB_n\supseteq \cB_\infty$ for all $n\geq 0$.
Then 
\begin{equation}
\label{eq:DistCV}
	\lim  \cT_{\cB_n}(\x,\y)= \cT_{\cB_\infty}(\x,\y)
\end{equation}
for all $\x,\y \in X$.
Let $T_n := \cT_{\cB_n}(\x,\y)$ for all $n\in {\mathbb N} \cup \{\infty\}$, and let $\gamma_n$ be an arbitrary $(T_n+1/n)\cB_n$-admissible path from $\x$ to $\y$. If there exists a unique $T_\infty \cB_\infty$-admissible path $\gamma_\infty$ from $\x$ to $\y$, then $\gamma_n \to \gamma_\infty$ as $n \to \infty$.
\end{coro}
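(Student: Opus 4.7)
The proof splits naturally into two parts: first establishing the convergence $T_n\to T_\infty$ of the minimum times, then upgrading this to uniform convergence of the paths under the uniqueness hypothesis.

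For the convergence $\lim T_n = T_\infty$, my plan is to squeeze $T_\infty$ between $\limsup T_n$ and $\liminf T_n$. The upper bound $\limsup T_n \leq T_\infty$ is immediate from the inclusion hypothesis $\cB_n \supseteq \cB_\infty$ via the monotonicity inequality \eqref{eq:grterT}: indeed $T_n = \cT_{\cB_n}(\x,\y) \leq \cT_{\cB_\infty}(\x,\y) = T_\infty$ for every $n$. The matching lower bound $\liminf T_n \geq T_\infty$ is exactly the lower semi-continuity statement \eqref{ea:liminfT} of Corollary \ref{coro:LowerSemi}, applied to the constant sequences $\x_n = \x$ and $\y_n = \y$ together with $\cB_n \to \cB_\infty$.

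For the convergence of paths, I would argue by a standard subsequence-extraction / uniqueness argument. Fix the sequence $(\gamma_n)$ where $\gamma_n$ is $(T_n + 1/n)\cB_n$-admissible joining $\x$ to $\y$. Corollary \ref{cor:Convergence}, applied with $\x_n = \x$, $\y_n = \y$, and the already-established convergence $T_n \to T_\infty$ together with $\cB_n \to \cB_\infty$, tells me that $(\gamma_n)$ is equi-Lipschitz and that the limit of any uniformly convergent subsequence is a $T_\infty \cB_\infty$-admissible path from $\x$ to $\y$. By the Arzel\`a--Ascoli theorem applied in $C^0([0,1],X)$, every subsequence of $(\gamma_n)$ has a further subsequence which converges uniformly. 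The uniqueness assumption forces the limit of any such subsubsequence to equal $\gamma_\infty$. A classical argument then upgrades this to uniform convergence of the full sequence: if $\gamma_n \not\to \gamma_\infty$ there would exist $\epsilon > 0$ and a subsequence bounded away from $\gamma_\infty$ by $\epsilon$, yet that subsequence would still admit a subsubsequence converging to $\gamma_\infty$, a contradiction.

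The main obstacle, such as it is, lies in being careful about the perturbation $1/n$ in the scaling $(T_n + 1/n)\cB_n$: it is needed so that Corollary \ref{cor:Convergence} applies verbatim (the corollary is phrased with a strict excess $T_n + 1/n$ over the optimal time, ensuring existence of admissible paths even when the minimum in \eqref{eq:MimimumTime} may be only approximately attained along the sequence), and the limiting control set is nevertheless $T_\infty \cB_\infty$ because $T_n + 1/n \to T_\infty$ and $(T,\cB) \mapsto T\cB$ is continuous on $\bR^+ \times C^0(X,\Im)$. Beyond that bookkeeping, the proof is essentially a repackaging of Corollary \ref{coro:LowerSemi}, inequality \eqref{eq:grterT}, and Corollary \ref{cor:Convergence}, with the uniqueness assumption serving only to pin down the limit of subsequences.
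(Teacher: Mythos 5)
Your proposal is correct and follows essentially the same route as the paper: the identity $\lim \cT_{\cB_n}(\x,\y)=\cT_{\cB_\infty}(\x,\y)$ is squeezed between the monotonicity inequality \eqref{eq:grterT} and the lower semi-continuity \eqref{ea:liminfT}, and the path convergence is obtained from the equi-continuity and subsequence-limit characterization of Corollary \ref{cor:Convergence} combined with the uniqueness hypothesis. Your explicit Arzel\`a--Ascoli and subsequence-of-subsequence bookkeeping merely spells out what the paper leaves implicit.
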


\begin{proof}
The identity \eqref{eq:DistCV} follows from inequalities \eqref{ea:liminfT} and \eqref{eq:grterT}. 
By Corollary \ref{cor:Convergence} the sequence of paths $\gamma_n$ is equi-continuous, and any converging subsequence tends to a $T_* \cB_\infty$ admissible path $\gamma_*$ from $\x$ to $\y$, with $T_* := \lim T_n$. Since $T_*=T_\infty$ and by uniqueness we have $\gamma_*=\gamma_\infty$, hence $\gamma_n \to \gamma_\infty$ as announced.
\end{proof}

\noindent\textbf{Application to Finsler elastica geodesics convergence problem.}
\newline
Consider an orientation-lifted Finsler metric $\cF:X\times \bR^3\to \bR^+$  where $X:=\bar\Omega \subset \bR^3$.  For any $\bx\in X$, let $$\cB(\bx):=\{\bu\in\bR^3; \cF(\bx,\bu)\leq 1\}$$ be the unit ball of $\cF$. $\cB(\bx)$ is compact and convex, due to the positivity, continuity and convexity of the metric $\cF$ and  the map $\cB:X\to\Im$ is  continuous. Furthermore, using the homogeneity of the metric, one obtains for all $\bx$, $\by\in X$:
\begin{equation*}
\cL^*_\cF(\bx,\by)=\cT_\cB(\bx,\by),
\end{equation*}
where $\cL^*(\bx,\by)$ is the minimal curve length between $\bx$ and $\by$ with respect to metric $\cF$.

In the case of Finsler elastica problem,  one has  $\cB_\infty(\bx):=B^\infty_{\bx} $ and $\cB_\lambda(\bx):=B_{\bx}^\lambda$, $\forall \bx\in X$, where $B^\infty_{\bx} $ and $B_{\bx}^\lambda$ are defined in equations \eqref{eq:Ball_infty} and \eqref{eq:Ball_lambda} respectively. The Finsler elastica metrics $\cF^\lambda$ on $X$ pointwisely tend to the metric $\cF^\infty$ as $\lambda\to\infty$. Fortunately, the associated control sets $\cB_\lambda(\bx) \to\cB_\infty(\bx)$ uniformly in $C^0(\Omega,\Im)$, as can be seen from \eqref{eq:Charac_lambda}. Hence one has
\begin{align*}
\liminf\cL^*_{\cF^\lambda}(\bx,\by)&=\liminf \cT_{\cB_\lambda}(\bx,\by)\\
&\geq \cT_{\cB_\infty}(\bx,\by)=\cL^*_{\cF^\infty}(\bx,\by),
\end{align*}  
as $\lambda\to\infty$ for all $\bx,\by\in X$. To show that equality holds, it suffices to prove that sequence $\cB_\lambda$ obeys $\cB_{\lambda}\supseteq \cB_\infty$, equivalently to prove that $\cF^{\lambda}(\bx,\bu)\leq\cF^\infty(\bx,\bu)$ for all $\bx\in X$ and any vector $\bu\in\bR^3$. Indeed,  let $\bx=(\x,\theta)\in X$, and $\bu=(\fu,\nu)\in\bR^2\times \bR$:
\begin{align}
\cF^\lambda(\bx,\bu)&=\sqrt{\lambda^2\|\fu\|^2+2\lambda|\nu|^2}-(\lambda-1)\langle\fu, \vt\rangle\nonumber\\
&=\lambda \|\fu\| \left( -1+\sqrt{1+\frac{2|\nu|^2}{\lambda\|\fu\|^2}} \right)\nonumber\\
&\qquad+\lambda \|\fu\| -(\lambda-1)\langle \fu,\vt \rangle.\nonumber\\
&=\|\fu\|+\frac{|\nu|^2}{\|\fu\|}\left( \frac{2}{1+\sqrt{1+\frac{2|\nu|^2}{\lambda\|\fu\|^2}}} \right)\nonumber\\
&\qquad+(\lambda-1)(\|\fu\|-\langle \fu,\vt \rangle).\label{eq:denominator}\\
&\leq \cF^\infty(\bx,\bu).\nonumber
\end{align}
The last inequality holds because the denominator $1+\sqrt{1+\frac{2|\nu|^2}{\lambda\|\fu\|^2}}$ in \eqref{eq:denominator} is greater than 2, and $\|\fu\|\geq \langle \fu,\vt \rangle$ for any vector $\fu$ and any orientation $\theta$.

By Corollary \ref{cor:Convergence}, minimal paths $\cC^\lambda$ with endpoints $\bx$ and $\by$  for geodesic distance $\cL^*_{\cF^\lambda}(\bx,\by) $ converge as $\lambda\to\infty$ to a minimal path $\cC^\infty$ for $\cL^*_{\cF^\infty}(\bx,\by) $. We finally point out that $\cL^*_{\cF^\infty}(\bx,\by)<\infty $ for all $\bx$, $\by$ in the interior of $X$, provided this interior is connected, due to a classical  controllability result for the Euler elastica problem.

\end{appendix}

\bibliographystyle{spbasic}      
\bibliography{IJCV-Finsler.bib}   

\begin{thebibliography}{51}
\providecommand{\natexlab}[1]{#1}
\providecommand{\url}[1]{{#1}}
\providecommand{\urlprefix}{URL }
\expandafter\ifx\csname urlstyle\endcsname\relax
  \providecommand{\doi}[1]{DOI~\discretionary{}{}{}#1}\else
  \providecommand{\doi}{DOI~\discretionary{}{}{}\begingroup
  \urlstyle{rm}\Url}\fi
\providecommand{\eprint}[2][]{\url{#2}}

\bibitem[{Alpert et~al(2012)Alpert, Galun, Brandt, and Basri}]{alpert2012image}
Alpert S, Galun M, Brandt A, Basri R (2012) Image segmentation by probabilistic
  bottom-up aggregation and cue integration. IEEE Transactions on Pattern
  Analysis and Machine Intelligence 34(2):315--327

\bibitem[{Appia and Yezzi(2011)}]{appia2011active}
Appia V, Yezzi A (2011) {Active geodesics: Region-based active contour
  segmentation with a global edge-based constraint}. In: Proceedings of IEEE
  International Conference on Computer Vision (ICCV 2011), pp 1975--1980

\bibitem[{Appleton and Talbot(2005)}]{appleton2005globally}
Appleton B, Talbot H (2005) Globally optimal geodesic active contours. Journal
  of Mathematical Imaging and Vision 23(1):67--86

\bibitem[{Arbelaez et~al(2011)Arbelaez, Maire, Fowlkes, and
  Malik}]{arbelaez2011contour}
Arbelaez P, Maire M, Fowlkes C, Malik J (2011) {Contour Detection and
  Hierarchical Image Segmentation}. IEEE Transactions on Pattern Analysis and
  Machine Intelligence 33(5):898--916

\bibitem[{Bekkers et~al(2015)Bekkers, Duits, Mashtakov, and
  Sanguinetti}]{bekkers2015pde}
Bekkers EJ, Duits R, Mashtakov A, Sanguinetti GR (2015) {A PDE Approach to
  Data-driven Sub-Riemannian Geodesics in SE (2)}. SIAM Journal on Imaging
  Sciences 8(4):2740--2770

\bibitem[{Benmansour and Cohen(2009)}]{benmansour2009fast}
Benmansour F, Cohen LD (2009) Fast object segmentation by growing minimal paths
  from a single point on 2{D} or 3{D} images. Journal of Mathematical Imaging
  and Vision 33(2):209--221

\bibitem[{Benmansour and Cohen(2011)}]{benmansour2011tubular}
Benmansour F, Cohen LD (2011) Tubular {S}tructure {S}egmentation {B}ased on
  {M}inimal {P}ath {M}ethod and {A}nisotropic {E}nhancement. International
  Journal of Computer Vision 92(2):192--210

\bibitem[{Bornemann and Rasch(2006)}]{Bornemann:2006hc}
Bornemann F, Rasch C (2006) {Finite-element Discretization of Static
  Hamilton-Jacobi Equations based on a Local Variational Principle}. Computing
  and Visualization in Science 9(2):57--69

\bibitem[{Bougleux et~al(2008)Bougleux, Peyr{\'e}, and
  Cohen}]{bougleux2008anisotropic}
Bougleux S, Peyr{\'e} G, Cohen L (2008) Anisotropic {G}eodesics for
  {P}erceptual {G}rouping and {D}omain {M}eshing. In: Proceedings of European
  Conference on Computer Vision (ECCV 2008). Springer Berlin Heidelberg, pp
  129--142

\bibitem[{Canny(1986)}]{canny1986computational}
Canny J (1986) A computational approach to edge detection. IEEE Transactions on
  Pattern Analysis and Machine Intelligence (6):679--698

\bibitem[{Caselles et~al(1993)Caselles, Catt{\'e}, Coll, and
  Dibos}]{caselles1993geometric}
Caselles V, Catt{\'e} F, Coll T, Dibos F (1993) A geometric model for active
  contours in image processing. Numerische Mathematik 66(1):1--31

\bibitem[{Caselles et~al(1997)Caselles, Kimmel, and
  Sapiro}]{caselles1997geodesic}
Caselles V, Kimmel R, Sapiro G (1997) Geodesic active contours. International
  Journal of Computer Vision 22(1):61--79

\bibitem[{Chen et~al(2015)Chen, Mirebeau, and Cohen}]{chen2015global}
Chen D, Mirebeau JM, Cohen LD (2015) {Global Minimum for Curvature Penalized
  Minimal Path Method}. In: Proceedings of the British Machine Vision
  Conference (BMVC 2015), pp 86.1--86.12

\bibitem[{Cohen(2001)}]{cohen2001multiple}
Cohen L (2001) Multiple contour finding and perceptual grouping using minimal
  paths. Journal of Mathematical Imaging and Vision 14(3):225--236

\bibitem[{Cohen(1991)}]{cohen1991active}
Cohen LD (1991) On active contour models and balloons. CVGIP:~Image
  understanding 53(2):211--218

\bibitem[{Cohen and Kimmel(1997)}]{cohen1997global}
Cohen LD, Kimmel R (1997) {G}lobal minimum for active contour models:~{A}
  minimal path approach. International Journal of Computer Vision 24(1):57--78

\bibitem[{Deschamps and Cohen(2001)}]{deschamps2001fast}
Deschamps T, Cohen LD (2001) {Fast extraction of minimal paths in 3D images and
  applications to virtual endoscopy}. Medical image analysis 5(4):281--299

\bibitem[{Di~Zenzo(1986)}]{di1986note}
Di~Zenzo S (1986) {A note on the Gradient of a Multi-Image}. Computer vision,
  graphics, and image processing 33(1):116--125

\bibitem[{Dijkstra(1959)}]{dijkstra1959note}
Dijkstra EW (1959) A note on two problems in connexion with graphs. Numerische
  mathematik 1(1):269--271

\bibitem[{El-Zehiry and Grady(2010)}]{elzehiry2010fast}
El-Zehiry NY, Grady L (2010) Fast global optimization of curvature. In:
  Proceedings of IEEE Conference on Computer Vision and Pattern Recognition
  (CVPR 2010), pp 3257--3264

\bibitem[{Jacob and Unser(2004)}]{jacob2004design}
Jacob M, Unser M (2004) Design of steerable filters for feature detection using
  canny-like criteria. IEEE Transactions on Pattern Analysis and Machine
  Intelligence 26(8):1007--1019

\bibitem[{Jbabdi et~al(2008)Jbabdi, Bellec, Toro, Daunizeau,
  P{\'e}l{\'e}grini-Issac, and Benali}]{jbabdi2008accurate}
Jbabdi S, Bellec P, Toro R, Daunizeau J, P{\'e}l{\'e}grini-Issac M, Benali H
  (2008) Accurate anisotropic fast marching for diffusion-based geodesic
  tractography. International Journal of Biomedical Imaging 2008:2

\bibitem[{Kass et~al(1988)Kass, Witkin, and Terzopoulos}]{kass1988snakes}
Kass M, Witkin A, Terzopoulos D (1988) {Snakes: Active contour models}.
  International Journal of Computer Vision 1(4):321--331

\bibitem[{Kaul et~al(2012)Kaul, Yezzi, and Tsai}]{kaul2012detecting}
Kaul V, Yezzi A, Tsai Y (2012) Detecting curves with unknown endpoints and
  arbitrary topology using minimal paths. IEEE Transactions on Pattern Analysis
  and Machine Intelligence 34(10):1952--1965

\bibitem[{Kimmel and Sethian(2001)}]{kimmel2001optimal}
Kimmel R, Sethian J (2001) Optimal algorithm for shape from shading and path
  planning. Journal of Mathematical Imaging and Vision 14(3):237--244

\bibitem[{Law and Chung(2008)}]{law2008three}
Law MWK, Chung ACS (2008) {Three Dimensional Curvilinear Structure Detection
  Using Optimally Oriented Flux}. In: Proceedings of European Conference on
  Computer Vision (ECCV 2008). Springer Berlin Heidelberg, pp 368--382

\bibitem[{Li and Yezzi(2007)}]{li2007vessels}
Li H, Yezzi A (2007) Vessels as 4-{D} curves:~{G}lobal minimal 4-{D} paths to
  extract 3-{D} tubular surfaces and centrelines. IEEE Transactions on Medical
  Imaging 26(9):1213--1223

\bibitem[{Lions(1982)}]{lions1982generalized}
Lions PL (1982) {Generalized solutions of Hamilton-Jacobi equations}, vol~69.
  Pitman Publishing

\bibitem[{Malladi et~al(1994)Malladi, Sethian, and
  Vemuri}]{malladi1994evolutionary}
Malladi R, Sethian JA, Vemuri BC (1994) Evolutionary fronts for
  topology-independent shape modeling and recovery. In: Proceedings of European
  Conference on Computer Vision (ECCV 1994). Springer Berlin Heidelberg, pp
  1--13

\bibitem[{Malladi et~al(1995)Malladi, Sethian, and Vemuri}]{malladi1995shape}
Malladi R, Sethian J, Vemuri BC (1995) Shape modeling with front propagation: A
  level set approach. IEEE Transactions on Pattern Analysis and Machine
  Intelligence 17(2):158--175

\bibitem[{Mille et~al(2014)Mille, Bougleux, and Cohen}]{mille2014combination}
Mille J, Bougleux S, Cohen L (2014) Combination of piecewise-geodesic paths for
  interactive segmentation. International Journal of Computer Vision
  112(1):1--22

\bibitem[{Mirebeau(2014{\natexlab{a}})}]{mirebeau2014anisotropic}
Mirebeau JM (2014{\natexlab{a}}) {Anisotropic Fast-Marching on Cartesian Grids
  Using Lattice Basis Reduction}. SIAM Journal on Numerical Analysis
  52(4):1573--1599

\bibitem[{Mirebeau(2014{\natexlab{b}})}]{mirebeau2014efficient}
Mirebeau JM (2014{\natexlab{b}}) Efficient fast marching with {F}insler
  metrics. Numerische Mathematik 126(3):515--557

\bibitem[{Mumford(1994)}]{mumford1994elastica}
Mumford D (1994) Elastica and computer vision. in Algebraic Geometry and Its
  Applications (C.~L.~Bajaj,~Ed.), Springer-Verlag, New York,~pages~491-506

\bibitem[{Nitzberg and Mumford(1990)}]{nitzberg19902}
Nitzberg M, Mumford D (1990) {The 2.1-D sketch}. In: Proceedings of IEEE
  International Conference on Computer Vision (ICCV 1990), pp 138--144

\bibitem[{Osher and Sethian(1988)}]{osher1988fronts}
Osher S, Sethian JA (1988) {Fronts propagating with curvature-dependent speed:
  algorithms based on Hamilton-Jacobi formulations}. Journal of computational
  physics 79(1):12--49

\bibitem[{P\'echaud et~al(2009)P\'echaud, Keriven, and
  Peyr\'e}]{pechaud2009extraction}
P\'echaud M, Keriven R, Peyr\'e G (2009) Extraction of tubular structures over
  an orientation domain. In: Proceedings of IEEE Conference on Computer Vision
  and Pattern Recognition (CVPR 2009), pp 336--342

\bibitem[{Petitot(2003)}]{Petitot:2003im}
Petitot J (2003) {The neurogeometry of pinwheels as a sub-Riemannian contact
  structure}. Journal of Physiology-Paris 97(2-3):265--309

\bibitem[{Peyr\'e et~al(2010)Peyr\'e, P\'echaud, Keriven, and
  Cohen}]{peyre2010geodesic}
Peyr\'e G, P\'echaud M, Keriven R, Cohen LD (2010) {Geodesic Methods in
  Computer Vision and Graphics}. {Foundations and Trends in Computer Graphics
  and Vision} pp 197--397

\bibitem[{Rouchdy and Cohen(2013)}]{rouchdy2013geodesic}
Rouchdy Y, Cohen LD (2013) {Geodesic voting for the automatic extraction of
  tree structures. Methods and applications}. Computer Vision and Image
  Understanding 117(10):1453--1467

\bibitem[{Sanguinetti et~al(2015)Sanguinetti, Bekkers, Duits, Janssen,
  Mashtakov, and Mirebeau}]{sanguinetti2015sub}
Sanguinetti G, Bekkers E, Duits R, Janssen MH, Mashtakov A, Mirebeau JM (2015)
  {Sub-Riemannian Fast Marching in SE (2)}. In: Iberoamerican Congress on
  Pattern Recognition, Springer, pp 366--374

\bibitem[{Schoenemann et~al(2011)Schoenemann, Masnou, and
  Cremers}]{schoenemann2011elasticratio}
Schoenemann T, Masnou S, Cremers D (2011) {The Elastic Ratio: Introducing
  Curvature Into Ratio-Based Image Segmentation}. IEEE Transactions on Image
  Processing 20(9):2565--2581

\bibitem[{Schoenemann et~al(2012)Schoenemann, Kahl, Masnou, and
  Cremers}]{schoenemann2012linear}
Schoenemann T, Kahl F, Masnou S, Cremers D (2012) A linear framework for
  region-based image segmentation and inpainting involving curvature
  penalization. International Journal of Computer Vision 99(1):53--68

\bibitem[{Sethian(1999)}]{sethian1999fast}
Sethian JA (1999) Fast marching methods. SIAM Review 41(2):199--235

\bibitem[{Sethian and Vladimirsky(2003)}]{sethian2003ordered}
Sethian JA, Vladimirsky A (2003) {Ordered upwind methods for static
  Hamilton--Jacobi equations: Theory and algorithms}. SIAM Journal on Numerical
  Analysis 41(1):325--363

\bibitem[{Shen et~al(2003)Shen, Kang, and Chan}]{shen2003euler}
Shen J, Kang SH, Chan TF (2003) Euler's elastica and curvature-based
  inpainting. SIAM Journal on Applied Mathematics 63(2):564--592

\bibitem[{Tai et~al(2011)Tai, Hahn, and Chung}]{tai2011fast}
Tai XC, Hahn J, Chung GJ (2011) {A fast algorithm for Euler's elastica model
  using augmented Lagrangian method}. SIAM Journal on Imaging Sciences
  4(1):313--344

\bibitem[{Tsitsiklis(1995)}]{tsitsiklis1995efficient}
Tsitsiklis JN (1995) Efficient algorithms for globally optimal trajectories.
  IEEE Transactions on Automatic Control 40(9):1528--1538

\bibitem[{Ulen et~al(2015)Ulen, Strandmark, and Kahl}]{ulen2015shortest}
Ulen J, Strandmark P, Kahl F (2015) {Shortest Paths with Higher-Order
  Regularization}. IEEE Transactions on Pattern Analysis and Machine
  Intelligence 37(12):2588--2600

\bibitem[{Yezzi et~al(1997)Yezzi, Kichenassamy, Kumar, Olver, and
  Tannenbaum}]{yezzi1997geometric}
Yezzi A, Kichenassamy S, Kumar A, Olver P, Tannenbaum A (1997) A geometric
  snake model for segmentation of medical imagery. IEEE Transactions on Medical
  Imaging 16(2):199--209

\bibitem[{Zhu et~al(2013)Zhu, Tai, and Chan}]{zhu2013image}
Zhu W, Tai XC, Chan T (2013) {Image segmentation using Euler's elastica as the
  regularization}. Journal of scientific computing 57(2):414--438

\end{thebibliography}

%
%

\end{document}